\newtheorem{definition}{Definition}
\newtheorem{claim}{Claim}
\newtheorem{note}{Note}
\newtheorem{lemma}{Lemma}
\newtheorem{corollary}{Corollary}
\newtheorem{example}{Example}
\newtheorem{theorem}{Theorem}
\def\sfootnote#1{\ifx\protect\@typeset@protect
    \Footnotemark{*}\FootnotetextB{*}{#1}%
  \else
    \relax
  \fi
}
\tikzset{
  treenode/.style = {align=center, inner sep=0pt, text centered,
    font=\sffamily},
  arn_n/.style = {treenode, circle, white, font=\sffamily\bfseries, draw=black,
    fill=black, text width=1.5em},
  arn_r/.style = {treenode, circle, black, draw=black, 
    text width=1.5em, very thick},
  arn_x/.style = {treenode, rectangle, draw=black,
    minimum width=0.5em, minimum height=0.5em}
}
\begin{document}

\title{Attacking Power Indices by Manipulating Player Reliability\\
(extended abstract)}  

\author{Gabriel Istrate\thanks{corresponding author. Email: gabrielistrate@acm.org}, Cosmin Bonchi\c{s}, Alin Br\^{i}ndu\c{s}escu \\
Dept. of Computer Science, West University of Timi\c{s}oara, \\
Bd. V. P\^{a}rvan 4, 
Timi\c{s}oara, 300223, Romania.}

\maketitle
\begin{abstract}  
We investigate the manipulation of power indices in TU-cooperative games by stimulating (subject to a budget constraint) changes in the propensity of other players to participate to the game. 

We display several algorithms that show that the problem is often tractable for so-called network centrality games and influence attribution games, as well as an example when optimal manipulation is intractable, even though computing power indices is feasible. 
\end{abstract}

\textbf{Keywords:} coalitional games, reliability extension,Shapley value, manipulation.

\section{Introduction}

\textit{Control} is a fundamental but difficult issue in multi-agent systems. A multi-agent society may be difficult to control due to the concurrence of several factors, that may interact and drive the dynamics in complex, unpredictable ways. Some of these factors could include uncertainty about agent involvement \cite{reliability-games},  coalition formation \cite{chalkiadakis2007coalition},  the rules  \cite{elkind2012manipulation}, the environment \cite{yokoo2005coalitional}, about rewards \cite{ketchpel1994forming}, the presence (or lack) of synergies between players \cite{procaccia2014structure}, etc.
 
 A common type of control is \textit{manipulation}\footnote{We  use the word with its wider, commonsense  meaning, rather than the specialized one from voting theory \cite{barriers-manipulation-voting}. Our usage encompasses both strategic behavior by an agent or coalition (voting theory "manipulation") and
interventions by a chair or outside agent (such as control and bribery in voting \cite{control-bribery-voting}). We assume, however, that all such interventions are costly.}, which often aims to change the power (index)  of a given player by means of interventions in the settings or the dynamics of the agent society. Many types of manipulation have been considered in the literature, often in a computational social choice context. They include  \textit{identity} \cite{aziz2011false}, \textit{cloning} \cite{elkind2011cloning} and \textit{quota}  \cite{zuckerman2012manipulating} manipulation in voting games, \textit{collusion and mergers} \cite{lev2013mergers}, \textit{sybil attacks} \cite{vallee2014study}   and, finally, \textit{multi-mode attacks} \cite{faliszewski2011multimode}, just to name a few. 

We contribute to this direction by studying yet another natural mechanism for  manipulation: \textbf{changing the propensity of players to participate to the game.} This type of manipulation is quite frequent in real-life situations, a central example being voting - while parties cannot control with absolute certainty voter turnout on election day, they may employ tactics that aim to mobilize their supporters and deter participation of their opponents' voters\footnote{Such scenarios are best modeled as \textit{multichoice voting games} \cite{felsenthal1997ternary}. However, since such games are  \textit{multi-cooperative}  (rather than cooperative) games \cite{bilbao2000bicooperative}, they fall outside of the scope of the present work, and will be dealt with in a subsequent paper.}. Manipulation could be performed by a centralized actor (like in the voting example), or 
by a coalition of players \cite{conitzer2007elections}, strategically modifying their behavior (in our case their reliabilities) in response to a perceived dominance of a player whose power index they wish to decrease.   

The main impetus for our work was \cite{bachrach2014cooperative}, where a model of strategic manipulation of player reliabilities was first investigated. Bachrach et al.  \cite{bachrach2014cooperative} considered  \textit{max games}. In these games each player possesses a weight; the value of a coalition is the maximum weight of a component of the coalition. They proved a "no sabotage theorem" for (the reliability extension of) max-games with a common failure probability. They remarked that manipulating player reliabilities can be studied in principle for all coalitional games, and asked for further investigations of this problem, in settings similar to the one we consider, i.e. under costly player manipulation.  Given the negative results for max-games \cite{bachrach2014cooperative} and the fact that computing power indices is often intractable  \cite{deng1994complexity}, we concentrate mostly on proving \textit{positive results}, showing that there exist natural scenarios where optimal attacks on power indices by manipulating players' reliabilities are easy to compute (and interesting).  We hope that these positive results will encourage renewed interest (and research) on the scope and limits of reliability manipulation. 

\textbf{Contributions and outline} In Section~\ref{stwo} we begin by informally stating the problem and justifying our choice of the two classes of coalitional games studied in this paper: \textit{network centrality games} \cite{suri2008determining,aadithya2010game,michalak2013efficient,tarkowski2017game} and \textit{credit attribution games} \cite{papapetrou2011shapley,karpov2014equal}.  Even though credit attribution games may seem to be somewhat exotic/of limited use, their importance extends well-beyond  scientometry: they were, in fact, anticipated, as \textit{hypergraph games} (see \cite{deng1994complexity} Section 3). The two games we consider from this class, \textit{full credit} and \textit{full obligation} games, are natural examples of read-once marginal contribution (MC) nets \cite{elkind2009tractable}. Full credit games are equivalent to the subclass of basic MC-nets \cite{ieong2005marginal} whose rules are \textit{conjunctions of  positive variables}; full obligation games correspond to generalized MC-nets whose rules consist of \textit{disjunctions of positive variables}. Full obligation games can simulate \textit{induced subgraph games} \cite{deng1994complexity}. Full credit games capture an important subclass of coalitional skill games (CSG) \cite{bachrach2008coalitional,bachrach2013computing}, that of CSG games with tasks consisting of a single skill. 

 Section~\ref{sthree} contains technical details and precise specifications of the models we investigate. We deal with two types of attacks: (node) \textit{removal}, where we are allowed to remove (decrease to zero the reliability of) certain nodes, and \textit{fractional attacks}, where reliability probabilities can be altered continuously. 
 
 In Section~\ref{sfive} we give closed-form formulas for the Shapley values of the reliability extensions of network centrality and credit allocation games.  Next we particularize these results to centrality games on specific network: first we show that no removal attack is beneficial; as for fractional attacks, we show that in the complete graph $K_{n}$ or when attacking the center of the star graph $S_{n}$, a greedy approach works: one should increase the reliabilities of  neighbors 
of the attacked node,  in descending order of baseline reliabilities. When attacking a non-center player  in $S_n$ the result is similar, with the important exception that increasing the reliability of the  center should precede all other moves. 
 In contrast, the situation for the cycle graph $C_{n}$ is more complex, involving all distance-two neighbors of the attacked node. A simple characterization is provided for the optimum as \textit{the best of four fixed ``greedy'' solutions}. This characterization allows the determination of the optimum for all combinations of reliability values and budget.\footnote{The precise formula for the optimum is cumbersome, hence deferred to the full version.}   
An interesting, and unintuitive, qualitative feature of the result is that in the optimal attack \textit{a non-neighbor of the attacked node could be targeted \textbf{before} some of the direct neighbors of the attacked node}.

 	In Section~\ref{ssix} we analyze full credit and full obligation games. Although these two games have the same Shapley value \cite{karpov2014equal}, we show that \textit{they behave very differently with respect to attacks}: removal attacks are not beneficial for full credit games, NP-hard for full obligation games. Fractional attacks also behave differently, modifying  probabilities in opposite directions. In a particular setting which includes the case of induced subgraph games we obtain greedy algorithms for both games, derived from expressing the problems as fractional knapsack problems. The determining quantities for the attack orders are (two different) cost-benefit measures.

\section{Problem Statement and Choice of Games}
\label{stwo}

The \textit{power index attack problem}, the main problem of interest in this paper, has the following simple informal statement: we consider the reliability extension of a cooperative game. We are given a positive budget $B>0$ and are allowed to modify reliabilities of all nodes, other than the targeted player $x$, as long as the total cost incurred  is at most $B$. Which nodes should we target, and how should we change their reliabilities, in order to decrease as much as possible the Shapley value of node $x$? 

A variant of the previous problem, called the \textit{pairwise power index attack problem} and motivated by Example~\ref{ex:hirsch} below, is the following: we are given not one but \textit{two} players $x,y$. The goal is to decrease as much (within the budget) the Shapley value of $x$,  while not affecting at all the Shapley value of $y$. This restriction makes some nodes exempt from attacks: we are not allowed to change the  reliabilities of players who contribute to the Shapley value of  $y$. 

\textbf{Choice of games} The problems described above could  be investigated in all classes of TU-cooperative games, or compact representation frameworks. However, we feel that the most compelling cases are those where the computation of power indices, e.g. the Shapley value,  of (the reliability extensions) of our games is tractable\footnote{This requirement disqualifies many natural candidate games such as \textit{weighted voting games} \cite{deng1994complexity,matsui2001np}, as well as most subclasses of \textit{coalitional skill games} \cite{aziz2009power}}. In other words \textit{the intractability of manipulating a power index should \textit{not} be a consequence of our inability to compute these indices}. In particular, we are interested in scenarios where computing power indices is easy, but computing an optimal attack on them is hard. Theorem~\ref{full-obligation} below provides such an example. 

The appeal of studying attacks on node centrality in social networks is quite self-evident: game-theoretic 
concepts such as those considered in \cite{suri2008determining,aadithya2010game,michalak2013efficient,tarkowski2017game} formalize appealing notions of leadership in social situations. They have been proposed as tools for identifying key actors, with applications e.g. to terrorist networks \cite{michalak2015defeating,lindelauf2013cooperative}. In such a setting, a direct (physical) attack on a leading node may be infeasible. Instead, one could attempt to indirectly affect its status (centrality), by incentivizing some of its peers.

Relevant examples of targeting nodes  in order to affect power indices arose (implicitly) in even earlier work  \cite{papapetrou2011shapley}, that attempted to develop coalitional  models of credit allocation in scientific work. The following is a version of the example in \cite{papapetrou2011shapley}:

\begin{example} \textit{Two scientists $A,B$ are compared with respect to their publication record\footnote{We \textbf{do not} condone and caution against the real-life use of such crude quantitative metrics for tasks like the one described in this example or our models.}. All their papers have exactly one co-author. Figure~\ref{hirsch} displays this information as a graph, listing for each author pair, the number of publications they have authored and the number of citations.  If using the Hirsch index, it would seem that candidate $A$ has a better track record than candidate $B$. But if we discard publications both of them have co-written with ``famous scientist $Y$'' (that is, \textbf{remove $Y$ and its publications from consideration}), then their relative ranking would be reversed.}

The authors of \cite{papapetrou2011shapley} attempted to use the Shapley value of a game based on the Hirsch index for credit allocation. An ulterior, more general and cleaner game-theoretic approach is ~\cite{karpov2014equal}. The author defines several  \textit{credit allocation games}, and uses their (identical) Shapley values as a measure of individual publication record. Slightly modified versions of this measure have (regrettably) actually been used in some countries to set minimum publication thresholds for access and promotion to academic positions, e.g. the minimal standards in Romania. 

In such a context one could naturally ask  the following question: \textit{what are the top $k$ coauthors that account for most of a scientists' publication record?} When using the game-theoretic framework for scientific credit from \cite{karpov2014equal}, this is equivalent to finding the $k$ coauthors whose removal (together with the joint papers) causes the scientist's' Shapley value to decrease the most.

Collaborations may, however, be genuinely productive or just bring to one of the scientists the benefits of association with leading scientists\footnote{One could argue, of course, that such an association itself reflects positively on the scientist. But the opposite argument, that prestige drives scientific inequality, has recently been substantiated by real data \cite{Morgan2018} and is, at the very least, hard to ignore.}. The Shapley value approach of \cite{karpov2014equal} does not distinguish between these two scenarios, as it gives equal credit to all authors of a joint paper, irrespective of "leadership status". Recent work, e.g. Hirsch's \textit{alpha index} \cite{hirsch2018mathbf}, has attempted to quantify "scientific leadership". It is possible to define a measure based on the reliability extension of credit allocation games that factors out the "well connectedness" of an individual from its score\footnote{The measure computes appropriate values of reliability probabilities, the lower the probability the more of a "scientific leader" a coauthor is; we are currently investigating the practicality of such an approach.}. Given such a measure, the previous question, that of finding the top-$k$ co-authors is still interesting, as it  \textit{identifies the most (genuinely) fruitful collaborations of a given author, irrespective of status.} This is modeled by the power index attack problem in credit allocation games. 
\label{ex:hirsch}
\end{example} 

\begin{figure}[h]
\scalebox{0.7}{
\begin{tikzpicture}[node distance=4.5cm, square/.style={regular polygon,regular polygon sides=4},line width=0.75mm, scale=0.3]

	\node[minimum size=1.5cm, draw,circle, fill=gray!5] (A) {$A$};

	\node[minimum size=1.5cm, draw,square,fill=gray!5,right of=A] (Y) {$Y$};

	\node[minimum size=1.5cm, draw,circle,fill=gray!5,right of=Y] (B) {$B$};

	\node[minimum size=1.5cm, draw,square,fill=gray!5,above left=1.5cm and 1cm of A] (X1) {$X1$};

	\node[minimum size=1.5cm, draw,square,fill=gray!5,above right=1.5cm and 1cm of A] (X2) {$X2$};

	\node[minimum size=1.5cm, draw,square,fill=gray!5,below left=1.5cm and 1cm of A] (X3) {$X3$};

	\node[minimum size=1.5cm, draw,square,fill=gray!5,above right=1.5cm and 1cm of B] (X4) {$X4$};

	\node[minimum size=1.5cm, draw,square,fill=gray!5,below left=1.5cm and 1cm of B] (X5) {$X5$};

	\node[minimum size=1.5cm, draw,square,fill=gray!5,below right=1.5cm and 1cm of B] (X6) {$X6$};


	\path[-,draw,thick]

	(A) edge node [above, align=center] {$P:5$\\$C:4$} (Y)

	(Y) edge node [above, align=center] {$P:8$\\$C:8$} (B)

	(A) edge node [above right, align=center] {$P:5$\\$C:4$} (X1)

	(A) edge node [above left, align=center] {$P:5$\\$C:3$} (X2)

	(A) edge node [above left, align=center] {$P:5$\\$C:3$} (X3)

	(B) edge node [below right, align=center] {$P:5$\\$C:4$} (X4)

	(B) edge node [above left, align=center] {$P:5$\\$C:3$} (X5)

	(B) edge node [above right, align=center] {$P:5$\\$C:3$} (X6);

\end{tikzpicture}}

\caption{The scenario (from \cite{papapetrou2011shapley}) in Example~\ref{ex:hirsch}}
\label{hirsch}
\end{figure}

\section{Technical Details}
\label{sthree} 

We will be working in the framework of Algorithmic Cooperative Game Theory, see 
\cite{chalkiadakis2011computational} for a readable introduction. 

We will make use of notation $f\rvert^{b}_{a}$ as a shorthand for $f(b)-f(a)$. 
Given graph $G=(V,E)$ and vertex $v\in V$, we will denote by $N(v)$ the set of neighbors of $V$ and by $\widehat{N(v)}=\{v\}\cup N(v)$. Given $S\subseteq V$, we denote by $\delta(S)$ the set of nodes $y\in V\setminus S$ such that there exists $x\in S$, $(x,y)\in E$. We generalize the setting above to the case when $G$ is a \textit{weighted graph}, i.e. there exists a weight function $w:E\rightarrow \mathbb{R}_{+}$. Given set $S\subseteq V$ and integer $r\geq 1$ we define $B_{w}(S,r)$, \textit{the ball of radius $r$ around $S$}, to be the set $B_{w}(S,r)=\{x\in V: (\exists y\in S)\mbox{ s.t. } d_{w}(x,y)\leq r\}$. We may omit $w$ from this notation when it is simply the graph distance in $G$. 
Also, given "cutoff" distance $d_{cut}$ we define $N_{cut}(x)=B(\{x\},d_{cut})$. 

We will deal with cooperative games with transferable utility, that is pairs $\Gamma=(N,v)$ where $N$ is a set of \textit{players} and $v:\mathcal{P}(S)\rightarrow \mathbb{R}_{+}$ is a \textit{value function} under the partial sets of $S$. If $S\subseteq N$ is a set of players, $v(S)$ is the value that players in coalition $S$ can guarantee for themselves irrespective of the other players' participation. 

Although we could prove similar results for other power indices, e.g. the Banzhaf value, in this paper we restrict ourselves to 
the \textit{Shapley value}. This index measures the portion of the grand coalition value $v(N)$ that a given player $x\in N$ could fairly request for itself. It has the formula \cite{chalkiadakis2011computational}  
$
Sh[v](x) = \frac{1}{n!} \cdot \sum_{\pi\in S_n} [v(S^{x}_{\pi} \cup \{x\}) - v(S^{x}_{\pi})]$, $
\mbox{ where }S^{x}_{\pi} = \{\pi[ i] | \pi[i] \mbox{ precedes x in  } \pi\}$ and $S_n$ is the set of permutation. 

We are concerned with two classes of cooperative games. The first one arose from efforts to define game-theoretic notions of \textit{network centrality} \cite{suri2008determining,aadithya2010game,michalak2013efficient,tarkowski2017game}. 
We define these games as follows:  
\begin{itemize}
\item[-] Game $\Gamma_{NC_1}$ is specified by its value function $v_{NC_1}(S)=|S\cup \delta(S)|$. 
\item[-] Given integer $k\geq 1$, game $\Gamma_{NC_2}$ is specified by its value function $v_{NC_2}(S)=|S\cup \{x\not \in S\mbox{ s.t. } | N(x)\cap S|\geq k \}|$.
\item[-] In game $\Gamma_{NC3}$ graph $G$ is \textit{weighted}. We are also given a positive "cutoff distance" $d_{cut}$. We give the characteristic function $v_{NC_3}$ by $v_{NC_3}(S)=|B(S,d_{cut})|.$
\end{itemize} 

A second class of games, related to the example in \cite{papapetrou2011shapley} is that of \textit{influence-attribution games}, formally defined by Karpov \cite{karpov2014equal}. A \textit{credit-attribution game} is formalized by a set of authors $N = \{1, . . ., n\}$ and a set of publications $P = \{P_1, . . ., P_m\}$. Each paper $P_{j}$ is naturally endowed with a set of \textit{authors} $Auth_j \subseteq N$ and a  \textit{quality score} $w_{j}\in \mathbb{R}_{+}$. In real-life scenarios the quality measure could be 1 (i.e. we simply count papers), a score based on the ranking of the venue the paper was published in, the number of its citations, or even some iterative, PageRank-like variant of the above measures. 

\begin{itemize}
\item[-] \textit{The full credit game} $\Gamma_{FC}$ is specified by its value function $v_{FC}(S)$ which is simply the sum of weights of papers whose authors' list contains \textbf{at least one member from $S$.} 
\item[-] \textit{The full obligation game} $\Gamma_{FO}$ is specified by its value function $v_{FO}(S)$ which is the sum of weights of papers whose authors  \textbf{are all members of $S$.} 
\end{itemize} 

Denote by $Pap_{x}$ the set of papers of $x$, and by $CA(x)$ the set of co-authors of  $x$, i.e. the set of players $l$ for which there exists a $k\in Pap_{x}\cap Pap_{l}$. If $l\in CA(x)$ denote by 
$C(x,l)= \sum\limits_{k \in Pap_x\cap Pap_{l}} w_{k}$ 
\textit{the joint contribution} of  $x,l$. 

\textbf {Reliability extension and attack models}  We will be working within the framework of \textit{reliability extension} of games, first defined in \cite{reliability-games} and further investigated in \cite{bachrach2014cooperative}. The {\em reliability extension} of cooperative game $G=(N,v)$ with parameters $(p_{1},p_{2},\ldots, p_{n})$ is the cooperative game $\Gamma=(N,\overline{v})$ with $\overline{v}(S)=\sum\limits_{T\subseteq S}  v(T)\cdot \Pi_{T,S}$, $\mbox{ where } \Pi_{T,S}=(\prod\limits_{i\in T} p_{i}) \cdot (\prod\limits_{i\in S\setminus T} (1-p_{i})).$ 

A useful  result about these quantities is: 
\begin{claim} Let $S\subseteq W$. We have 
\begin{displaymath} 
\frac{\partial \Pi_{S,W}}{\partial p_{j}}= \left\{\begin{array}{l} 
  \Pi_{S\setminus j,W\setminus j}\mbox{ if }j\in S,\\
  -\Pi_{S,W\setminus j}\mbox{ if }j\in W\setminus S, \mbox{ and } \\
  0, \mbox{ if }j\not \in W
  \end{array}
\right.
\end{displaymath}
\label{aux2}
\end{claim}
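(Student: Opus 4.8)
The plan is to treat $\Pi_{S,W}$ as a function of the single variable $p_j$ with all other $p_i$ held fixed; since it is a product of affine factors, one per index of $W$, differentiating in $p_j$ simply strips off the (at most one) factor that involves $p_j$. I would organize the argument according to the three cases in the statement, using throughout the hypothesis $S\subseteq W$.

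First, suppose $j\in S$. Then $p_j$ occurs exactly once, namely in $\prod_{i\in S}p_i$, and $j\notin W\setminus S$, so $p_j$ does not occur in $\prod_{i\in W\setminus S}(1-p_i)$. Hence $\Pi_{S,W}=p_j\cdot\bigl(\prod_{i\in S\setminus j}p_i\bigr)\bigl(\prod_{i\in W\setminus S}(1-p_i)\bigr)$, and differentiating in $p_j$ leaves the $j$-free product $\bigl(\prod_{i\in S\setminus j}p_i\bigr)\bigl(\prod_{i\in W\setminus S}(1-p_i)\bigr)$. It remains to recognize this as $\Pi_{S\setminus j,W\setminus j}$: since $j\in S\subseteq W$ we have $(W\setminus j)\setminus(S\setminus j)=W\setminus S$, so the two products coincide verbatim.

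Second, suppose $j\in W\setminus S$. Now $p_j$ does not occur in $\prod_{i\in S}p_i$, while the factor $(1-p_j)$ occurs exactly once in $\prod_{i\in W\setminus S}(1-p_i)$. So $\Pi_{S,W}=(1-p_j)\cdot\bigl(\prod_{i\in S}p_i\bigr)\bigl(\prod_{i\in(W\setminus S)\setminus j}(1-p_i)\bigr)$, whose $p_j$-derivative is $-\bigl(\prod_{i\in S}p_i\bigr)\bigl(\prod_{i\in(W\setminus S)\setminus j}(1-p_i)\bigr)$. Because $j\notin S$ we have $S\subseteq W\setminus j$ and $(W\setminus j)\setminus S=(W\setminus S)\setminus j$, so this product equals $\Pi_{S,W\setminus j}$, giving the claimed $-\Pi_{S,W\setminus j}$. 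Finally, if $j\notin W$ then $p_j$ appears in neither product, $\Pi_{S,W}$ is constant in $p_j$, and the derivative is $0$.

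There is no real obstacle here; the only thing requiring a moment's care is the index bookkeeping, i.e. checking that after deleting $j$ the surviving products are exactly $\Pi_{S\setminus j,W\setminus j}$ and $\Pi_{S,W\setminus j}$ — both of which follow immediately from $S\subseteq W$. If a more compact presentation is preferred, one can instead write $\Pi_{S,W}=\prod_{i\in W}q_i(p_i)$ with $q_i(p_i)=p_i$ for $i\in S$ and $q_i(p_i)=1-p_i$ for $i\in W\setminus S$, and apply the product rule together with $q_i'\equiv 1$ (resp.\ $q_i'\equiv-1$), obtaining all three cases at once.
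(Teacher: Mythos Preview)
Your proof is correct. The paper states this claim without proof (treating it as an immediate consequence of the product form of $\Pi_{S,W}$), so there is nothing to compare against; your case analysis and the index bookkeeping are exactly what is needed to justify it.
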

\vspace{-0.4cm}

We will consider in the sequel the following two attack models: 
\begin{itemize} 
\item[(1). ]  \textit{fractional attack:} In this type of attack every node $j$ different from the attacked node $x$ has a \textit{baseline reliability} $p^{*}_{j}\in (0,1]$. We are allowed to manipulate the reliability of each such node $j\neq x$ by changing it from $p^{*}_{j}$ to an arbitrary value $p$.  To do so we will incur, however, a cost $u_{j}(p)$. We assume that cost function $u_{j}(\cdot)$ is defined and has an unique zero\footnote{There is no cost for keeping the baseline reliability.} at $p=p^{*}_{j}$, is decreasing and linear on $[0,p^{*}_{j}]$ and increasing and linear on $[p^{*}_{j},1]$ (Figure~\ref{util}).  That is: for every player $j\neq x$ there exist values $L_{j},R_{j}>0$ such that 
\[
u_{j}(p)=\left\{\begin{array}{ll} 
  L_{j}(p^{*}_{j}-p), & \mbox{ if } p<p^{*}_{j}, \\ 
  0, & \mbox{ if } p=p^{*}_{i}, \\ 
  R_{j}(p-p^{*}_{j}), & \mbox{ if } p>p^{*}_{j}.
  \end{array}
\right.
\]
\item[(2). ] \textit{removal attack:} In this type of attack we are only allowed to change the reliability of any node $j$ (different from the targeted node $x$) from $p^{*}_{j}$ to $0$. To do so will incur a cost $c_{j}$.
\end{itemize} 

\textbf{A basis for fractional attacks} The following simple result will be used to analyze fractional attacks in network centrality games:  
\begin{lemma}["IMPROVING SWAPS"]
Let $D$ be an open set in $\mathbb{R}^{n}$, let $x=(x_{1},\ldots, x_{n})\in D$ and $f:D\rightarrow \mathbb{R}$ be an analytic function. Assume $1\leq i,j\leq n$ are indices such that $\frac{\partial f(x_{1},\ldots, x_{n})}{\partial x_{i}} > \frac{\partial f(x_{1},\ldots, x_{n})}{\partial x_{j}}. $  Define $x_{i,j}(\epsilon)=(x_{k}(\epsilon))$, with 
\begin{equation} 
x_{k}(\epsilon)=\left\{\begin{array}{ll} 
  x_{k}+\epsilon, & \mbox{ if } k=j, \\ 
  x_{k}-\epsilon, & \mbox{ if } k=i, \\ 
  x_{k}, & \mbox{otherwise.}
  \end{array}
\right.
\label{change}
\end{equation} 
Then there exists $\epsilon_{0}>0$ such that function 
$g:[0,\epsilon_{0}]\rightarrow \mathbb{R}$, $g(\epsilon)=f(x_{i,j}(\epsilon))$ is monotonically decreasing. 
\label{aux}
\end{lemma}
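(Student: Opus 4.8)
The plan is to reduce the statement to an elementary one-variable calculus fact about the restriction of $f$ to the segment traced out by $\epsilon \mapsto x_{i,j}(\epsilon)$. First observe that this map is affine, and that since $D$ is open and $x\in D$ there is $\epsilon_{1}>0$ with $x_{i,j}(\epsilon)\in D$ for all $\epsilon\in[0,\epsilon_{1}]$; hence $g(\epsilon)=f(x_{i,j}(\epsilon))$ is well defined on $[0,\epsilon_{1}]$ and, being the composition of the analytic function $f$ with an affine map, is itself analytic there (in particular $C^{1}$).

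Next I would differentiate. Only the $i$-th and $j$-th coordinates of $x_{i,j}(\epsilon)$ depend on $\epsilon$, with derivatives $-1$ and $+1$ respectively, so the chain rule gives
\[
g'(\epsilon)=\frac{\partial f}{\partial x_{j}}\bigl(x_{i,j}(\epsilon)\bigr)-\frac{\partial f}{\partial x_{i}}\bigl(x_{i,j}(\epsilon)\bigr).
\]
Evaluating at $\epsilon=0$ yields $g'(0)=\frac{\partial f}{\partial x_{j}}(x)-\frac{\partial f}{\partial x_{i}}(x)<0$, precisely by the hypothesis of the lemma. Since the partial derivatives of $f$ are continuous (analyticity is far more than needed here), $\epsilon\mapsto g'(\epsilon)$ is continuous on $[0,\epsilon_{1}]$, and from $g'(0)<0$ we obtain some $\epsilon_{0}\in(0,\epsilon_{1}]$ with $g'(\epsilon)<0$ for every $\epsilon\in[0,\epsilon_{0}]$. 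A function whose derivative is negative throughout an interval is (strictly) decreasing on that interval by the fundamental theorem of calculus, so $g$ is monotonically decreasing on $[0,\epsilon_{0}]$, which is exactly the claim.

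\textbf{Main obstacle.} There is no real difficulty: the only points needing a word of care are (i) checking that the perturbed point $x_{i,j}(\epsilon)$ remains inside $D$ for small $\epsilon$, which is where openness of $D$ is used, and (ii) noting that continuity of $g'$ near $0$ — not analyticity — is what upgrades the pointwise inequality $g'(0)<0$ to monotonicity on a whole interval. If one insisted on assuming only differentiability of $f$ at the single point $x$, the first-order expansion $g(\epsilon)=g(0)+g'(0)\epsilon+o(\epsilon)$ still gives $g(\epsilon)<g(0)$ for all sufficiently small $\epsilon>0$, which suffices for the applications in Section~\ref{sfive}, but one would lose the cleaner ``monotonically decreasing'' formulation.
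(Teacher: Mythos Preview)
Your proof is correct and follows essentially the same route as the paper: apply the chain rule to get $g'(0)=\frac{\partial f}{\partial x_{j}}(x)-\frac{\partial f}{\partial x_{i}}(x)<0$, then use continuity of $g'$ to propagate the sign to an interval $[0,\epsilon_{0}]$. Your write-up is in fact more careful than the paper's, since you explicitly justify that $x_{i,j}(\epsilon)\in D$ for small $\epsilon$ and remark that mere $C^{1}$-regularity (rather than analyticity) suffices.
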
 
In other words, to minimize function $f$ one could decrease the variables with the largest partial derivative, while symmetrically increasing a smaller one. 
\vspace{-3mm}
\begin{proof} 
By the chain rule $g^{\prime}(0)=\sum\limits_{k=1}^n \frac{\partial f(x_{1},\ldots, x_{n})}{\partial x_{k}} \frac{\partial x_{k}(\epsilon)}{\partial \epsilon}|_{\epsilon = 0}$
\begin{align*} 
&  = \frac{\partial f(x_{1},\ldots, x_{n})}{\partial x_{j}} - \frac{\partial f(x_{1},\ldots, x_{n})}{\partial x_{i}}<0. 
\end{align*} 
Since $g^{\prime}$ is continuous, $g^{\prime}$ is strictly negative on some interval $[0,\epsilon_{0}]$. The result follows. 
\end{proof} 


\vspace{-0.7cm}
\begin{figure}[ht]
\begin{tikzpicture}[scale = 1]

\begin{axis}[ymin=0, ymax=1, xmin=0, xmax=1, enlargelimits=false, xlabel=$p$, ylabel={$u_{i}(p)$}, grid=both, grid style={line width=.1pt, draw=gray!10}, major grid style={line width=.2pt,draw=gray!50}]

	\addplot[mark=*,black] plot coordinates {(0,0.3) (0.4,0) (1,0.7)};
    \addplot[mark=] coordinates {(0.4,0)} node[pin=90:{$p^{*}_{i}=0.4$}]{} ;
	\end{axis}

\end{tikzpicture}
\vspace{-0.4cm}
\caption{Shape of utility functions in fractional attacks.}
\label{util}
\end{figure} 
\vspace{-0.5cm}

\section{Closed-form formulas} 
\label{sfive} 
The basis for our manipulation of network centralities is the following characterization of the Shapley value of the reliability extension: 

\begin{theorem} The Shapley values of the reliability extensions of network centrality games $\Gamma_{NC_1}, \Gamma_{NC_2},\Gamma_{NC_3}$ have the formulas: 

\begin{displaymath}
Sh[\overline{v_{NC_1}}](x)= p_{x} \mathlarger{\sum}\limits_{\stackrel{y\in \widehat{N(x)}}{S\subseteq \widehat{N(y)}\setminus x}} \frac{1}{|S| + 1} \Pi_{S, \widehat{N(y)}\setminus x} 
\end{displaymath}
\begin{align*}
& Sh[\overline{v_{NC_2}}](x)=p_{x}[\mathlarger{\sum}\limits_{y\in N(x)} \mathlarger{\sum}\limits_{\stackrel{S\subseteq \widehat{N(y)}\setminus x}{|S|\geq k-1}} \frac{(|S|+1-k)}{|S|(|S| + 1)} \Pi_{S, \widehat{N(y)}\setminus x}+\\ 
& + \mathlarger{\sum}\limits_{S\subseteq N(x)}\frac{k}{|S| + 1} \Pi_{S, N(x)} ]\\
& Sh[\overline{v_{NC_3}}](x)= p_{x} \mathlarger{\sum}\limits_{\stackrel{y\in  \widehat{N}(x)}{S\subseteq \widehat{N_{cut}(y)}\setminus x}} \frac{1}{|S| + 1} \Pi_{S, \widehat{N_{cut}(y)}\setminus x} 
\end{align*}
\label{sh-gamma1}
\end{theorem}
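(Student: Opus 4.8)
The plan is to evaluate the Shapley value of a reliability extension via a probabilistic identity and then exploit linearity of $Sh[\cdot](x)$ in the value function. First I would record that $\Pi_{T,S}=\Pr[R\cap S=T]$, where $R$ is the random set containing each $i\in N$ independently with probability $p_i$; hence $\overline v(S)=\mathbb{E}[v(R\cap S)]$. Separating the terms $T\ni x$ from $T\not\ni x$ in the definition of $\overline v$ (equivalently, conditioning on whether $x\in R$) gives, for every $S\not\ni x$, $\overline v(S\cup\{x\})-\overline v(S)=p_x\sum_{T\subseteq S}\big(v(T\cup\{x\})-v(T)\big)\,\Pi_{T,S}$. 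Substituting this into the permutation formula for the Shapley value yields $Sh[\overline v](x)=p_x\cdot\mathbb{E}_{\pi,R'}\!\big[v\big((R'\cap S^x_\pi)\cup\{x\}\big)-v(R'\cap S^x_\pi)\big]$, where $\pi$ is a uniform random permutation of $N$ and $R'$ the random present set on $N\setminus\{x\}$: in words, $Sh[\overline v](x)$ is $p_x$ times the expected marginal contribution of $x$ to the coalition of its present predecessors.

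Since $Sh[\cdot](x)$ is linear in the value function, it suffices to compute this expected marginal for the single-vertex ``coverage'' atoms obtained by writing $v_{NC_1}(S)=\sum_{z\in V}\mathbf{1}[S\cap\widehat{N(z)}\neq\emptyset]$, $v_{NC_3}(S)=\sum_{z\in V}\mathbf{1}[S\cap N_{cut}(z)\neq\emptyset]$ and $v_{NC_2}(S)=\sum_{z\in V}\mathbf{1}[z\in S\text{ or }|N(z)\cap S|\geq k]$. For $\Gamma_{NC_1}$ and $\Gamma_{NC_3}$ the atom $w_z$ is a monotone ``OR'' game on $A_z$ (equal to $\widehat{N(z)}$, resp.\ $N_{cut}(z)$): the marginal of $x$ to $T$ is $1$ exactly when $x\in A_z$ and $T\cap A_z=\emptyset$. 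Conditioning on the present subset $S\subseteq A_z\setminus\{x\}$, which has probability $\Pi_{S,A_z\setminus x}$, and then on the relative order of $\{x\}\cup S$, the marginal is $1$ iff $x$ precedes all of $S$, an event of probability $\frac1{|S|+1}$. Hence $Sh[\overline{w_z}](x)=p_x\sum_{S\subseteq A_z\setminus x}\frac{1}{|S|+1}\Pi_{S,A_z\setminus x}$ if $x\in A_z$ and $0$ otherwise; summing over the $z$ with $x\in A_z$ — which, by symmetry of the underlying relations, are exactly the $z\in\widehat{N(x)}$ (resp.\ $z\in N_{cut}(x)$) — and relabelling $z$ as $y$ gives the two stated formulas.

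The real work is $\Gamma_{NC_2}$, where the atom is a threshold game and the marginal of $x$ depends on how many present predecessors of $x$ lie in $N(z)$. I would split the vertex sum into the term $z=x$ and the terms $z\in N(x)$ (every other $z$ contributes $0$). For $z=x$ the marginal to $T$ is $\mathbf{1}[\,|N(x)\cap T|\le k-1\,]$, so conditioning on the present subset $S\subseteq N(x)$ and counting the orderings in which at most $k-1$ of $S$ precede $x$ (each of the counts $0,\dots,|S|$ having probability $\frac1{|S|+1}$) gives coefficient $\frac{\min(k,|S|+1)}{|S|+1}$. For $z\in N(x)$ the marginal to $T$ is $1$ iff $z\notin T$ and exactly $k-1$ members of $N(z)\cap T$ precede $x$; here one conditions twice — on whether $z$ is present, then on the present subset $S_0\subseteq N(z)\setminus\{x\}$ — and counts the orderings of $S_0\cup\{x\}$, resp.\ $S_0\cup\{x,z\}$, in which exactly $k-1$ of $S_0$ precede $x$ and, when $z$ is present, $z$ follows $x$. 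Those counts evaluate to $\frac1{|S_0|+1}$ and $\frac{|S_0|+2-k}{(|S_0|+1)(|S_0|+2)}$, and re-indexing by $S=S_0$ or $S=S_0\cup\{z\}$ inside $\widehat{N(z)}\setminus x$ turns the latter into the coefficient $\frac{|S|+1-k}{|S|(|S|+1)}$ from the statement. Collecting everything and using $\frac{1}{|S|+1}=\frac{|S|+1-k}{|S|(|S|+1)}+\frac{k-1}{|S|(|S|+1)}$ to merge the two sub-sums over $\widehat{N(z)}\setminus x$ should reproduce the displayed expression. I expect this last consolidation to be the main obstacle: reconciling the $z=x$ contribution, the boundary case $|S|=k-1$, and the exact constant in front of $\sum_{S\subseteq N(x)}\frac{1}{|S|+1}\Pi_{S,N(x)}$ across these re-indexings is delicate, and it is precisely where one must verify that the bookkeeping collapses to the compact closed form asserted.
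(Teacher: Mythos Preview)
Your approach is essentially the same as the paper's: factor out $p_x$ from the marginal $\overline v(S\cup\{x\})-\overline v(S)$, decompose the value function as a sum of per-vertex indicator ``atoms'', reinterpret the Shapley sum as an expectation over a random permutation and the random alive set, then condition on the alive subset $S$ of the relevant neighbourhood and compute the probability that $x$ falls in the right relative position. The paper carries this out only for $\Gamma_{NC_1}$ (defining $f_y(W)=\mathbf{1}[y\notin W\cup\delta(W)]$, writing the Shapley value as $p_x\sum_y\Pr[x=\mathrm{First}_\pi(\widehat{N(y)}\cap\mathrm{Alive}(V))]$, and conditioning on $S=(\widehat{N(y)}\setminus x)\cap\mathrm{Alive}(V)$), and explicitly defers $\Gamma_{NC_2}$ and $\Gamma_{NC_3}$ to the full version. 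You actually go further than the paper on $\Gamma_{NC_2}$, correctly identifying the two cases $z=x$ and $z\in N(x)$ and the relevant order statistics; your honest flag that the final consolidation into the displayed closed form is ``delicate'' is apt, but since the paper does not present that bookkeeping either, this is not a gap relative to the paper's own proof.
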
 
\vspace{-5mm}
As for credit atribution games, the corresponding result is
\begin{theorem} 
The  Shapley values of the reliability extensions of $\Gamma_{FC},\Gamma_{FO}$ with 
probabilities $(p_{1},p_{2},\ldots, p_{n})$ have the formulas 
\begin{equation} \label{sh:rel:1}
Sh[\overline{v_{FC}}](x)=p_{x}\cdot \sum\limits_{k\in Pap_{x}} w_{k}\cdot \big[\sum_{S\subseteq Auth_{k}\setminus \{x\}}\frac{ \Pi_{\emptyset,S} }{(n_{k}-|S|){{n_{k}}\choose {|S|}}} \big]
\end{equation} 
where $Auth_{k}$ is the set of coauthors of paper $k$ and $n_{k}=|Auth_{k}|$, and 
\begin{equation} \label{sh:rel:2}
Sh[\overline{v_{FO}}](x)= \sum\limits_{k\in Pap_{x}} \frac{w_{k}}{n_{k}}\cdot  \Pi_{Auth_{k},Auth_{k}}
\end{equation} 
\label{k=1}
\end{theorem}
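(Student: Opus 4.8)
The plan is to reduce both statements, via linearity, to a one-paper computation, and then to recognise the two one-paper games as scalar multiples of an ``OR'' game and of a unanimity game. Since the coefficients $\Pi_{T,S}$ in the definition of $\overline{v}$ do not depend on $v$, the reliability extension $v\mapsto\overline{v}$ is linear in $v$, and the Shapley value is linear as well. Writing $v_{FC}=\sum_{k}w_{k}u_{k}$ with $u_{k}(S)=1$ iff $S\cap Auth_{k}\neq\emptyset$, and $v_{FO}=\sum_{k}w_{k}e_{k}$ with $e_{k}(S)=1$ iff $Auth_{k}\subseteq S$, we get $Sh[\overline{v_{FC}}](x)=\sum_{k}w_{k}\,Sh[\overline{u_{k}}](x)$ and the analogous identity for $FO$; only papers $k$ with $x\in Auth_{k}$, i.e. $k\in Pap_{x}$, can contribute a nonzero summand, since $x$ is a null player of both $u_{k}$ and $e_{k}$ (hence of $\overline{u_{k}}$ and $\overline{e_{k}}$) whenever $x\notin Auth_{k}$.

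First I would evaluate the two one-paper extensions in closed form. Using the elementary identity $\sum_{T\subseteq A}\prod_{i\in T}p_{i}\prod_{i\in A\setminus T}(1-p_{i})=1$ (equivalently, reading $\overline{v}(S)$ as the expectation of $v(S\cap R)$ for $R$ the random ``reliable'' set), one obtains $\overline{e_{k}}(S)=\big(\prod_{i\in Auth_{k}}p_{i}\big)\,[\,Auth_{k}\subseteq S\,]=\Pi_{Auth_{k},Auth_{k}}\cdot[\,Auth_{k}\subseteq S\,]$ and $\overline{u_{k}}(S)=1-\prod_{i\in S\cap Auth_{k}}(1-p_{i})$. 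The first identity exhibits $\overline{e_{k}}$ as the constant $\Pi_{Auth_{k},Auth_{k}}$ times the unanimity game on $Auth_{k}$, whose Shapley value assigns $\tfrac{1}{n_{k}}$ to each author of $k$ and $0$ to everyone else; summing $w_{k}\,Sh[\overline{e_{k}}](x)$ over $k\in Pap_{x}$ yields \eqref{sh:rel:2} immediately.

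For the full credit game, $\overline{u_{k}}(S)$ depends on $S$ only through $S\cap Auth_{k}$, so every player outside $Auth_{k}$ is null in $\overline{u_{k}}$ and may be deleted without affecting $Sh[\overline{u_{k}}](x)$; it remains to compute the Shapley value of $x$ in the game on ground set $Auth_{k}$ with value $S\mapsto 1-\prod_{i\in S}(1-p_{i})$. For a permutation $\pi$ of $Auth_{k}$ with $x$-predecessor set $S^{x}_{\pi}\subseteq Auth_{k}\setminus\{x\}$, the marginal contribution of $x$ is $\big(1-\prod_{i\in S^{x}_{\pi}\cup\{x\}}(1-p_{i})\big)-\big(1-\prod_{i\in S^{x}_{\pi}}(1-p_{i})\big)=p_{x}\prod_{i\in S^{x}_{\pi}}(1-p_{i})$. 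Substituting into the Shapley formula and grouping the $n_{k}!$ permutations by the set $S=S^{x}_{\pi}$ — the number of $\pi$ with $S^{x}_{\pi}=S$ being $|S|!\,(n_{k}-1-|S|)!$ — gives $Sh[\overline{u_{k}}](x)=p_{x}\sum_{S\subseteq Auth_{k}\setminus\{x\}}\frac{|S|!\,(n_{k}-1-|S|)!}{n_{k}!}\prod_{i\in S}(1-p_{i})$. One then closes with the bookkeeping identity $\frac{|S|!\,(n_{k}-1-|S|)!}{n_{k}!}=\frac{1}{(n_{k}-|S|)\binom{n_{k}}{|S|}}$ and the observation $\prod_{i\in S}(1-p_{i})=\Pi_{\emptyset,S}$, after which $\sum_{k\in Pap_{x}}w_{k}\,Sh[\overline{u_{k}}](x)$ is exactly the right-hand side of \eqref{sh:rel:1}.

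I do not expect a genuine obstacle: the whole argument rests on the decomposition into OR- and unanimity games, and everything after that is mechanical. The steps that deserve a little care are (i) checking that $\overline{\,\cdot\,}$ commutes with finite linear combinations and that a null player of $v$ remains null in $\overline{v}$ (so that the null players of $\overline{u_{k}}$ can be dropped), and (ii) massaging the combinatorial coefficient into the precise shape printed in \eqref{sh:rel:1}; neither is deep. One could alternatively route the full-credit computation through a multilinear-extension argument, but the marginal-contribution argument above seems the shortest.
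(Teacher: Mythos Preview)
Your proposal is correct and follows essentially the same route as the paper: decompose by linearity into one-paper games, observe that only papers in $Pap_{x}$ contribute, and compute the Shapley value of each one-paper game by restricting attention to $Auth_{k}$ and counting permutations. Your handling of the full-obligation case is slightly slicker than the paper's---you recognise $\overline{e_{k}}$ directly as $\Pi_{Auth_{k},Auth_{k}}$ times the unanimity game on $Auth_{k}$, whereas the paper reaches the same $\tfrac{1}{n_{k}}\Pi_{Auth_{k},Auth_{k}}$ via a probabilistic ``$x$ is live, all coauthors are live, and $x$ comes last'' argument---but the substance is identical.
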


\section{Attacking network centralities}

 The next  result follows from Theorem~\ref{sh-gamma1} and Claim~\ref{aux2}: 

\begin{corollary} In the reliability extensions of the centrality games $\Gamma_{NC_1}, \Gamma_{NC_2}, \Gamma_{NC_3}$, the Shapley values of player $1$ are monotonically decreasing functions of distance-two neighbors' reliabilities (and do not depend on other players). 
\label{centrality} 
\end{corollary}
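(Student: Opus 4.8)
Proof proposal.

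The plan is to read both halves of the statement off the closed forms of Theorem~\ref{sh-gamma1}, using a single differentiation together with Claim~\ref{aux2}. Throughout write $x$ for player~$1$. The ``does not depend on other players'' part is pure inspection: in the displayed formula for $Sh[\overline{v_{NC_1}}](x)$ the only probabilities occurring are $p_x$ and the $p_i$ with $i\in\widehat{N(y)}\setminus x$ for some $y\in\widehat{N(x)}$, i.e.\ those with $d(x,i)\le 2$; in the formula for $Sh[\overline{v_{NC_2}}](x)$ the second sum contributes only indices of $N(x)$, so again nothing beyond distance $2$; and $Sh[\overline{v_{NC_3}}](x)$ has the same shape with $N_{cut}$ in place of $N$. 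Hence no node at distance $\ge 3$ from $x$ enters any of the three expressions.

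For the monotonicity, fix any node $j\neq x$ that appears in the $\Gamma_{NC_1}$ formula — necessarily with $d(x,j)\le 2$. There $p_j$ occurs only in the summands indexed by a $y\in\widehat{N(x)}$ with $j\in W_y:=\widehat{N(y)}\setminus x$. Differentiate $Sh[\overline{v_{NC_1}}](x)$ in $p_j$ and apply Claim~\ref{aux2} factor by factor: a subset $S\subseteq W_y$ with $j\in S$ contributes $+\frac{1}{|S|+1}\Pi_{S\setminus j,\,W_y\setminus j}$, and one with $j\notin S$ contributes $-\frac{1}{|S|+1}\Pi_{S,\,W_y\setminus j}$. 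Pairing $S$ with $S\cup\{j\}$ — equivalently, re-indexing both families by $T\subseteq W_y\setminus j$ — collapses the sum over $S\subseteq W_y$ to $\sum_{T\subseteq W_y\setminus j}\bigl(\frac{1}{|T|+2}-\frac{1}{|T|+1}\bigr)\Pi_{T,\,W_y\setminus j}$, in which every bracket is negative and every $\Pi$ is non-negative. Summing over $y$ and multiplying by $p_x>0$ yields $\partial Sh[\overline{v_{NC_1}}](x)/\partial p_j\le 0$, which is the asserted monotone decrease. The $\Gamma_{NC_3}$ case is the same computation word for word with $\widehat{N_{cut}(y)}$ in place of $\widehat{N(y)}$ (and ``distance two'' read relative to the cutoff).

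The $\Gamma_{NC_2}$ case is, I expect, where the real difficulty lies, and it is the step I would watch most closely. The second sum of the $\Gamma_{NC_2}$ formula does not involve $p_j$, but in the first sum the coalition-size weight is $c(m):=\frac{m+1-k}{m(m+1)}$ rather than $\frac{1}{m+1}$, and after the same pairing the inner sum becomes $\sum_{T\subseteq W_y\setminus j}\bigl(c(|T|+1)-c(|T|)\bigr)\Pi_{T,\,W_y\setminus j}$, where for $|T|\ge k-1$ the weight difference equals $\frac{2k-|T|-2}{|T|(|T|+1)(|T|+2)}$ (the contribution at $|T|=k-1$ comes only from $T\cup\{j\}$, since the weight $c(k-1)$ of $T$ itself vanishes). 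For $k\ge 2$ this difference is positive when $|T|<2k-2$ and negative only once $|T|>2k-2$, so the per-term sign argument that settled $\Gamma_{NC_1}$ and $\Gamma_{NC_3}$ no longer closes: one needs a genuinely global bound on this alternating-sign sum, presumably exploiting the combinatorial shape of the balls $\widehat{N(y)}\setminus x$ together with the constraint $|S|\ge k-1$. It is worth verifying carefully whether the monotone-decreasing conclusion survives for every $k$ or whether a side condition on the graph is required; for $k=1$ the weight reduces to $\frac{1}{m+1}$ and $\Gamma_{NC_2}=\Gamma_{NC_1}$, so there it is immediate. Past this point everything is routine differentiation and re-indexing.
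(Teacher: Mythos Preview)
Your treatment of $\Gamma_{NC_1}$ and $\Gamma_{NC_3}$ is exactly the paper's own argument: differentiate the closed form of Theorem~\ref{sh-gamma1} via Claim~\ref{aux2}, pair $S$ with $S\cup\{j\}$, and obtain the negative coefficient $\tfrac{1}{|T|+2}-\tfrac{1}{|T|+1}=-\tfrac{1}{(|T|+1)(|T|+2)}$ in every summand. The appendix writes precisely this formula down.

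For $\Gamma_{NC_2}$ there is both a concrete error and an unfinished step. The error: your assertion that ``the second sum of the $\Gamma_{NC_2}$ formula does not involve $p_j$'' is false whenever $j\in N(x)$, since that sum ranges over $S\subseteq N(x)$. Differentiating it by the same pairing trick (the weight there is $k/(|S|+1)$) gives
\[
-\,p_x\sum_{T\subseteq N(x)\setminus\{j\}}\frac{k}{(|T|+1)(|T|+2)}\,\Pi_{T,\,N(x)\setminus\{j\}}\;<\;0,
\]
and the paper records exactly this as the first piece of $\partial Sh[\overline{v_{NC_2}}](x)/\partial p_j$. The unfinished step: you correctly observe that for the first sum the paired weight difference $c(|T|+1)-c(|T|)=\tfrac{2k-|T|-2}{|T|(|T|+1)(|T|+2)}$ changes sign, and then stop. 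The paper does not appeal to any global alternating-sum estimate; it asserts instead that the first-sum contribution simplifies, term by term, to $-p_x$ times a sum of quantities $\tfrac{k}{(|S|+1)(|S|+2)}+\tfrac{k-1}{|S|(|S|+1)}$, each nonnegative, so the whole derivative is nonpositive. You should check this algebraic rewriting with care---your value of the weight difference does not obviously agree with the paper's displayed simplification---but the intended route is a direct sign argument after that rewriting, not the ``global bound'' you were bracing for.
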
  
\begin{proof} 
Deferred to the full version. 
\end{proof} 

The previous corollary shows that for network centrality games no removal attack is 
beneficial: 

\begin{theorem} 
No removal attack on the centrality of a player in games $\Gamma_{NC_1}, \Gamma_{NC_2}, \Gamma_{NC_3}$ can decrease its Shapley value. 
\end{theorem}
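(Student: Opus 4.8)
The plan is to derive the theorem immediately from Corollary~\ref{centrality} by a monotonicity argument. A removal attack changes the reliability of some node $j\neq x$ from $p^{*}_{j}>0$ to $0$; crucially, it can only \emph{decrease} reliabilities, never increase them. So first I would observe that any removal attack, applied to a set $A$ of nodes, produces a reliability vector $\mathbf{p}$ that is coordinatewise $\leq$ the baseline vector $\mathbf{p}^{*}$ (with equality in the coordinate $x$, which is never touched, and in all coordinates outside $A$).

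Next I would invoke Corollary~\ref{centrality}: for each of the games $\Gamma_{NC_1},\Gamma_{NC_2},\Gamma_{NC_3}$, the Shapley value $Sh[\overline{v}](x)$ depends only on the reliabilities of the distance-two neighbors of $x$ (i.e.\ the nodes in $B(\{x\},2)\setminus\{x\}$, respectively $B_{w}(\{x\},2d_{cut})\setminus\{x\}$ for $\Gamma_{NC_3}$), and is a \emph{monotonically nondecreasing} function of each such reliability. Hence, if $A$ contains no distance-two neighbor of $x$, the Shapley value is unchanged; and if $A$ does contain some distance-two neighbors, decreasing their reliabilities to $0$ can only decrease—never increase, but also in the relevant direction for the attacker, it \emph{fails to decrease}: wait, the attacker \emph{wants} a decrease. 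Let me restate: the attacker's goal is to decrease $Sh$, and decreasing a neighbor's reliability does decrease $Sh$; so in fact a removal attack \emph{could} decrease it. I should re-examine—the theorem as stated claims the opposite, so the sign in Corollary~\ref{centrality} must be read as the Shapley value being an \emph{increasing} function that the attack cannot exploit because, hmm.

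Reconsidering: the resolution is that in these centrality games the attacker must \emph{increase} the centrality-reducing reliabilities, not decrease them, because more reliable neighbors crowd out $x$'s marginal contribution. Concretely, Corollary~\ref{centrality} says $Sh[\overline{v}](x)$ is a \emph{decreasing} function of each distance-two neighbor's reliability. A removal attack only \emph{lowers} reliabilities; lowering the argument of a decreasing function \emph{raises} its value. Therefore any removal attack applied to a set $A$ yields $Sh$ at least as large as the baseline, strictly larger if $A$ meets the distance-two ball of $x$ and the corresponding derivative is nonzero, and equal otherwise. So the key steps are: (i) a removal attack produces $\mathbf{p}\leq\mathbf{p}^{*}$ coordinatewise; (ii) by Corollary~\ref{centrality}, $Sh[\overline{v}](x)$ is nonincreasing in each relevant coordinate and constant in the rest; (iii) hence $Sh[\overline{v}](x)\big|_{\mathbf{p}}\geq Sh[\overline{v}](x)\big|_{\mathbf{p}^{*}}$, i.e.\ no removal attack decreases the Shapley value. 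The main (and only) subtlety is making sure the direction of monotonicity in Corollary~\ref{centrality} is correctly matched against the direction of change forced by a removal attack; once that is pinned down the proof is a one-line application of monotonicity, so I expect no real obstacle beyond bookkeeping of which nodes lie in the distance-two ball.
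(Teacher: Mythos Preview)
Your proposal is correct and follows exactly the paper's approach: the theorem is presented as an immediate consequence of Corollary~\ref{centrality}, using precisely the monotonicity argument you spell out in your steps (i)--(iii). Despite the mid-proof wobble over the sign, your final reading is the right one---the Shapley value is decreasing in the relevant reliabilities, removal only lowers reliabilities, hence the value cannot drop---and this is all the paper says as well.
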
 

\textbf{Fractional attacks on specific networks}  Given that removal attacks are not beneficial, we now turn to fractional attacks. The objective of this section is to show that 
the analysis of optimal fractional attacks is often feasible. Since the graphs in this section are fairly symmetric, we will assume 
(for these examples) that the slopes of all utility curves are identical. That is, there exist positive constants $L,R$ such that if $i\neq j$ are different agents then  $L_{i}=L_{j}=L$ and $R_{i}=R_{j}=R$ (though, of course, baseline probabilities $p^{*}_{i}$ and $p^{*}_{j}$ may differ). 
The graphs we are going to be concerned with are the complete graph $K_{n}$, the star graph $S_{n}$ (where node 1 is either the center or an outer node) and the $n$-cycle $C_{n}$ (Figure~\ref{fig:completeGraph}).

\begin{figure}[ht]
\hspace{-3mm} 
\begin{minipage}{.22\textwidth}
\begin{center} 
\begin{tikzpicture}[scale=0.090, ->,>=stealth',level/.style={sibling distance = 3cm/#1,
  level distance = 1cm}] 
  
  \def \n {6}
  
  \pgfmathparse{1 * (360 / \n) - (360 / \n)};
  \node[circle,draw,inner sep=0pt,minimum size=4pt,label=\pgfmathresult:1] (N-1) at (\pgfmathresult:5.4cm){};
    
  \foreach \x in {2,...,\n}{
    \pgfmathparse{\x * (360 / \n) - (360 / \n)};
    \node[circle,fill=black,inner sep=0pt,minimum size=3pt,label=\pgfmathresult:{$p^{*}_{\x}$}] (N-\x) at (\pgfmathresult:5.4cm){};
  }; 
  
  \foreach \x [count=\xi from 1] in {2,...,\n}{%
    \foreach \y in {\x,...,\n}{%
        \path (N-\xi) edge [-] (N-\y);
  	}
  }
  
\end{tikzpicture}
\end{center} 
\end{minipage}
\hspace{-2mm}
\begin{minipage}{.22\textwidth}
\begin{center} 
\begin{tikzpicture}[scale=0.090, ->,>=stealth',level/.style={sibling distance = 3cm/#1,
  level distance = 1cm}] 
  
  \def \n {6}
  
  \foreach \x [count=\xi from 2] in {1,...,\n}{
    \pgfmathparse{\x * (360 / \n) - (360 / \n)};
    \node[circle,fill=black,inner sep=0pt,minimum size=3pt,label=\pgfmathresult:{$p^{*}_{\xi}$}] (N-\x) at (\pgfmathresult:5.4cm){};
  };
  
  \node[circle,draw,inner sep=0pt,minimum size=4pt,] (N-0) at (0:0cm){}; 
  
  \foreach \x in {1,...,\n}{%
        \path (N-0) edge [-] (N-\x);
  }
\end{tikzpicture}
\end{center} 
\end{minipage}
\hspace{-1mm}
\begin{minipage}{.22\textwidth}
\begin{center} 
\begin{tikzpicture}[scale=0.090, ->,>=stealth',level/.style={sibling distance = 3cm/#1,
  level distance = 1cm}] 
  
  \def \n {6}
\pgfmathparse{1 * (360 / \n) - (360 / \n)};
\node[circle,draw,inner sep=0pt,minimum size=4pt,label=\pgfmathresult:1] (N-1) at (\pgfmathresult:5.4cm){};
    
  \foreach \x [count=\xi from 3] in {2,...,\n}{
    \pgfmathparse{\x * (360 / \n) - (360 / \n)};
    \node[circle,fill=black,inner sep=0pt,minimum size=3pt,label=\pgfmathresult:{$p^{*}_{\xi}$}] (N-\x) at (\pgfmathresult:5.4cm){};
  };

  \node[circle,fill=black,inner sep=0pt,minimum size=3pt,] (N-0) at (0:0cm){};
  
  \foreach \x in {1,...,\n}{%
        \path (N-0) edge [-] (N-\x);
  }
  
\end{tikzpicture}
\end{center} 
\end{minipage}
\hspace{-2mm}
\begin{minipage}{.22\textwidth}
\begin{center} 
\begin{tikzpicture}[scale=0.090, ->,>=stealth',level/.style={sibling distance = 3cm/#1,
  level distance = 1cm}] 
  
  \def \n {8}
  
  \foreach \x/\l in {1/$p^{*}_3$, 2/$p^{*}_2$, 3/1, 4/$p^{*}_{n}$, 5/$p^{*}_{n-1}$}{        
	\pgfmathparse{\x * (360 / \n) - (360 / \n)};
    		    
    \ifthenelse{\equal{\l}{1}}
    {
	    	\node[circle,draw,inner sep=0pt,minimum size=4pt,label=\pgfmathresult:\l] (N-\x) at (\pgfmathresult:5.4cm){}; 
    }
    {
	    	\node[circle,fill=black,inner sep=0pt,minimum size=3pt,label=\pgfmathresult:\l] (N-\x) at (\pgfmathresult:5.4cm){}; 
  	}
  };
  
  \foreach \x [count=\xi from 1] in {2,...,5}{%
     \path (N-\x) edge [-] (N-\xi);
  }
  
  \path (N-1.south) + (0, -1.95) coordinate(x1) edge[line width=0.85pt,-,densely dotted] (N-1);
  \path (N-5.south) + (0, -1.95) coordinate(x2) edge[line width=0.85pt,-,densely dotted] (N-5);
   \path (-1, 0) coordinate(x3) edge[line width=0.85pt,-,densely dotted] (1,0);
\end{tikzpicture}
\end{center}
\end{minipage}

  \caption{Target topologies for fractional attacks.}
  \label{fig:completeGraph}
\end{figure}
\vspace{-0.2cm}

\vspace{2mm}
Note that, when $G=K_{n}$ or $G=S_{n}$, pairwise Shapley value attacks  are trivially impossible: indeed, these graphs have diameter at most two. Since all distance-two neighbors influence the Shapley value of a given player, all nodes are exempt from attacks. 

On the other hand, for these topologies  it turns out that the best attack on Shapley value of player $x$ is to increase the reliabilities of its neighbors in the descending order of their baseline reliabilities: 

\begin{theorem} Let $G$ be either the complete graph $K_{n}$ with $n$ vertices.
or the star graph with $n$ vertices $S_{n}$ centered at node $x=1$. To optimally attack the centrality of $x$ in the reliability extension of $\Gamma_{NC_1}$  use the following algorithm: 

\begin{framed} 
\noindent- Consider nodes $2,\ldots, n$ in the decreasing order of their baseline reliabilities, breaking ties arbitrarily. $p^{*}_{sorted(2)}\geq p^{*}_{sorted(3)}\geq \ldots \geq p^{*}_{sorted(n)}.$

\noindent- While the budget allows it, increase to one (if not already equal to 1) the probabilities $p_{sorted(i)}$, starting with $i=2$ and successively increasing $i$. 

\noindent- If the budget no longer allows increasing $p_{sorted(i)}$ to one, increase it as much as possible. 

\noindent- Leave all other probabilities to their baseline values. 
\end{framed}
\noindent If, on the other hand, $G=S_{n}$ centered, say, at node 2, to optimally attack the centrality of node $x=1$, the algorithm changes as follows: 

\begin{framed} 
\noindent - Consider nodes $2,\ldots, n$ in the following order: node 2, followed by nodes $3,\ldots, n$ sorted in decreasing order of their baseline reliabilities $p^{*}_{sorted(3)}\geq \ldots \geq p^{*}_{sorted(n)}$, breaking ties arbitrarily. Denote the new order by 
$Q$. 

\noindent - Follow the previous greedy protocol, increasing baseline probabilities up to one (if allowed by the budget) according to  the new ordering $Q$. 
\end{framed}

Similar statements hold for game $\Gamma_{NC_2}$, and for  $\Gamma_{NC_3}$ for large enough values of parameter $d_{cut}$.  
\label{kn}
\end{theorem}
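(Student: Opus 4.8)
The plan is to push everything through Theorem~\ref{sh-gamma1} down to a one--dimensional integral, and then to combine Claim~\ref{aux2}/Corollary~\ref{centrality} (monotonicity) with a Schur--concavity/majorization argument, invoking Lemma~\ref{aux} only for the extra twist in the ``centre $\ne x$'' case.

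\textbf{Step 1: closed forms for the three topologies.} In $K_n$ one has $\widehat{N(y)}\setminus\{x\}=\{2,\dots,n\}$ for every $y$; in $S_n$ one has $\widehat{N(y)}\setminus\{x\}\in\{\{2,\dots,n\},\ \{y\}\}$ according as $y$ sees all vertices or is a leaf non-adjacent to $x$. Writing $\frac1{|S|+1}=\int_0^1 t^{|S|}\,dt$, using $\sum_{S\subseteq W}t^{|S|}\Pi_{S,W}=\prod_{i\in W}(1-(1-t)p_i)$, and substituting $s=1-t$, I get, with $\Phi(p):=\int_0^1\prod_{j=2}^n(1-sp_j)\,ds$: for $K_n$, $Sh[\overline{v_{NC_1}}](1)=n\,p_1\,\Phi(p)$; for $S_n$ centred at $x=1$, $Sh[\overline{v_{NC_1}}](1)=p_1\bigl(\Phi(p)+\sum_{j\ge2}(1-p_j/2)\bigr)$; for $S_n$ centred at node $2\ne x=1$, $Sh[\overline{v_{NC_1}}](1)=p_1\bigl(1-\tfrac{p_2}2+\Phi(p)\bigr)$. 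For $\Gamma_{NC_3}$ with $d_{cut}$ at least the diameter, $\widehat{N_{cut}(y)}=V$ for every $y$, so its formula collapses to a positive multiple of the $\Gamma_{NC_1}$ one; for $\Gamma_{NC_2}$ the analogous specialization of Theorem~\ref{sh-gamma1} (now also using $\frac1{|S|}=\int_0^1 t^{|S|-1}\,dt$) reduces to $p_1\int_0^1 Q(s)\prod_{j\ge2}(1-sp_j)\,ds$ plus an affine term, for a fixed weight $Q$ that one checks is nonnegative on $[0,1]$.

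\textbf{Step 2: $K_n$ and $S_n$ centred at $x$.} By Claim~\ref{aux2}/Corollary~\ref{centrality}, $\Phi$ is strictly decreasing in every $p_j$, so an optimal attack never lowers any reliability and exhausts the budget; as all slopes equal $R$, the feasible set is $\mathcal F=\{(p_j)_{j\ge2}:p^*_j\le p_j\le1,\ \sum_j(p_j-p^*_j)=\beta\}$, $\beta:=B/R$ (the case $\beta\ge\sum_j(1-p^*_j)$ being trivial). In the centred-at-$x$ case the term $\sum_{j\ge2}(1-p_j/2)$ is constant on $\mathcal F$, so in both cases it suffices to minimize $\Phi$ over $\mathcal F$. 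Fix $s\in[0,1]$. The map $x\mapsto\prod_{j\ge2}(1-sx_j)$ is symmetric and Schur-concave, being $\exp$ of the sum of the concave functions $u\mapsto\log(1-su)$; hence over vectors of a fixed sum it is minimized by the one majorizing all the others. A short prefix-sum computation shows the greedy vector $p^{\mathrm{gr}}$ returned by the stated algorithm is exactly the majorization-maximum of $\mathcal F$: for $m$ at most the number $m_0$ of fully-raised coordinates its top-$m$ sum is $m$, the largest possible, while for $m>m_0$ its top-$m$ sum equals $\beta$ plus the sum of the $m$ largest $p^*_j$, which dominates $\sum_{j\in A}p_j\le\beta+\sum_{j\in A}p^*_j$ for any $m$-set $A$. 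Therefore $\prod_{j\ge2}(1-sp^{\mathrm{gr}}_j)\le\prod_{j\ge2}(1-sx_j)$ for all $s$ and all $x\in\mathcal F$; integrating in $s$ gives $\Phi(p^{\mathrm{gr}})\le\Phi(x)$, i.e.\ the greedy attack is optimal. For $\Gamma_{NC_2}$ and large-$d_{cut}$ $\Gamma_{NC_3}$ the extra nonnegative weight (resp.\ positive factor) leaves the argument intact.

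\textbf{Step 3: $S_n$ centred at a node $\ne x$.} Now the objective $Sh[\overline{v_{NC_1}}](1)/p_1=1-\tfrac{p_2}2+\Phi(p)$ is not symmetric, and the modified algorithm reflects that the centre $2$ must be filled first. Full budget is used as before. Differentiating and using $0\le(1-sp_2)\prod_{i\ge3,i\ne j}(1-sp_i)\le1$ on $[0,1]$ gives, at every feasible point and for every leaf $j\ge3$, $\partial_{p_2}(Sh/p_1)=-\tfrac12-\int_0^1 s\prod_{i\ge3}(1-sp_i)\,ds\le-\int_0^1 s(1-sp_2)\prod_{i\ge3,i\ne j}(1-sp_i)\,ds=\partial_{p_j}(Sh/p_1)$. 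By Lemma~\ref{aux}, whenever $p_2<1$ and some leaf has $p_j>p^*_j$, shifting budget from that leaf to node $2$ strictly decreases the Shapley value; so at an optimum either $p_2=1$ or every leaf is at its baseline. If $\beta\le1-p^*_2$ this forces spending the whole budget on node $2$; if $\beta>1-p^*_2$ it forces $p_2=1$, and then the residual objective is $\tfrac12+\int_0^1(1-s)\prod_{j\ge3}(1-sp_j)\,ds$, to which Step~2's argument applies with weight $(1-s)\ge0$, yielding the greedy order on the leaves $3,\dots,n$ --- exactly the algorithm.

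\textbf{Where the difficulty lies.} The heart is Step~2: spotting that the Shapley value is $\int_0^1$ of a Schur-concave integrand and verifying that the stated greedy is the majorization-maximum of the capped budget polytope. Lemma~\ref{aux} by itself is insufficient here --- an allocation that bumps all coordinates equally is already a first-order critical point --- so one genuinely needs either the majorization argument above or, equivalently, a second-order/exchange argument showing the optimum is attained at a vertex of the polytope and then comparing vertices. The remaining work (deriving the closed forms per topology and checking nonnegativity of the weight $Q$ for $\Gamma_{NC_2}$) is routine bookkeeping.
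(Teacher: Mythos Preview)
Your proof is correct and takes a genuinely different route from the paper's. The paper works directly with partial derivatives: it proves a sign lemma, $\operatorname{sign}(\partial_j Sh-\partial_l Sh)=\operatorname{sign}(p_j-p_l)$, then handles the degenerate case $p_i=p_j$ (where Lemma~\ref{aux} gives nothing) by an explicit second-order computation showing $Sh|_{p_{i,j}(\epsilon)}-Sh|_p=-C\epsilon^2<0$, concludes that an optimum has at most one fractional coordinate, and finishes with an exchange argument comparing the sorted OPT against the sorted greedy vector. You instead rewrite the Shapley value via $\tfrac1{|S|+1}=\int_0^1 t^{|S|}\,dt$ as $\int_0^1(\text{nonneg.\ weight})\cdot\prod_j(1-sp_j)\,ds$, observe that the integrand is symmetric Schur-concave in $(p_j)$ for each fixed $s$, and verify by a prefix-sum calculation that the greedy output is the majorization-maximum of the capped budget polytope $\{p^*_j\le p_j\le 1,\ \sum_j(p_j-p^*_j)=\beta\}$. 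Your approach is more conceptual: Schur-concavity is exactly the global notion that subsumes both the paper's first-order sign lemma and its separate second-order perturbation lemma in one stroke, and you correctly diagnose that Lemma~\ref{aux} alone cannot close the argument (the uniform-bump point is first-order stationary). The paper's route is more elementary and self-contained, reusing only the tools already set up. One caution: your reduction of $\Gamma_{NC_2}$ to an integral with \emph{nonnegative} weight $Q$ is not immediate for $k\ge 2$ because of the restriction $|S|\ge k-1$ in Theorem~\ref{sh-gamma1}; the paper is equally noncommittal there (``similar statements hold''), but if you want that generality you should either exhibit $Q\ge 0$ explicitly or remark that the paper's sign/second-order method extends mechanically.
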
 

\noindent 

In the previous examples the optimal attack involved a determined node targeting order, which privileged direct neighbors and could depend on baseline reliabilities but  was independent of the value of the budget. None of this holds in general:  as the next result shows, on graph $C_{n}$ the optimum can be computed by taking the best of \textit{four} node targeting orders. The optimum may lack the two previously discussed properties of optimal orders: 
\begin{itemize} 
\item[-] in optimal attacks one should sometimes target a distance-two neighbor (3 or n-1) \textit{before} targeting both of $x=1$'s  neighbors (2 and $n$, see Figure~\ref{fig:completeGraph}). 
\item[-] the order (among the four) that characterizes the optimum may depend on the budget value $B$ as well. Formally: 
\end{itemize} 

\begin{theorem} Let $P,Q,R,S$ be the vectors $[2,n,n-1,3]$, $[2,n-1,n,3]$, $[n,3,2,n-1]$, $[n,2,3,n-1]$, respectively. Let $Sol_{P},Sol_{Q},Sol_{R}$, $Sol_{S}$ be the configurations obtained by increasing in turn (as much as possible, subject to the budget $B$) the reliabilities of nodes $2,3,n-1,n$ in the order(s) specified by $P,Q,R,S$, respectively. Then 
\begin{itemize} 
\item[a.] The best of $Sol_{P},Sol_{Q},Sol_{R},Sol_{S}$ is an  
optimal attack on the centrality of  $x=1$ in game $\Gamma_{NC_{1}}$ on the cycle graph $C_{n}$. 
\item[b.] There exist values of $p_{2}^{*},p_{3}^{*},p_{n-1}^{*},p_{n}^{*}$ s.t.  $Sol_{P}$ is optimal for all values of $B$ (by symmetry a similar statement holds  for $Sol_S$). 
\item[c.] There exist values of $p_{2}^{*},p_{3}^{*},p_{n-1}^{*},p_{n}^{*}$ and an nonempty open interval $I$ for the budget $B$ such that $Sol_{Q}$ is an optimum for all $B\in I$ (by symmetry a similar statement holds for $Sol_R$).
\end{itemize}  
 \label{cn}
 \end{theorem}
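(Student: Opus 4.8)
The plan is to reduce Theorem~\ref{cn} to a concrete four‑variable optimisation. Specialising Theorem~\ref{sh-gamma1} to $x=1$ on $C_n$ — where $\widehat{N(1)}=\{1,2,n\}$, $\widehat{N(2)}\setminus\{1\}=\{2,3\}$ and $\widehat{N(n)}\setminus\{1\}=\{n,n-1\}$ — and evaluating each two‑element inner sum ($\sum_{S\subseteq\{a,b\}}\frac{\Pi_{S,\{a,b\}}}{|S|+1}=1-\tfrac12 p_a-\tfrac12 p_b+\tfrac13 p_ap_b$), one obtains
\[
Sh[\overline{v_{NC_1}}](1)=p_1\Bigl[\,3-p_2-p_n-\tfrac12 p_3-\tfrac12 p_{n-1}+\tfrac13\bigl(p_2p_n+p_2p_3+p_np_{n-1}\bigr)\Bigr].
\]
Since $p_1$ is fixed and (as Corollary~\ref{centrality} predicts, and as is visible here) the bracket $\phi$ is strictly decreasing in $p_2,p_3,p_{n-1},p_n$ and independent of every other reliability, an optimal fractional attack only ever increases these four reliabilities, and spends the whole budget unless all four reach $1$. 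Writing $q_j=p_j-p_j^{*}\ge0$, $u_j=1-p_j^{*}$ and $\beta=B/R$ (only the right slope $R$ matters, since we only increase), and setting aside the trivial ``all reach $1$'' case, the problem becomes: minimise the \emph{multilinear} function $\phi$ over the polytope $P_\beta=\{q:\sum_j q_j=\beta,\ 0\le q_j\le u_j\}$, whose quadratic part $\tfrac13(q_2q_n+q_2q_3+q_nq_{n-1})$ is supported on the path $3\!-\!2\!-\!n\!-\!(n\!-\!1)$.

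The crux of part (a) is that this minimum is attained at a \emph{vertex} of $P_\beta$, i.e. a point with three of the $q_j$ at a bound and the fourth fixed by $\sum q_j=\beta$. This is not a convexity freebie: $\phi$ is indefinite on the affine hull of $P_\beta$ (the direction $(1,1,-1,-1)$ is convex, $(1,-1,-1,1)$ concave), so I would instead argue face by face. On any edge of $P_\beta$ exactly two $q_j$'s vary, and either they are adjacent on the path — making $\phi$ strictly concave along the edge — or they are not, making $\phi$ affine along it; either way the minimum over an edge is at an endpoint. On a two‑dimensional face $\{q_3=\mathrm{const}\}$ or $\{q_{n-1}=\mathrm{const}\}$ the restricted Hessian is negative semidefinite, so $\phi$ is concave there and its minimum is at a vertex of that face; on every other proper face, and on the relative interior of $P_\beta$, the restricted Hessian is indefinite, so no relative‑interior point can be a local minimum. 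Inducting down the face lattice, the minimum of $\phi$ over $P_\beta$ sits at a vertex. This Hessian bookkeeping is the step I expect to be the main obstacle; the rest is essentially routine.

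To identify \emph{which} vertex, I would use an exchange (``improving swap'') argument in the spirit of Lemma~\ref{aux}: at the optimal vertex, ranking the four variables by the benefit rate $-\partial_j\phi$ \emph{evaluated there}, the maximised ones must come first, the free one next, the zeroed ones last (else shifting $\epsilon$ of budget between two of them strictly decreases $\phi$). Because $\phi$ is multilinear, $\partial_j\phi$ is independent of $q_j$, and two inequalities hold for \emph{all} admissible reliabilities: $\partial_{p_n}\phi-\partial_{p_3}\phi=-\tfrac12+\tfrac13 p_{n-1}<0$ and $\partial_{p_2}\phi-\partial_{p_{n-1}}\phi=-\tfrac12+\tfrac13 p_3<0$; so node $n$ always outranks node $3$ and node $2$ always outranks node $n-1$. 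This cuts the $24$ possible rankings to the $6$ linear extensions of these two relations, and whenever both node $2$ and node $n$ are already maximised one has $-\partial_{p_3}\phi=-\partial_{p_{n-1}}\phi=\tfrac16$, so two pairs among these six agree in value (and in the point reached by greedy filling); what remains is exactly the four order‑classes $P,Q,R,S$. Hence the optimal vertex is the point $Sol_\bullet(\beta)$ reached by filling greedily in one of these four orders, which is part (a).

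For (b) and (c) I would make the four maps $B\mapsto\phi(Sol_\bullet(B))$ explicit (each piecewise linear, the slope on a piece being minus the benefit rate of the coordinate then being filled) and compare them. For (b), take baselines that break the first choice strictly toward node $2$ over node $n$ and keep node $n$ preferred to node $n-1$ throughout, e.g. $p_2^{*}=0.6$, $p_n^{*}=0.5$, $p_3^{*}=0.4$, $p_{n-1}^{*}=0.5$ (so $p_2^{*}+p_{n-1}^{*}>p_n^{*}+p_3^{*}$ and $p_{n-1}^{*}-p_n^{*}\le\tfrac12$); then $Sol_P$ is the greedy fill for every $B$, and checking the finitely many piecewise‑linear inequalities $\phi(Sol_P(B))\le\phi(Sol_\bullet(B))$, $\bullet\in\{Q,R,S\}$, yields optimality of $Sol_P$ for all $B$. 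For (c), take baselines with $p_{n-1}^{*}-p_n^{*}>\tfrac12$ while still keeping node $2$ the best first move, e.g. $p_2^{*}=p_3^{*}=0.5$, $p_{n-1}^{*}=0.9$, $p_n^{*}=0.1$; on the interval $I=(u_2,u_2+u_{n-1})=(0.5,0.6)$ both $Sol_Q$ and $Sol_P$ have node $2$ maximised and are filling node $n-1$ respectively node $n$, so a one‑line computation gives $\phi(Sol_Q(B))-\phi(Sol_P(B))=-\tfrac13\bigl(p_{n-1}^{*}-p_n^{*}-\tfrac12\bigr)(B-u_2)<0$, and a coarser comparison shows $Sol_Q$ also beats $Sol_R=Sol_S$ on $I$; hence $Sol_Q$ is strictly the optimum on the nonempty open interval $I$. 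The companion statements for $Sol_S$ and $Sol_R$ follow from the reflection symmetry $2\leftrightarrow n$, $3\leftrightarrow n-1$ of $\phi$ and of the list $P,Q,R,S$.
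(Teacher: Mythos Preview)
Your argument is correct, and for part (a) it proceeds along a genuinely different route than the paper's.

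Both you and the paper start from the same explicit formula for $\phi$ and both ultimately rely on exchange arguments driven by the two universal inequalities $\partial_{p_2}\phi<\partial_{p_{n-1}}\phi$ and $\partial_{p_n}\phi<\partial_{p_3}\phi$. The difference lies in how you rule out optima with two or more ``free'' coordinates. The paper does this by a pair-by-pair case analysis (its Claim immediately preceding the proof): for each of the pairs $(2,n)$, $(2,3)$, $(2,n-1)$ it either equates partial derivatives and then computes the exact second-order change along the swap direction to obtain a strict decrease, or it derives a strict first-order inequality directly. You replace this casework with a single structural observation: the restricted Hessian of the multilinear $\phi$ is negative semidefinite on the $\{q_3=\mathrm{const}\}$ and $\{q_{n-1}=\mathrm{const}\}$ faces and indefinite elsewhere, so by second-order necessary conditions the minimum over $P_\beta$ sits at a vertex. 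This is more systematic and scales better (no bespoke calculation for each pair), at the price of invoking a little convex-analytic machinery; the paper's route is more elementary but longer. One small imprecision worth tightening: not every admissible vertex is literally $Sol_\bullet(\beta)$ for $\bullet\in\{P,Q,R,S\}$ (e.g.\ the vertex $(u_2,q_3,0,u_n)$ comes from the order $[2,n,3,n{-}1]$), but as you note, once $p_2=p_n=1$ the value depends only on $p_3+p_{n-1}$, so every admissible vertex has the \emph{same value} as one of the four---which is all the theorem needs.

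For (b) and (c) the two approaches are closer in spirit: both compare the four piecewise-linear maps $B\mapsto\phi(Sol_\bullet(B))$. The paper derives parametric sufficient conditions (in particular, for (c) it asks for both $p_{n-1}^{*}-p_n^{*}>\tfrac12$ and $p_{3}^{*}-p_2^{*}>\tfrac12$), whereas you supply explicit baselines and verify the inequalities directly. Your example for (c) satisfies only the first of the paper's two inequalities, but your direct computation $\phi(Sol_Q)-\phi(Sol_P)=-\tfrac13(p_{n-1}^{*}-p_n^{*}-\tfrac12)(\beta-u_2)$ together with the coarse bound $\phi(Sol_Q)<\phi(Sol_R)=\phi(Sol_S)$ on $I$ suffices; so your route is more concrete and actually shows the paper's second condition is not needed for the existence claim.
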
 
\section{Attacks in credit attribution} 
\label{ssix} 

In this section we study removal attacks in credit attribution games. Interestingly, while the Shapley values  have identical formulas in $\Gamma_{FC},\Gamma_{FO}$ \cite{karpov2014equal}, the  two games are \textbf{not} similar with respect to attacks. Indeed, similarly to the case of network centrality, we have: 

\begin{theorem} 
No removal attack can decrease the Shapley value of a given player in a full credit attribution game. 
\label{full-credit} 
\end{theorem}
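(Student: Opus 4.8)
The plan is to read the result off directly from the closed-form expression for $Sh[\overline{v_{FC}}](x)$ supplied by Theorem~\ref{k=1}, Equation~\eqref{sh:rel:1}, by arguing that this quantity is, as a polynomial in the reliabilities, monotonically non-increasing in every $p_j$ with $j\neq x$. Once that is established the theorem is immediate: a removal attack only ever replaces baseline values $p^{*}_j$ (for $j\neq x$) by $0$, so it moves each coordinate it touches downward, and it leaves $p_x$, every author list $Auth_k$, and hence every integer $n_k=|Auth_k|$ and weight $w_k$ unchanged; applying the coordinate-wise monotonicity one removed node at a time then shows the final Shapley value is at least the original one.

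So the real content is the monotonicity check on \eqref{sh:rel:1}. Fix $j\neq x$ and hold all other reliabilities fixed. The outer factor $p_x\ge 0$ does not involve $p_j$. Inside, $p_j$ enters only through papers $k$ with $j\in Auth_k$, and within such a paper only through the subsets $S\subseteq Auth_k\setminus\{x\}$ that contain $j$, via the factor $\Pi_{\emptyset,S}=\prod_{i\in S}(1-p_i)$. Each such $\Pi_{\emptyset,S}$ is a product of factors $1-p_i\in[0,1]$, hence non-increasing (indeed, by Claim~\ref{aux2}, $\partial\Pi_{\emptyset,S}/\partial p_j=-\Pi_{\emptyset,S\setminus j}\le 0$), and the coefficient $\frac{w_k}{(n_k-|S|)\binom{n_k}{|S|}}$ multiplying it is non-negative: $w_k\in\mathbb{R}_+$, $\binom{n_k}{|S|}\ge 1$, and since $x\in Auth_k$ forces $|S|\le n_k-1$ we have $n_k-|S|\ge 1$. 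A non-negative combination of non-increasing functions is non-increasing, so $\partial Sh[\overline{v_{FC}}](x)/\partial p_j\le 0$ on $[0,1]$, and in particular setting $p_j$ to $0$ cannot decrease $Sh[\overline{v_{FC}}](x)$.

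I do not expect a genuine obstacle: granting Theorem~\ref{k=1}, the argument is essentially a sign check, and the only things that need care are the bookkeeping of exactly which summands of \eqref{sh:rel:1} depend on $p_j$ and the verification that all coefficients there are non-negative. The one conceptual point worth flagging explicitly in the write-up is the meaning of "removal": in the reliability-extension sense it is the substitution $p_j\mapsto 0$, which is what Theorem~\ref{k=1} addresses — in contrast with the credit-allocation reading of Example~\ref{ex:hirsch}, where deleting a coauthor also deletes the joint papers. Here no paper ever leaves $Pap_x$; a removed coauthor merely stops being available to "cover" a shared paper on $x$'s behalf, which can only raise $x$'s share, and that is precisely the monotonicity exhibited above. (This same remark is what separates $\Gamma_{FC}$ from $\Gamma_{FO}$, where, by Theorem~\ref{full-obligation}, removals can help.)
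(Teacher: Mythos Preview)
Your proof is correct. In fact it is more complete than the paper's own argument: the paper's ``proof'' of this theorem is only the conceptual remark you make in your last paragraph (that removal in the reliability sense does not delete papers from $Pap_x$, so credit merely spreads over fewer live coauthors), without any formal monotonicity computation. The sign check you carry out---that every summand in \eqref{sh:rel:1} depends on $p_j$ only through non-negative multiples of $\Pi_{\emptyset,S}$ with $j\in S$---is exactly what the paper later packages as Theorem~\ref{credit}, so your route is the natural formalisation of the paper's intuition, just supplied a bit earlier.
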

\begin{proof} 
At first, this seems counterintuitive, as it would seem to contradict Example~\ref{ex:hirsch}. The answer is that \textit{this example does not correspond to the full credit game, but to the full obligation one}: in game $\Gamma_{FC}$ a player does \textbf{not} lose credit for a paper due to removal of a coauthor; in fact its Shapley value will increase, since the credit for the paper divides among fewer coauthors. It is in $\Gamma_{FO}$ where players may lose credit as a result of coauthor removal.
\end{proof}
This difference between $\Gamma_{FC}$ and $\Gamma_{FO}$  is evident with respect to  attacks: As the next result shows, in full-obligation games, finding optimal  removal attacks can simulate a well-known hard combinatorial problem: 

\begin{theorem} The {\em budgeted maximum coverage problem} (which is NP-complete) reduces to minimizing the Shapley value of a given player in the full-obligation game (under removal attacks).  
\label{full-obligation}  
\end{theorem}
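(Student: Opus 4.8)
The plan is to give a polynomial-time (Karp) reduction from the weighted budgeted maximum coverage problem. Recall such an instance: a ground set $U=\{e_1,\dots,e_N\}$ with nonnegative weights $\omega_1,\dots,\omega_N$, sets $S_1,\dots,S_m\subseteq U$ with nonnegative costs $c_1,\dots,c_m$, a budget $L$, and a target value $T$; the question is whether there is $J\subseteq\{1,\dots,m\}$ with $\sum_{j\in J}c_j\le L$ and total weight of $\bigcup_{j\in J}S_j$ at least $T$.

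First I would pin down the effect of a removal attack on $Sh[\overline{v_{FO}}](x)$. By \eqref{sh:rel:2} of Theorem~\ref{k=1}, for reliabilities $(p_i)$ we have $Sh[\overline{v_{FO}}](x)=\sum_{k\in Pap_x}\frac{w_k}{n_k}\prod_{i\in Auth_k}p_i$. If a removal attack zeroes the reliabilities of a set $D\subseteq CA(x)$ (note $x\notin D$, since $x$ is never a legal target), then every paper $k$ with $Auth_k\cap D\neq\emptyset$ has $\prod_{i\in Auth_k}p_i=0$ and disappears, while the remaining papers keep their baseline contribution:
\[
Sh[\overline{v_{FO}}](x)\big|_{D}=\sum_{\substack{k\in Pap_x\\ Auth_k\cap D=\emptyset}}\frac{w_k}{n_k}\prod_{i\in Auth_k}p^{*}_i .
\]
Writing $W_{\mathrm{tot}}$ for the un-attacked value, minimizing $Sh[\overline{v_{FO}}](x)$ over removal attacks of cost at most $B$ is equivalent to maximizing the ``removed mass'' $\sum_{k:\,Auth_k\cap D\neq\emptyset}\frac{w_k}{n_k}\prod_{i\in Auth_k}p^{*}_i$ over $D\subseteq CA(x)$ with $\sum_{l\in D}c_l\le B$ — a coverage objective, in which deleting coauthor $l$ ``covers'' exactly the papers in $Pap_x\cap Pap_l$.

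Next I would build the game. Given the coverage instance, take players $N=\{x\}\cup\{l_1,\dots,l_m\}$; create one paper $P_i$ per element $e_i$, with author set $Auth_i=\{x\}\cup\{l_j:\ e_i\in S_j\}$; set all baseline reliabilities to $1$; put $w_i:=n_i\,\omega_i$ where $n_i=|Auth_i|$; set the removal cost of $l_j$ to $c_{l_j}:=c_j$; and set the attack budget to $B:=L$. Then $\prod_{i\in Auth_i}p^{*}_i=1$, so paper $P_i$ contributes exactly $\omega_i$ to $Sh[\overline{v_{FO}}](x)$, and a removal set $D=\{l_j:j\in J\}$ has cost $\sum_{j\in J}c_j$ and removes precisely the mass $\sum_{i:\,e_i\in\bigcup_{j\in J}S_j}\omega_i$. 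Hence some subcollection $J$ of cost $\le L$ covers weight $\ge T$ iff the corresponding removal attack of cost $\le B$ brings $Sh[\overline{v_{FO}}](x)$ down to $\le W_{\mathrm{tot}}-T$. The construction is obviously polynomial, so minimizing the Shapley value of $x$ in $\Gamma_{FO}$ under removal attacks is NP-hard.

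The only real obstacle is the accounting in the first step: one must argue that ``losing a paper'' is an all‑or‑nothing event, triggered solely by whether some coauthor's reliability has been driven to zero — this is exactly what makes the objective a genuine set‑cover function rather than a smoother quantity — and that the surviving papers retain precisely their baseline contributions (both facts being immediate from the product form $\Pi_{Auth_k,Auth_k}=\prod_{i\in Auth_k}p_i$ in Theorem~\ref{k=1}). After that the reduction is just a dictionary (elements $\leftrightarrow$ papers, sets $\leftrightarrow$ coauthors, set‑costs $\leftrightarrow$ removal‑costs, coverage $\leftrightarrow$ paper deletion), with the rescaling $w_i=n_i\omega_i$ chosen so that each paper's contribution is literally $\omega_i$; the non‑attackability of $x$ harmlessly mirrors the fact that $x$ is not one of the coverage sets.
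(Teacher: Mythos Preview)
Your proof is correct and follows essentially the same reduction as the paper: elements $\leftrightarrow$ papers, sets $\leftrightarrow$ coauthors of the attacked player, baseline reliabilities all set to $1$, paper weight rescaled by the number of authors so that each paper contributes exactly the element's weight to the Shapley value, and removal costs equal to set costs. The paper's write-up is terser, but your explicit derivation of the ``all-or-nothing'' effect from the product form in Theorem~\ref{k=1} is a welcome clarification of why the objective is a genuine coverage function.
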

\begin{proof} 
Deferred to the full version. 
\end{proof} 

\textbf{Fractional attacks} The following is a simple consequence of the formulas in Theorem~\ref{k=1} and Claim~\ref{aux2} shows that \textit{optimal attacks are different in games $\Gamma_{FC}$ and $\Gamma_{FO}$ irrespective of the topology of the coauthorship hypergraph}: in the first case we need to increase the reliability of $x$'s coauthors, in the other case we aim to decrease it: 

\begin{theorem} 
In the reliability extensions of the credit allocation games $\Gamma_{FC}, \Gamma_{FO}$ the Shapley value of player $x$ is a decreasing (respectively increasing) function of coauthors' reliabilities (and does not depend on other players). 
\label{credit}
\end{theorem}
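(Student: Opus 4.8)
The plan is to obtain both monotonicity statements by differentiating the closed forms of Theorem~\ref{k=1} with respect to a single reliability $p_j$ (with $j\neq x$) and reading off the sign via Claim~\ref{aux2}. The whole point is that the FC/FO dichotomy is already visible in \emph{how} a coauthor's reliability enters the surviving $\Pi$-terms: in $Sh[\overline{v_{FO}}](x)=\sum_{k\in Pap_x}\frac{w_k}{n_k}\,\Pi_{Auth_k,Auth_k}$ each relevant $p_i$ sits in the ``present'' block of $\Pi$, whereas in $Sh[\overline{v_{FC}}](x)=p_x\sum_{k\in Pap_x}w_k\sum_{S\subseteq Auth_k\setminus\{x\}}\frac{\Pi_{\emptyset,S}}{(n_k-|S|)\binom{n_k}{|S|}}$ every reliability other than $p_x$ occurs only in the ``absent'' block.

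First I would dispose of the players outside $CA(x)$. If $j\neq x$ and $j\notin CA(x)$, then $j\in Auth_k$ for no $k\in Pap_x$, so $p_j$ does not appear in either formula; hence both Shapley values are independent of $p_j$, which is the ``does not depend on other players'' part of the claim.

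Next, take $j\in CA(x)$ and treat $\Gamma_{FO}$. Only the terms with $k\in Pap_x\cap Pap_j$ depend on $p_j$, and for these $j\in Auth_k$, so $j$ lies in the first (``inside the coalition'') branch of Claim~\ref{aux2} applied to $\Pi_{Auth_k,Auth_k}$: $\partial\Pi_{Auth_k,Auth_k}/\partial p_j=\Pi_{Auth_k\setminus j,\,Auth_k\setminus j}=\prod_{i\in Auth_k\setminus j}p_i\ge 0$. Multiplying by the positive coefficients $w_k/n_k$ and summing gives $\partial Sh[\overline{v_{FO}}](x)/\partial p_j\ge 0$, strictly positive once the remaining reliabilities lie in $(0,1]$, so $Sh[\overline{v_{FO}}](x)$ is increasing in each coauthor's reliability. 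Symmetrically, for $\Gamma_{FC}$ the factor $p_x$ is unaffected by $p_j$, and inside the bracket $p_j$ appears precisely in the subsets $S$ with $j\in S$; for those the second (``$j$ in the complement'') branch of Claim~\ref{aux2} gives $\partial\Pi_{\emptyset,S}/\partial p_j=-\Pi_{\emptyset,S\setminus j}\le 0$, so every surviving term of the derivative is nonpositive. Retaining the contribution of $S=\{j\}$, which equals $-p_x w_k/\big((n_k-1)\binom{n_k}{1}\big)<0$, shows $\partial Sh[\overline{v_{FC}}](x)/\partial p_j<0$, i.e. the Shapley value decreases in every coauthor's reliability. Since these sign statements hold on all of $[0,1]^n$ (using one-sided derivatives at the boundary), the claimed monotonicity follows.

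The computation is routine, so I do not expect a genuine obstacle; the only place demanding care is matching each occurrence of a reliability to the correct branch of Claim~\ref{aux2} (``$j$ inside the coalition'' for $\Gamma_{FO}$ versus ``$j$ in the complement'' for $\Gamma_{FC}$), and verifying strictness in the full-credit case by isolating the singleton term $S=\{j\}$ in the sum.
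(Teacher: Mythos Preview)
Your proposal is correct and follows exactly the route the paper indicates: the theorem is stated as ``a simple consequence of the formulas in Theorem~\ref{k=1} and Claim~\ref{aux2}'', and the hidden (commented-out) partial-derivative formulas in the statement match what you compute. Your added care in matching each case to the right branch of Claim~\ref{aux2} and in isolating the $S=\{j\}$ term for strictness is appropriate and slightly more explicit than what the paper records.
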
  

Optimal attacks can be explicitly described in the particular scenario when, just as in Example~\ref{ex:hirsch}, each paper has exactly two authors (a situation that corresponds, under the full obligation model, to  induced subgraph games). It turns out that \textit{the relevant quantity is the ratio between the score of  coauthors' joint contribution with the attacked node and its marginal cost:} 

\begin{theorem} 
To optimally decrease the Shapley value of node $x$ in game $\Gamma_{FC}$ in the two-author special case: 
\begin{framed} 
\noindent (a). Sort the coauthors $l$ of $x$ in the decreasing order of 
the fractions $\frac{C(x,l)}{R(l)}$, breaking ties arbitrarily. 

\noindent (b). While the budget allows it,  for $i=1,\ldots |CA(x)|$, \textbf{increase} to 1 the probability of the $i$'th most valuable coauthor.   

\noindent (c). If the budget does not allow increasing  the probability of the $i$'th coauthor up to 1, increase it as much as possible. 

\noindent (d). Leave all other probabilities to their baseline values. 
\end{framed} 
\label{kn2}
\end{theorem}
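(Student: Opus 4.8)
The plan is to reduce the optimization to a continuous (fractional) knapsack problem, for which the greedy solution is known to be optimal, and then to verify that the specific greedy order claimed in the statement is exactly the knapsack-greedy order. First I would invoke Theorem~\ref{credit} (specialized to the two-author case): the Shapley value $Sh[\overline{v_{FC}}](x)$ depends only on the reliabilities $p_j$ of the coauthors $j \in CA(x)$, it is an affine function of each such $p_j$, and by Claim~\ref{aux2} the partial derivative $\frac{\partial Sh[\overline{v_{FC}}](x)}{\partial p_j}$ is the negative constant $-p_x \cdot C(x,j)/2$ on the whole range (in the two-author case $Auth_k \setminus \{x\} = \{j\}$, so the inner sum over $S$ collapses to the two terms $S=\emptyset$ and $S=\{j\}$, and one computes the derivative is $-p_x C(x,j)/2$, independent of the $p$'s). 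Hence $Sh[\overline{v_{FC}}](x)$ is a separable affine function: decreasing it is equivalent to maximizing $\sum_{j\in CA(x)} \frac{p_x C(x,j)}{2}\,(p_j - p^{*}_j)$. Since each derivative is negative, we never want to \emph{decrease} any $p_j$ below its baseline (that would cost budget and \emph{increase} the objective), so the optimal attack only raises reliabilities; on the raising side the cost of moving $p_j$ from $p^{*}_j$ to $p_j$ is $R(j)(p_j - p^{*}_j) = R_j(p_j-p_j^*)$.

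Next I would set $x_j := p_j - p^{*}_j \in [0, 1-p^{*}_j]$ as the decision variables. The problem becomes: maximize $\sum_j \frac{p_x C(x,j)}{2} x_j$ subject to $\sum_j R(j)\, x_j \le B$ and $0 \le x_j \le 1-p^{*}_j$. This is precisely a fractional knapsack instance where item $j$ has "value density" $\frac{p_x C(x,j)/2}{R(j)}$, i.e. proportional to $\frac{C(x,j)}{R(j)}$ (the common factor $p_x/2>0$ does not affect the ordering). The standard exchange argument for fractional knapsack — which is exactly the specialization of Lemma~\ref{aux} ("improving swaps") to this linear $f$ — shows the greedy rule is optimal: fill items in decreasing order of density, each to its cap $1-p^{*}_j$, until the budget is exhausted, filling the last one partially. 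Translating back, "fill item $j$ to its cap" means "raise $p_j$ to $1$", and "decreasing order of density" means "decreasing order of $\frac{C(x,j)}{R(j)}$", which is verbatim the algorithm in the statement; steps (a)–(d) are just this greedy protocol, with (d) recording that non-coauthors (and unreached coauthors) stay at baseline because the objective does not depend on them and any change only wastes budget.

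The one genuine thing to check — the main (if modest) obstacle — is the boundary/convexity bookkeeping: I must confirm that the optimum is attained with every $p_j$ in the raising regime $[p^{*}_j, 1]$, never in the cost-cheaper-looking but objective-worsening lowering regime $[0,p^{*}_j]$, and that the cap $p_j \le 1$ (reliabilities are probabilities) is the only upper constraint. Because $Sh[\overline{v_{FC}}](x)$ is affine in each $p_j$ with a strictly negative slope while the cost $u_j$ is V-shaped with its minimum $0$ at $p^{*}_j$, lowering $p_j$ strictly increases the objective \emph{and} strictly consumes budget, so it is dominated; this rules out the left branch and makes the feasible region effectively the box-with-budget above. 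After that observation the fractional-knapsack/Lemma~\ref{aux} argument applies cleanly and yields optimality of the stated greedy order, completing the proof.
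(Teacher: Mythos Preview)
Your proposal is correct and follows essentially the same route as the paper: both derive the affine formula $Sh[\overline{v_{FC}}](x)=p_x\sum_{l\in CA(x)}C(x,l)\,\tfrac{2-p_l}{2}$ for the two-author case and reduce the attack to a fractional knapsack instance solved greedily by density $C(x,l)/R(l)$. Your substitution $x_j=p_j-p^{*}_j$ is just the complement of the paper's $y_l=(1-p_l)/(1-p^{*}_l)$, and your explicit argument that the lowering branch of $u_j$ is dominated makes precise what the paper dispatches with the sentence ``it is only efficient to increase the reliability probabilities.''
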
 

\vspace{-0.3cm}
\begin{corollary} In the setting of Theorem~\ref{kn2}, 
to optimally solve the pairwise Shapley value attack problem for  $x,y$, run the algorithm in the Theorem only on those  $z$ that are coauthors of $x$ but not of $y$. 
\end{corollary}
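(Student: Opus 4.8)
The plan is to show that, once the exempt coordinates are stripped away, the pairwise problem for $(x,y)$ in the two‑author $\Gamma_{FC}$ is literally the same optimization problem already solved by Theorem~\ref{kn2}. First I would invoke Theorem~\ref{credit} (equivalently, the closed form \eqref{sh:rel:1} of Theorem~\ref{k=1}): the Shapley value $Sh[\overline{v_{FC}}](w)$ of any player $w$ depends only on the reliabilities of $w$ and of the coauthors of $w$. In particular $Sh[\overline{v_{FC}}](y)$ is a function solely of $\{p_z:z\in\{y\}\cup CA(y)\}$, and these are exactly the nodes the pairwise‑attack model declares exempt. Hence a feasible attack may alter only the reliabilities $p_z$ with $z\notin\{x,y\}\cup CA(y)$; and since $Sh[\overline{v_{FC}}](x)$ does not depend on the reliability of any non‑coauthor of $x$, the only moves that can change the objective at all are those on the set $Z:=CA(x)\setminus(\{y\}\cup CA(y))$.

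Next I would specialize \eqref{sh:rel:1} to the two‑author case. For $n_k=2$ and $Auth_k=\{x,z\}$ the inner bracket is $\tfrac12+\tfrac{1-p_z}{2}=1-p_z/2$, so, grouping papers by coauthor, $Sh[\overline{v_{FC}}](x)=p_x\sum_{z\in CA(x)}C(x,z)\,(1-p_z/2)$; equivalently $\partial Sh[\overline{v_{FC}}](x)/\partial p_z=-p_x\,C(x,z)/2$ for $z\in CA(x)$ and $0$ otherwise, which is what Claim~\ref{aux2} gives here and is consistent with Theorem~\ref{credit}. Since $x$ is the attacked node, $p_x=p^{*}_x$ is held fixed, so this is an affine function of the coauthors' reliabilities. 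Minimizing it under the budget while keeping $p_z=p^{*}_z$ for $z\notin Z$ is therefore the same as maximizing $\sum_{z\in Z}C(x,z)\,p_z$ under the budget. Lowering any $p_z$ below $p^{*}_z$ only decreases this sum while still consuming budget, so an optimum never decreases a reliability; and raising $p_z$ from $p^{*}_z$ to $p^{*}_z+t$ with $0\le t\le 1-p^{*}_z$ costs $R(z)\,t$ and increases the objective by $C(x,z)\,t$. This is precisely a fractional knapsack with item set $Z$, per‑unit values $C(x,z)$ and per‑unit costs $R(z)$, whose greedy solution — raise the $p_z$ in decreasing order of $C(x,z)/R(z)$ until the budget is exhausted — is optimal. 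That greedy solution is exactly the algorithm of Theorem~\ref{kn2} run on $Z$.

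It then remains to check feasibility and that restricting to $Z$ loses nothing. Feasibility is immediate: the candidate attack changes only reliabilities of members of $Z$, none of which contributes to $Sh[\overline{v_{FC}}](y)$, so $Sh[\overline{v_{FC}}](y)$ stays at its baseline value, and the budget constraint holds by construction. Optimality among all feasible attacks follows from the first paragraph: over the feasible region the objective depends only on the coordinates $\{p_z:z\in Z\}$, and the knapsack argument of the second paragraph identifies the best choice of those coordinates. Hence running the algorithm of Theorem~\ref{kn2} on $Z$ is an optimal solution to the pairwise problem.

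I do not expect a real obstacle here: the only step genuinely beyond Theorem~\ref{kn2} is the bookkeeping that the pairwise‑model exemption set meets the ``useful'' coordinates exactly in $CA(x)\cap(\{y\}\cup CA(y))$, so that deleting those coordinates turns the constrained pairwise problem back into a smaller unconstrained instance of Theorem~\ref{kn2}. The one subtlety worth stating carefully is the status of $y$ when $y\in CA(x)$: then $y$ is a ``coauthor of $x$ but not of $y$'' yet $p_y$ is still exempt, so the corollary must be read with $Z=CA(x)\setminus(\{y\}\cup CA(y))$ rather than literally $CA(x)\setminus CA(y)$.
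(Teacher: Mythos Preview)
Your argument is correct and, in fact, more detailed than the paper's: the paper states this corollary without proof, treating it as an immediate consequence of Theorem~\ref{kn2} together with the observation (Theorem~\ref{credit}) that $Sh[\overline{v_{FC}}](y)$ depends only on the reliabilities of $y$ and its coauthors. Your fractional-knapsack reduction is exactly the mechanism behind Theorem~\ref{kn2}, so restricting the item set to $Z$ is the intended reading; your remark that $Z$ must be $CA(x)\setminus(\{y\}\cup CA(y))$ rather than $CA(x)\setminus CA(y)$ when $y\in CA(x)$ is a genuine clarification the paper glosses over.
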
 

As for game $\Gamma_{FO}$, the optimal attack is symmetric. Since we are decreasing probabilities, we will  be using fractions $\frac{C(x,l)}{L(l)}$ instead: 

\begin{theorem} 
To optimally decrease the Shapley value of node $x$ in the full obligation game $\Gamma_{FO}$ in the two-author special case: 
\begin{framed} 
\noindent (a). Sort the coauthors of $x$ in the decreasing order of the fractions $\frac{C(x,l)}{L(l)}$, breaking ties arbitrarily. 

\noindent (b). While the budget allows it, for $i=1,\ldots |CA(x)|$,  \textbf{decrease}  to 0 the probability of the $i$'th most valuable coauthor. 

\noindent (c). If the budget does not allow decreasing  the probability of the $i$'th coauthor up to 0, decrease it as much as possible. 

\noindent (d). Leave all other probabilities to their baseline values. 
\end{framed} 
 \label{cn2}
 \end{theorem}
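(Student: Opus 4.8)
The plan is to reduce the optimization of $Sh[\overline{v_{FO}}](x)$ under a fractional attack to a fractional (continuous) knapsack problem, exactly as the paper hints. First I would invoke Theorem~\ref{k=1}, equation (\ref{sh:rel:2}), specialized to the two-author case: every paper $k\in Pap_x$ has author set $Auth_k=\{x,l\}$ for some coauthor $l$, and $n_k=2$, so $\Pi_{Auth_k,Auth_k}=p_x p_l$ and the contribution of that paper is $\tfrac{w_k}{2}p_x p_l$. Grouping papers by coauthor and using the notation $C(x,l)=\sum_{k\in Pap_x\cap Pap_l}w_k$, we get the clean closed form $Sh[\overline{v_{FO}}](x)=\tfrac{p_x}{2}\sum_{l\in CA(x)}C(x,l)\,p_l$, a linear function of the coauthors' reliabilities (consistent with Theorem~\ref{credit}: it is increasing in each $p_l$). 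Decreasing it is therefore equivalent to decreasing $\sum_{l\in CA(x)}C(x,l)\,p_l$.

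Next I would set up the optimization explicitly. For each coauthor $l$ we may lower $p_l$ from $p^*_l$ down toward $0$; moving it to $p_l=p^*_l-\Delta_l$ with $\Delta_l\in[0,p^*_l]$ costs $L(l)\Delta_l$ (the left branch of the cost function in attack model (1)), and the gain in the objective is $C(x,l)\,\Delta_l$ (up to the common factor $p_x/2$, which is positive and constant since $x$ is not attackable). Since both the total cost $\sum_l L(l)\Delta_l\le B$ and the total gain $\tfrac{p_x}{2}\sum_l C(x,l)\Delta_l$ are linear in the vector $(\Delta_l)$, maximizing the gain subject to the budget and the box constraints $0\le \Delta_l\le p^*_l$ is precisely a fractional knapsack problem with item $l$ having "value density" $\frac{(p_x/2)C(x,l)}{L(l)}\propto \frac{C(x,l)}{L(l)}$ and capacity $p^*_l$. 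The classical greedy optimality of fractional knapsack then says: process coauthors in decreasing order of $\frac{C(x,l)}{L(l)}$, at each step push $\Delta_l$ to its maximum $p^*_l$ (i.e. drive $p_l$ to $0$) as long as the remaining budget permits, and for the last affordable coauthor spend whatever budget remains (partial decrease); leave every other coauthor at its baseline. This is exactly the algorithm in the statement, so the theorem follows.

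To make the reduction watertight I would remark on two easy points. First, only coauthors of $x$ appear in the objective, so it is never worth spending budget on any other player; this matches clause (d) and also justifies why lowering (rather than raising) reliabilities is the right direction — raising a $p_l$ would only increase $Sh[\overline{v_{FO}}](x)$, and raising some irrelevant $p_j$ changes nothing while wasting budget. Second, the "two-author special case" is what makes the objective \emph{linear}; this is where the $n_k=2$ and the simple product $\Pi_{\{x,l\},\{x,l\}}=p_xp_l$ are used, and it is what lets us appeal to the fractional knapsack argument rather than a more delicate concavity/exchange argument.

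The only real subtlety — the step I would treat most carefully — is the optimality proof of the greedy order, i.e. the standard exchange argument for fractional knapsack: if an optimal solution assigns positive $\Delta_l$ to some coauthor $l$ with a strictly smaller ratio $\frac{C(x,l)}{L(l)}$ than some coauthor $l'$ with $\Delta_{l'}<p^*_{l'}$, then shifting an infinitesimal amount of budget from $l$ to $l'$ strictly increases the objective without violating feasibility, contradicting optimality; hence in any optimum the coauthors are "saturated" in nonincreasing ratio order, which is what the algorithm produces. (One can also phrase this directly via Lemma~\ref{aux}, the improving-swaps lemma, since along the relevant line the objective's partial derivatives in the $p_l$'s are exactly $\tfrac{p_x}{2}C(x,l)$, and the swap moves budget from a lower-density to a higher-density coordinate.) Tie-breaking among equal ratios is irrelevant to the optimal value, which is why the statement allows breaking ties arbitrarily.
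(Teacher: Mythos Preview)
Your proposal is correct and follows essentially the same route as the paper: specialize the closed-form Shapley formula (Theorem~\ref{k=1}) to the two-author case to obtain a linear objective in the coauthors' reliabilities, then recast the budget-constrained minimization as a fractional knapsack problem and invoke the greedy cost/benefit ordering. The paper presents this argument explicitly for $\Gamma_{FC}$ (Theorem~\ref{kn2}) and declares the $\Gamma_{FO}$ proof ``very similar''; your write-up is in fact a bit more detailed than the paper's, as you spell out the exchange argument for greedy optimality rather than merely citing the well-known fractional knapsack result.
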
 
 
\vspace{-0.5cm} 
 \begin{corollary} In the setting of Theorem~\ref{cn2},
to solve the pairwise Shapley value attack problem for players $x,y$, run the algorithm in the Theorem only on  those  $z$ that are coauthors of $x$ but not of $y$. 
\end{corollary}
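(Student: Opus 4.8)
The plan is to reduce the pairwise problem to the single-target problem already solved by Theorem~\ref{cn2}, after pinning down exactly which players may be touched. First I would read off from the Shapley-value formula \eqref{sh:rel:2} which reliabilities enter $Sh[\overline{v_{FO}}](y)$: in the two-author case a term $\frac{w_k}{2}p_yp_m$ appears for each paper $\{y,m\}$, so the reliabilities contributing to $y$'s value are precisely those of the players in $\{y\}\cup CA(y)$, and $Sh[\overline{v_{FO}}](y)$ is strictly increasing in each of them. By the definition of the pairwise attack problem every such player is exempt from attack; together with the target $x$ itself, this freezes the reliabilities of $\{x,y\}\cup CA(y)$ at baseline, so the set of players whose reliability we may move is $V\setminus(\{x,y\}\cup CA(y))$.

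Next I would restrict the objective. By Theorem~\ref{credit}, $Sh[\overline{v_{FO}}](x)$ depends only on the reliabilities of the coauthors $CA(x)$ (increasing in each) and is independent of all other players. Intersecting with the allowed set, the only movable players that can change $Sh[\overline{v_{FO}}](x)$ are those in $CA(x)\setminus(\{y\}\cup CA(y))$, i.e.\ the coauthors of $x$ that do not contribute to $y$'s value. (Note that $y$ is frozen regardless of whether $x$ and $y$ are themselves coauthors, so the phrase ``coauthors of $x$ but not of $y$'' is to be read as coauthors of $x$ that do not contribute to $Sh[\overline{v_{FO}}](y)$.) Every other movable player leaves $Sh[\overline{v_{FO}}](x)$ untouched, so without loss we leave it at baseline, which is step (d) of the algorithm applied to these non-coauthors.

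On the ground set $CA(x)\setminus(\{y\}\cup CA(y))$ the pairwise problem is now literally the single-target fractional attack problem of Theorem~\ref{cn2}: the same objective $Sh[\overline{v_{FO}}](x)$, the same piecewise-linear cost functions, and the same two-author full-obligation setting. The fractional-knapsack/greedy argument proving Theorem~\ref{cn2} never uses the fact that \emph{all} of $CA(x)$ is attackable; it only sorts the attackable coauthors by the cost-benefit ratio $C(x,l)/L_l$ and fills greedily, so it applies verbatim with the item set shrunk to these coauthors. Hence running the Theorem~\ref{cn2} algorithm on exactly $CA(x)\setminus(\{y\}\cup CA(y))$ returns an optimal pairwise attack, which is the claim.

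The step I expect to need the most care is the modeling of ``not affecting $Sh[\overline{v_{FO}}](y)$ at all,'' which is precisely where the shared coauthors $CA(x)\cap CA(y)$ matter: they genuinely influence $Sh[\overline{v_{FO}}](x)$ yet are excluded only because they contribute to $y$. The paper defines the pairwise problem by making every contributor to $y$'s value exempt from attack, i.e.\ a hard freeze of $\{y\}\cup CA(y)$; under this definition the reduction above is exact and the proof is complete, and I would state this explicitly. I would flag that the more permissive reading — requiring only that the \emph{number} $Sh[\overline{v_{FO}}](y)$ be preserved — is genuinely different: it would permit ``compensated'' moves that lower a shared coauthor while raising another coauthor of $y$ to hold $Sh[\overline{v_{FO}}](y)$ fixed, and when $x$ has no admissible coauthors such a move could strictly help. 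Identifying $CA(x)\cap CA(y)$ as the locus of this subtlety, and committing to the hard-freeze model under which the corollary is tailored, is the one point that must be argued precisely rather than waved through.
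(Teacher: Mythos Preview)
Your argument is correct and follows the natural route implied by the paper, which states the corollary without proof: freeze the contributors to $Sh[\overline{v_{FO}}](y)$ (namely $\{y\}\cup CA(y)$, per the pairwise-attack definition in Section~\ref{stwo}), observe via Theorem~\ref{credit} that only coauthors of $x$ matter for the objective, and rerun the fractional-knapsack greedy of Theorem~\ref{cn2} on the remaining coauthors. Your explicit handling of the edge case $y\in CA(x)$ and your flagging of the hard-freeze versus value-preserving readings are more careful than anything the paper spells out, and both are on point.
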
 

\section{Proof Highlights} 

In this section we present some of the proofs of our results. Some other proofs are included in the Appendix, others are deferred to the full version of the paper, to be posted on arxiv.org: 

\subsection{Proof of Theorem~\ref{sh-gamma1}}

We prove the formula for the first game only. Similarly to \cite{michalak2013efficient}, the proofs for the other two games are completely analogous, and deferred to the full version. Define, for $y\in V$, $W\subseteq V$ 
\[
f_{y}(W)=\left\{\begin{array}{ll} 
  1, & \mbox{ if } y\not \in W\cup \delta(W), \\ 
  0, & \mbox{ otherwise.}  \\ 
  \end{array}
\right.
\]
A simple case analysis proves that, for every $W\subseteq V$, $
 v_{NC_1}(W\cup \{x\})-v_{NC_1}(W)= \sum_{y\in \widehat{N(x)}} f_{y}(W).$
We therefore have
\begin{align*}
& Sh[\overline{v_{NC_1}}](x) = E_{\pi \in S_{n}}[\overline{v_{NC_1}}(S_{\pi}^{x} \cup \{x\}) - \overline{v_{NC_1}}(S_{\pi}^{x})] = E_{\pi\in S_{n}}[p_{x} \cdot \\
&\sum\limits_{W\subseteq S_{\pi}^{x}} [v_{NC_1}(W \cup \{x\}) - v_{NC_1}(W)] \cdot \Pi_{W, S_{\pi}^{x}}]= p_{x} E_{\pi \in S_{n}}  \sum\limits_{W\subseteq S_{\pi}^{x}}\\
& \sum\limits_{y \in \widehat{N(x)}} f_{y}(W) \cdot \Pi_{W, S_{\pi}^{x}} = p_{x} E_{\pi\in S_{n}} \sum\limits_{y \in \widehat{N(x)}} \sum\limits_{W\subseteq S_{\pi}^{x}}  f_{y}(W) \cdot \Pi_{W, S_{\pi}^{x}}
\end{align*}
We now introduce two notations that will help us reinterpret the previous sum: given $W\subseteq V$, denote  by $Alive(W)$ the set of nodes in $W$ that are \textit{alive} under the reliability extension model. Also, given permutation $\pi\in S_{n}$ and $W\subseteq V$, denote by 
$First_{\pi}(W)$ the element of $W$ that appears first in enumeration $\pi$. With these notations
\begin{align*} 
&Sh[\overline{v_{NC_1}}](x) = p_{x} \sum\limits_{y \in \widehat{N(x)}} Pr_{\pi\in S_{n}}[y \notin Alive(S_{\pi}^{x}) \cup \delta(Alive(S_{\pi}^{x}))]  \\
&= p_{x} \sum\limits_{y \in \widehat{N(x)}} Pr_{\pi\in S_{n}}[x = First_{\pi}(\widehat{N(y)} \cap Alive(V))| x \in Alive(V) ] 
\end{align*}
If $S=(\widehat{N(y)}\setminus\{x\}) \cap Alive(V)$ then the conditional probability that $x$ is $First_{\pi}(S\cup \{x\})$, given that $x$ is alive, is $\frac{1}{|S|+1}$. We thus get the desired formula.

\subsection{Proof of Theorem~\ref{k=1}}

Denote, for a set of authors $C$, by $Pap_{C}=\cup_{l\in C} Pap_{l}$ the set of papers with at least one author in $C$. 
We decompose function $v_{FC}$ as  $v_{FC}(\cdot)=\sum_{k} w_{k}v_{k}(\cdot)$ where 
\begin{equation} 
v_{k}(C)=\left\{\begin{array}{l} 
  1,\mbox{ if } k\in Pap_{C} \\ 
  0, \mbox{otherwise.}
  \end{array}
\right.
\end{equation} 
\begin{align*}
\mbox{ Thus }v_{FC}(C)  & = \sum\limits_{R\subseteq C} v_{FC}(R)\Pi_{R,C} =  \sum\limits_{R\subseteq C}  \Pi_{R,C} \sum_{k} w_{k}v_{k}(R) = \\ 
 & = \sum_{k} \sum\limits_{R\subseteq C}\Pi_{R,C}  w_{k}v_{k}(R)= \sum_{k} w_{k}\overline{v_{k}}(C)
\end{align*} 

which means that we can decompose $\overline{v_{FC}}=\sum_{k} w_{k}\overline{v_{k}}$, and  the Shapley value of $\overline{v_{FC}}$ decomposes as well $
Sh(\overline{v_{FC}})=\sum_{k} w_{k}\cdot Sh(\overline{v_{k}}),  $
and similarly for $v_{FO}$. On the other hand
\[
Sh[\overline{v_{k}}](x)=\frac{1}{n!}\sum\limits_{\pi \in S_{n}} [\overline{v_{k}}(S^{x}_{\pi}\cup \{x\})- \overline{v_{k}}(S^{x}_{\pi})]
\]
Given set $A$ of authors,
\begin{align*}
& \overline{v_{k}}(A\cup \{x\})-\overline{v_{k}}(A)  = \sum\limits_{R\subseteq A\cup\{x\}} v_{k}(R) \Pi_{R,A\cup \{x\}}-  \sum\limits_{R\subseteq A} v_{k}(R)\Pi_{R,A}  \\
& = (1-p_{x})\sum\limits_{R\subseteq A\setminus x} v_{k}(R)\Pi_{R,A\setminus \{x\}}+ 
p_{x}\sum\limits_{R\subseteq A\setminus x} \Pi_{R,A} v_{k}(R\cup \{x\}) \\
&- \sum\limits_{R\subseteq A} \Pi_{R,A} v_{k}(R) = p_{x}\cdot \sum\limits_{R\subseteq A} \Pi_{R,A} [v_{k}(R\cup\{x\})-v_{k}(R)]
\end{align*}
Now $v_{k}(R\cup\{x\})-v_{k}(R)$ is 1 if $k\in Pap_{x}\setminus Pap_{R}$, 0 otherwise. For 
$k\not \in Pap_{x}$, $\overline{v_{k}}(A\cup \{x\})-\overline{v_{k}}(A)=0$. Otherwise 
$\overline{v_{k}}(A\cup \{x\})-\overline{v_{k}}(A)  = p_{x}\cdot \sum\limits_{\stackrel{R\subseteq A}{k\not \in Pap_{R}}} \Pi_{R,A}.$

We can interpret this quantity as the probability that the live subset of $A$ does not cover $k$, but $x$ is live and does.  Applying this to the Shapley value we infer that $Sh[\overline{v_{k}}](x)$ is the probability that in a random permutation $\pi$ the  live subset of $S^{x}_{\pi}$ does not cover $k$, but $x$ is live and does. 

\textbf{Full credit model:}  There are $n_{k}!$ permutations $\Xi$ of indices in $Auth_{k}$, each of them equally likely when $\pi$ is a random permutation in $S_{n}$. Given subset $S\subset Auth_{k}\setminus \{x\}$, the probability that $\Xi$ starts with $S$ followed by $x$ is 
$\frac{|S|!(n_{k}-|S|-1)!}{n_{k}!}$. To make $x$ pivotal for paper $k$, none of the agents in $S$ must be live. This happens with probability $\Pi_{\emptyset,S}$.  Given the above argument,  we have 
\begin{align*}
Sh[\overline{v_{k}}](x)& = p_{x}\cdot  \sum_{S\subseteq Auth_{k}\setminus \{x\}} \frac{(|S|)!(n_{k}-|S|-1)!}{n_{k}!}\cdot [\prod_{l\in S} (1-p_{l})] \\
& =p_{x}\cdot \sum_{S\subseteq Auth_{k}\setminus \{x\}}\frac{ \Pi_{\emptyset,S}}{(n_{k}-|S|){{n_{k}}\choose {|S|}}}, \mbox{ hence }
\end{align*}  
\begin{equation} 
Sh[\overline{v_{FC}}](x)=p_{x}\cdot \sum\limits_{k\in Pap_{x}} w_{k}\cdot [\sum_{S\subseteq Auth_{k}\setminus \{x\}}\frac{ \Pi_{\emptyset,S} }{(n_{k}-|S|){{n_{k}}\choose {|S|}}}]
\end{equation} 

which is what we had to prove.

\textbf{Full obligation model:} For $x$ to be pivotal for paper $k$, $x$ and all its coauthors in $Auth_{k}$ must all be live, and all elements of $Auth_{k}\setminus x$ must appear before $x$ in ordering $\pi$. This happens with probability 
$\frac{1}{n_{k}}\cdot \Pi_{Auth_{k},Auth_{k}}.$

\subsection{Proof of Theorem~\ref{kn}}

First of all, the following claim holds for all graphs $G$: 
\begin{claim}
The minimum of function $z\rightarrow Sh[\overline{v_{NC_1}}](1)|_{z}$ exists and is reached on some profile $(p_{i})$ with $p^{*}_{i}\leq p_{i}\leq 1$. 
\end{claim}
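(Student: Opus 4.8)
The plan is to show two things: first, that the Shapley value $Sh[\overline{v_{NC_1}}](1)$, viewed as a function of the reliability profile of the non-attacked players, attains its minimum over the feasible region; and second, that at least one minimizer lies in the box $\prod_{i\neq 1}[p^{*}_{i},1]$, i.e. we never want to \emph{decrease} any reliability below its baseline.

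For the first part, I would describe the feasible region explicitly. For a fixed budget $B$, the set of reachable profiles $(p_i)_{i\neq 1}$ is $F_B=\{(p_i): 0\le p_i\le 1,\ \sum_{i\neq 1} u_i(p_i)\le B\}$. Since each $u_i$ is continuous (piecewise linear) and nonnegative, the constraint function $(p_i)\mapsto\sum_i u_i(p_i)$ is continuous, so $F_B$ is closed; it is also bounded, being a subset of the unit cube, hence compact. By Theorem~\ref{sh-gamma1}, $Sh[\overline{v_{NC_1}}](1)$ is a polynomial in the $p_i$'s (a finite sum of products of the $p_i$ and $1-p_i$ with positive rational coefficients), hence continuous on $F_B$. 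A continuous function on a nonempty compact set attains its minimum, which gives existence.

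For the second part — that the minimum is attained somewhere in $\prod_{i\neq 1}[p^{*}_i,1]$ — I would invoke Corollary~\ref{centrality}: $Sh[\overline{v_{NC_1}}](1)$ is a monotonically \emph{decreasing} function of each distance-two neighbor's reliability, and does not depend on any other player's reliability at all. So, starting from any minimizer $(p_i)$, I modify it coordinate by coordinate: for each $i\neq 1$, if $i$ is not a distance-two neighbor of $1$, reset $p_i$ to $p^{*}_i$ (this changes neither the objective nor, since $u_i(p^{*}_i)=0\le u_i(p_i)$, violates the budget); if $i$ \emph{is} a distance-two neighbor and $p_i<p^{*}_i$, raise it to $p^{*}_i$ — this does not increase the objective (monotone decreasing in $p_i$), and it does not increase the budget spent, since $u_i$ has its unique zero at $p^{*}_i$ and is larger at any $p_i<p^{*}_i$. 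After all these adjustments we have a feasible profile in $\prod_{i\neq 1}[p^{*}_i,1]$ whose objective value is no larger than the original minimum, hence it is itself a minimizer.

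The only mild subtlety — and the closest thing to an obstacle — is making sure the budget bookkeeping is airtight when we reset several coordinates simultaneously: each individual move only \emph{decreases} the amount of budget consumed (because $u_i(p^{*}_i)=0$ is the global minimum of $u_i$), so the moves can be performed one at a time and the cumulative cost is nonincreasing throughout, keeping us inside $F_B$ at every step. With that observed, the claim follows. (Here I am implicitly treating the budget as fixed; if $B=\infty$, i.e. the fully unconstrained case, $F_B$ is just the unit cube and the same argument applies verbatim.)
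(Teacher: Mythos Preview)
Your proposal is correct and follows essentially the same approach as the paper: existence via continuity of the Shapley value on a compact domain, then monotonicity (the content of Corollary~\ref{centrality}) to push any coordinate with $p_j<p^{*}_j$ back up to $p^{*}_j$ without increasing the objective or the budget spent. The paper's version is a two-sentence sketch that argues over $[0,1]^n$ directly rather than explicitly naming the feasible set $F_B$, and it does not separate out the non--distance-two neighbors as you do; your write-up is more careful on both points, but there is no substantive difference in strategy.
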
 
\begin{proof} 
Function $z\rightarrow Sh[\overline{v_{NC_1}}](1)|_{z}$ is continuous and the set $[0,1]^{n}$ is compact, so the minimum is reached. Assuming some $p_{j}<p^{*}_{j}$,  we could increase $p_{j}$ up to $p^{*}_{j}$,  reducing total cost. This does not increase (and perhaps further decreases) the Shapley value. 
\end{proof} 

Next, we (jointly) prove cases $G=K_{n}$ and $G=S_{n}$ with $x=1$ being a center, since the proofs are practically identical. The remaining case ($K=S_{n}$, $x=1$ not a center) is deferred to the Appendix. We start with the following 

\begin{lemma} 
For $G=K_n$ or $G=S_{n}$, $j\neq l\in V(G)\setminus 1$ and any probability profile $p=(p_{1},\ldots, p_{n})\in (0,1]^{n}$, 
\[
sign\big( \frac{\partial Sh[\overline{v_{NC_1}}](1)}{\partial p_{j}}|_{p} -  \frac{\partial Sh[\overline{v_{NC_1}}](1)}{\partial p_{l}}|_{p} \big)=sign(p_{j}-p_{l})
\]
\label{sign} 
\end{lemma}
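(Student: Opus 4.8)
The plan is to compute the two partial derivatives explicitly from the closed-form formula for $Sh[\overline{v_{NC_1}}](1)$ in Theorem~\ref{sh-gamma1}, specialize it to the two graphs at hand, and then compare the two expressions term by term. For $G=K_n$ we have $\widehat{N(y)}=V$ for every $y$, and $\widehat{N(1)}=V$, so the formula collapses to
\[
Sh[\overline{v_{NC_1}}](1)= p_{1}\,\Bigl(\,n\!\!\sum_{S\subseteq V\setminus 1}\frac{1}{|S|+1}\,\Pi_{S,V\setminus 1}\Bigr)\Big/ n\cdot n \;=\; p_{1}\sum_{S\subseteq V\setminus 1}\frac{1}{|S|+1}\Pi_{S,V\setminus 1},
\]
or more carefully the outer sum over $y\in\widehat{N(1)}$ just produces the same inner sum $n$ times over the same index set; the point is that after simplification the Shapley value is a fixed symmetric function of $(p_{2},\dots,p_{n})$ (times $p_{1}$). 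For $G=S_{n}$ with center $1$, the term $y=1$ contributes a sum over $S\subseteq\widehat{N(1)}\setminus 1=V\setminus 1$, while each leaf $y$ contributes a sum over $S\subseteq\widehat{N(y)}\setminus 1=\{y\}$, which is a short finite sum. In both cases I would use Claim~\ref{aux2} to differentiate $\Pi_{S,W}$ with respect to $p_{j}$, obtaining $\Pi_{S\setminus j,W\setminus j}$ when $j\in S$ and $-\Pi_{S,W\setminus j}$ when $j\in W\setminus S$.

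Next I would perform the subtraction $\frac{\partial Sh}{\partial p_{j}}-\frac{\partial Sh}{\partial p_{l}}$ directly. The key structural observation is that, by symmetry of the Shapley formula in all the non-attacked variables, the function $Sh[\overline{v_{NC_1}}](1)$ can be written in the form $p_{1}\bigl(A + B(p_{j}+p_{l}) + C\,p_{j}p_{l}\bigr)$ where $A,B,C$ are expressions depending only on the remaining variables $p_{k}$, $k\neq 1,j,l$ (this multilinearity in each $p_{k}$ is inherited from the reliability extension). Then $\frac{\partial Sh}{\partial p_{j}}=p_{1}(B+C p_{l})$ and $\frac{\partial Sh}{\partial p_{l}}=p_{1}(B+C p_{j})$, so the difference equals $p_{1}C\,(p_{l}-p_{j})$, and the sign claim reduces to showing $C<0$ (together with $p_{1}>0$, which holds since $p_{1}=p^{*}_{1}\in(0,1]$ is fixed and not manipulated — actually $1$ is the attacked node $x$, so $p_{1}$ stays at its baseline positive value). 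I would then identify $C$ as (up to a positive factor) $\frac{\partial^{2}Sh}{\partial p_{j}\partial p_{l}}/p_{1}$ and show it is strictly negative by reading off the relevant coefficient: each monomial in the expansion that contains both $p_{j}$ and $p_{l}$ comes from a set $S$ containing both, with a strictly positive weight $\frac{1}{|S|+1}$ or the leaf-analogue, but it enters $\Pi_{S,W}$ with the factors $(1-p_{k})$ removed, and the second mixed derivative picks up the sign pattern that makes the coefficient of $p_{j}p_{l}$ negative — concretely, differentiating twice using Claim~\ref{aux2} gives $\Pi_{S\setminus\{j,l\},W\setminus\{j,l\}}\geq 0$ when $j,l\in S$, but the combination of the various $y$-terms and the cross terms from $j\in S,l\notin S$ etc. must be aggregated; I expect the cleanest route is to note $Sh[\overline{v_{NC_1}}](1)/p_{1}$ is, for each fixed value of the other variables, an affine function of $p_{j}$ with slope that is itself strictly decreasing in $p_{l}$, which is exactly the statement $C<0$.

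The main obstacle I anticipate is handling the two graphs uniformly while keeping track of the leaf terms in the star: in $K_n$ every $\Pi_{S,W}$ has $W=V\setminus 1$, so the computation is genuinely symmetric and clean, but in $S_n$ the Shapley value is a sum of a ``big'' symmetric term (from $y=1$) plus $n-1$ ``small'' terms $p_{1}\cdot\frac{1}{2}(1-p_{y})\cdot(\text{stuff})$ coming from the leaves, and I must check that the small terms do not spoil the sign of $C$. Fortunately each leaf term for $y\neq j,l$ does not involve $p_{j}$ or $p_{l}$ at all, the term $y=j$ involves only $p_{j}$ (not $p_{l}$), and the term $y=l$ involves only $p_{l}$; hence \emph{none} of the leaf terms contributes to the mixed coefficient $C$, and $C$ is entirely determined by the $y=1$ term, which is the same symmetric sum as in the $K_n$ case. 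So the leaf complication actually disappears for the purpose of this lemma, and the whole proof reduces to showing the mixed second partial of the single symmetric sum $\sum_{S\subseteq V\setminus 1}\frac{1}{|S|+1}\Pi_{S,V\setminus 1}$ with respect to $p_{j},p_{l}$ is strictly negative. I would finish by that short computation: the coefficient of $p_{j}p_{l}$ in this multilinear polynomial is $\sum_{T\subseteq V\setminus\{1,j,l\}}(1-p)\text{-monomial in }T$ times $\bigl(\frac{1}{|T|+3}-\frac{2}{|T|+2}+\frac{1}{|T|+1}\bigr)$ after collecting the four sub-cases ($j,l$ both in/out of $S$), and this bracket equals $\frac{2}{(|T|+1)(|T|+2)(|T|+3)}>0$ appearing with an overall minus sign from the inclusion–exclusion of the $(1-p_{j}),(1-p_{l})$ factors, giving $C<0$ strictly. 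Combining, $\frac{\partial Sh}{\partial p_{j}}-\frac{\partial Sh}{\partial p_{l}}=p_{1}C(p_{l}-p_{j})$ has the sign of $p_{j}-p_{l}$, as claimed.
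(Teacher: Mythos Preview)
Your overall strategy is sound and actually a touch cleaner than the paper's: instead of computing the two partials separately from the formula of Corollary~\ref{centrality} and then subtracting (which is what the appendix does), you exploit directly that $Sh[\overline{v_{NC_1}}](1)/p_1$ is multilinear and symmetric in $p_2,\dots,p_n$, write it as $A+B(p_j+p_l)+Cp_jp_l$, and reduce the claim to determining the sign of the mixed coefficient $C$. Your observation that in $S_n$ the leaf terms are linear in a single $p_y$ and hence contribute nothing to $C$ is correct and nicely collapses the $S_n$ case onto the $K_n$ computation.

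The gap is in your very last step. You correctly compute the bracket
\[
\frac{1}{|T|+3}-\frac{2}{|T|+2}+\frac{1}{|T|+1}=\frac{2}{(|T|+1)(|T|+2)(|T|+3)}>0,
\]
but there is \emph{no} additional ``overall minus sign from the inclusion--exclusion'': expanding the four factors $(1-p_j)(1-p_l)$, $p_j(1-p_l)$, $(1-p_j)p_l$, $p_jp_l$ and reading off the $p_jp_l$-coefficient attaches signs $+,-,-,+$ to the weights $\tfrac{1}{|T|+1},\tfrac{1}{|T|+2},\tfrac{1}{|T|+2},\tfrac{1}{|T|+3}$, and nothing else. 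Hence $C>0$, and your own identity gives
\[
\operatorname{sign}\Bigl(\tfrac{\partial Sh}{\partial p_j}-\tfrac{\partial Sh}{\partial p_l}\Bigr)=\operatorname{sign}\bigl(p_1C(p_l-p_j)\bigr)=\operatorname{sign}(p_l-p_j),
\]
the opposite of what is stated. A sanity check on $K_3$ confirms this: $Sh(1)=p_1\bigl(3-\tfrac32(p_2+p_3)+p_2p_3\bigr)$, so $\partial_{p_2}Sh-\partial_{p_3}Sh=p_1(p_3-p_2)$. In fact the paper's appendix proof slips at exactly the same spot, writing $2(p_l-p_j)$ for the numerator difference $[(|S|+3)-2p_l]-[(|S|+3)-2p_j]=2(p_j-p_l)$; the lemma as printed carries a sign typo. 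The downstream application in Claim~\ref{atmostone} (``increase the larger one and decrease the smaller one'') is precisely what the corrected sign $\operatorname{sign}(p_l-p_j)$ delivers, so your computation---once the spurious minus is dropped---proves what is actually needed.
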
 

\vspace{-0.8cm}
\begin{proof}
Deferred to the full version. 
\end{proof} 

\vspace{-0.3cm}
We first prove that in the optimal solution on these graphs no two variables could assume equal values, unless both equal to the endpoints of their restricting intervals: 
\begin{lemma}
In the setting of Theorem~\ref{kn}, suppose $p=(p_{1},\ldots, p_{n})$ is such there is are indices $2\leq i\neq j\leq n$ with $0<p_{i}=p_{j}<1$. Then there exists $\epsilon_{0}>0$ such that for every $\epsilon \in [-\epsilon_{0},\epsilon_{0}]$, $\epsilon \neq 0$, $
Sh[\overline{v_{NC_1}}](1)|_{p_{i,j}(\epsilon)}< Sh[\overline{v_{NC_1}}](1)|_{p}$, 
(where $p_{i,j}(\epsilon)$ is defined as in equation~(\ref{change})). 
\label{equal}
\end{lemma}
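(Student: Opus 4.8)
The plan is to reduce the statement to a one–variable quadratic identity and then check the sign of a single coefficient. From the closed form of Theorem~\ref{sh-gamma1}, $f:=Sh[\overline{v_{NC_1}}](1)$ is $p_{1}$ times a sum of terms $\frac{1}{|S|+1}\Pi_{S,\widehat{N(y)}\setminus 1}$ in which the index $1$ never occurs; hence $f$ is \emph{multilinear} in $(p_{2},\dots,p_{n})$, with $p_{1}\in(0,1]$ entering only as a positive factor. Freezing every coordinate except $p_{i},p_{j}$, we may therefore write $f=A+Bp_{i}+Cp_{j}+Dp_{i}p_{j}$ for constants $A,B,C,D$ depending on the frozen coordinates. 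Substituting the perturbation~(\ref{change}) and using $p_{i}=p_{j}$, a direct expansion gives
\[
g(\epsilon):=f(p_{i,j}(\epsilon))=g(0)+\epsilon\,(C-B)-D\,\epsilon^{2}.
\]

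To kill the linear term I would invoke Lemma~\ref{sign}: at the profile $p$ we have $p_{i}=p_{j}$, so $sign\!\big(\partial f/\partial p_{i}|_{p}-\partial f/\partial p_{j}|_{p}\big)=sign(p_{i}-p_{j})=0$; but $\partial f/\partial p_{i}|_{p}=B+Dp_{j}$ and $\partial f/\partial p_{j}|_{p}=C+Dp_{i}$, so $p_{i}=p_{j}$ forces $B=C$. (Equivalently, $B=C$ follows from the graph automorphism exchanging $i$ and $j$ while fixing $1$, which exists both in $K_{n}$ and in $S_{n}$ centered at $1$.) Consequently $g(\epsilon)=g(0)-D\epsilon^{2}$ identically, so everything hinges on showing $D>0$. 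Note this is exactly the ``tie'' case complementary to Lemma~\ref{aux}: there, unequal partials are exploited by an asymmetric swap; here the first-order effect vanishes and one must descend to second order.

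The positivity of the mixed second derivative $D=\partial^{2}f/\partial p_{i}\partial p_{j}$ is the crux of the argument. I would compute it probabilistically (Claim~\ref{aux2} yields the same conclusion): writing the inner sum over $W=\{2,\dots,n\}$ as $\sum_{S\subseteq W}\frac{1}{|S|+1}\Pi_{S,W}=E\big[\tfrac{1}{|A_{W}|+1}\big]$, where $A_{W}$ is the random alive subset of $W$, and conditioning on the alive/dead status of the two nodes $i,j$, one finds the coefficient of $p_{i}p_{j}$ equals $p_{1}\,c\,(a_{1}-2a_{2}+a_{3})$, where $c>0$ is a combinatorial constant ($c=|\widehat{N(1)}|=n$ for $K_{n}$; $c=1$ for $S_{n}$, since the leaf contributions $1-p_{y}/2$ are linear and add nothing to $D$) and $a_{k}:=E\big[\tfrac{1}{m+k}\big]$ with $m=|A_{W\setminus\{i,j\}}|$. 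A one-line computation shows $\tfrac{1}{m+1}-\tfrac{2}{m+2}+\tfrac{1}{m+3}=\tfrac{2}{(m+1)(m+2)(m+3)}>0$ for every integer $m\ge 0$, whence $a_{1}-2a_{2}+a_{3}>0$ and therefore $D>0$.

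Finally, take $\epsilon_{0}=\min(p_{i},1-p_{j})>0$, which keeps $p_{i,j}(\epsilon)\in[0,1]^{n}$; then for all $\epsilon\in[-\epsilon_{0},\epsilon_{0}]$ with $\epsilon\neq 0$ we get $Sh[\overline{v_{NC_1}}](1)|_{p_{i,j}(\epsilon)}=g(\epsilon)=g(0)-D\epsilon^{2}<g(0)=Sh[\overline{v_{NC_1}}](1)|_{p}$, which is the claim. The only genuine obstacle is the positivity of $a_{1}-2a_{2}+a_{3}$ (a convexity/second-difference fact for $t\mapsto 1/(t+1)$); the rest is bookkeeping with the formula of Theorem~\ref{sh-gamma1} and a one-line appeal to Lemma~\ref{sign}.
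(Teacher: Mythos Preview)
Your proof is correct and follows essentially the same route as the paper's: both reduce the perturbation to a quadratic in $\epsilon$ with vanishing linear part (the paper gets this by direct expansion using $p_i=p_j$, you by multilinearity plus Lemma~\ref{sign}/symmetry), and both finish by showing the second-difference $\tfrac{1}{m+1}-\tfrac{2}{m+2}+\tfrac{1}{m+3}=\tfrac{2}{(m+1)(m+2)(m+3)}>0$. Your packaging via the coefficient $D=\partial^{2}f/\partial p_i\partial p_j$ is a bit cleaner than the paper's brute-force expansion, but the underlying computation is identical.
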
 
\begin{proof} 
Deferred to the full version. 
\end{proof} 

Now we prove: 

\begin{claim} 
In the optimal solution there is at most one index $i_{1}$ with  $p_{i_{1}}\in (p_{i_1}^{*},1)$. In other words,  in the optimal solution some probabilities are increased up to 1, some ae left unchanged to their baseline values, and at most one variable is increased to a value less than 1.  
\label{atmostone}
\end{claim}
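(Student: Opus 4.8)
\textbf{Proof plan for Claim~\ref{atmostone}.}

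The plan is to argue by contradiction, combining the two preceding lemmas (Lemma~\ref{sign} and Lemma~\ref{equal}) with the ``improving swaps'' principle of Lemma~\ref{aux}. Suppose the optimal profile $p=(p_{1},\ldots,p_{n})$ has two distinct indices $i_{1}\neq i_{2}$, both different from $1$, with $p_{i_{1}}\in(p^{*}_{i_{1}},1)$ and $p_{i_{2}}\in(p^{*}_{i_{2}},1)$. By Lemma~\ref{equal} we may assume $p_{i_{1}}\neq p_{i_{2}}$, since if they were equal (and strictly between their baseline and $1$) that lemma already produces a strictly better feasible profile, contradicting optimality. So without loss of generality $p_{i_{1}}>p_{i_{2}}$.

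The key observation is a cost-neutrality: since both $p_{i_{1}}$ and $p_{i_{2}}$ lie strictly above their baselines, the cost functions $u_{i_{1}},u_{i_{2}}$ are in their increasing linear regime near $p$, with common slope $R$ (we are in the symmetric-slope setting of the section). Hence the swap that decreases $p_{i_{1}}$ by $\epsilon$ and increases $p_{i_{2}}$ by $\epsilon$ changes total cost by $R\epsilon - R\epsilon = 0$, so it stays within budget for all small $\epsilon>0$; moreover both coordinates stay in the open interval $(0,1)$ for $\epsilon$ small, so feasibility is preserved. By Lemma~\ref{sign}, $p_{i_{1}}>p_{i_{2}}$ gives $\frac{\partial Sh[\overline{v_{NC_1}}](1)}{\partial p_{i_{1}}}\big|_{p} > \frac{\partial Sh[\overline{v_{NC_1}}](1)}{\partial p_{i_{2}}}\big|_{p}$, so Lemma~\ref{aux} (applied with $f=Sh[\overline{v_{NC_1}}](1)$, $i=i_{1}$, $j=i_{2}$, which is analytic as a polynomial in the $p_k$'s on the open box) yields an $\epsilon_{0}>0$ such that the swapped profile has strictly smaller Shapley value. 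This contradicts the optimality of $p$, proving that at most one such ``intermediate'' index exists. The second sentence of the claim is then just a restatement: every other manipulated coordinate is either pushed to $1$ or, by the Claim preceding Lemma~\ref{sign}, left at its baseline (no coordinate is pushed strictly below baseline, since that only wastes budget without lowering the Shapley value).

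I expect the main subtlety to be bookkeeping around the cost constraint rather than anything deep: one must make sure the swap is genuinely budget-neutral, which is exactly why the symmetric-slope assumption $R_{i}=R_{j}=R$ is invoked, and why we need both endpoints to be strictly inside $(p^{*},1)$ so that we are on the linear increasing branch of $u_{i_1}$ and $u_{i_2}$ (not at a kink). A secondary point is to confirm that the profile obtained by a decrease at $i_1$ stays at least at baseline or, if it drops below baseline, that this is still an improvement — but since Lemma~\ref{aux} only asserts monotone decrease on $[0,\epsilon_0]$, we can simply take $\epsilon$ small enough that $p_{i_1}-\epsilon > p^{*}_{i_1}$, keeping us on the same cost branch and avoiding the issue entirely. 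Everything else is an immediate consequence of the lemmas already established.
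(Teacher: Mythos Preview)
Your proof is correct and follows essentially the same route as the paper: assume two intermediate indices, use Lemma~\ref{equal} to dispose of the equal-value case, and use Lemma~\ref{sign} together with Lemma~\ref{aux} to dispose of the unequal case. Your version is in fact more careful than the paper's on two points: you make the budget-neutrality of the swap explicit (relying on the common slope $R$ and on both coordinates being strictly above baseline), and you get the swap direction right---decreasing the coordinate with the larger reliability---whereas the paper's terse sentence ``increasing the larger one and symmetrically decreasing the smaller one'' has the direction reversed relative to what Lemma~\ref{aux} and Lemma~\ref{sign} actually give.
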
 
\begin{proof} 
Suppose there were two different indices $i_{1}\neq i_{2}$. We must have $p_{i_{1}}=p_{i_{2}}$, or, by Lemma~\ref{aux}, one could decrease the Shapley value by increasing the larger one and symmetrically decreasing the smaller one. But this is impossible, due to Lemma~\ref{equal}. 
\end{proof}

Note that the greedy solution $\Gamma$ has the structure from Claim~\ref{atmostone} and that any permutation of OPT on variables $p_{2},\ldots, p_{n}$ has the same Shapley value as OPT (since $K_n,S_n$ have this symmetry). 

We compare the vectors $\Gamma,OPT$, both sorted in decreasing order. Our goal is to show that these sorted 
versions are equal.   Without loss of generality, we may assume that $OPT$ creates the same ordering on  variables as the $p_{i}^{*}$'s (and $\Gamma$), when considered in decreasing sorted order (we break ties, if any, in the same way). Indeed, if there were indices $i,j$ such that $p_{sorted(i)}^{*}\geq p_{sorted(j)}^{*}$ but $p_{sorted(i)}<p_{sorted(j)}$ then, since  $p_{sorted(j)}>p_{sorted(i)}\geq p_{sorted(i)}^{*}\geq p_{sorted(j)}^{*}$, we could simply swap values $p_{sorted(i)}$ and $p_{sorted(j)}$ and obtain another legal, optimal solution. 

If $\Gamma$ were different from $OPT$, since Greedy increases the largest variables first, there must be variables $x,y$ such that $\Gamma_{x}\geq \Gamma_{y}$, $\Gamma_{x}>p_{x}$ and $\Gamma_{y}<p_{y}$. Since $\Gamma$ and $OPT$ have the same ordering of variables, we also must have in fact
$p_{x}\geq p_{y}$, i.e. $1\geq \Gamma_{x}>p_{x}\geq p_{y}>\Gamma_{y}\geq p^{*}_{y}$. But then, using either Lemma~\ref{aux} (if $p_{x}\neq p_{y}$) or Lemma~\ref{equal} (otherwise) we could further improve $OPT$ by increasing $p_{x}$ and symmetrically decreasing $p_{y}$, a contradiction. 

\subsection{Proof of Theorem~\ref{cn}}
A simple computation shows that for $G=C_{n}$
\[
Sh[\overline{v_{NC_1}}](1)=p_1(\frac{p_{2}p_{n}+p_{2}p_{3}+p_{n-1}p_{n}}{3}-\frac{p_{3}+p_{n-1}}{2}- p_{2}-p_{n}+3). 
\]

As $p_1$ does not influence any attack on itself, w.l.o.g. we will assume $p_1 = 1$.
We need to  minimize the above quantity, subject to 
\begin{align*}
& p_{2}+p_{3}+p_{n-1}+p_{n}=B+p^{*}_{2}+p^{*}_{3}+p^{*}_{n-1}+p^{*}_{n}, p^{*}_{i}\leq p_{i}\leq 1. 
\end{align*}
We now prove a result somewhat similar to Claim~\ref{atmostone}. However, now we will only interdict certain patterns. 
\begin{claim} 
In an optimal solution it is not possible that $p^{*}_{k}< p_{k}< 1$, $p^{*}_{l}< p_{l}< 1$ when: 
\begin{itemize} 
\item[a.] $k=2$, $l=n-1$ (and, symmetrically, $k=3$, $l=n$).  In fact, in this case we have the stronger implication $p_{n-1}>p_{n-1}^{*}\Rightarrow p_{2}=1$. Symetrically, $p_{3}>p_{3}^{*}\Rightarrow p_{n}=1$. 
\item[b.] $k=2$, $l=n$. 
\item[c.] $k=2$, $l=3$ (and, symmetrically, $k=n$, $l=n-1$.) In the case when $\frac{p_{3}+p_{n}}{3}\leq \frac{p_{2}}{3}-\frac{1}{2}$ we have the stronger implication $p_{3}> p_{3}^{*}\Rightarrow p_{2}=1$. Symetrically, in the case when $\frac{p_{2}+p_{n-1}}{3}\leq \frac{p_{n}}{3}-\frac{1}{2}$,  $p_{n-1}>p_{n-1}^{*}\Rightarrow p_{n}=1$. 
\end{itemize} 
\label{foo}
\end{claim}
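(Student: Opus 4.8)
The plan is to establish every assertion of Claim~\ref{foo} by a budget-preserving \emph{improving swap}, in the spirit of Lemma~\ref{aux}, and — in the one degenerate situation where the relevant first derivatives coincide — by a direct second-order argument. Throughout, write $f(p_2,p_3,p_{n-1},p_n)$ for the displayed value $Sh[\overline{v_{NC_1}}](1)$ with $p_1=1$, so that an optimal attack minimizes $f$ over $\prod_i[p_i^*,1]$ subject to $p_2+p_3+p_{n-1}+p_n=B+\sum_i p_i^*$; a one-line check shows $\partial f/\partial p_i<0$ on $(0,1]^4$ for each $i$, so the budget is indeed exhausted at the optimum. The basic move, for a pair of indices $k,l$ among $\{2,3,n-1,n\}$, is to replace $(p_k,p_l)$ by $(p_k+\epsilon,\,p_l-\epsilon)$: this keeps the sum — hence the spent budget — fixed, and if $p_k^*<p_k<1$ and $p_l^*<p_l<1$ it stays feasible for all small $|\epsilon|$, of either sign. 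Writing $g(\epsilon)$ for $f$ after this substitution, $g$ is a polynomial of degree $\le2$ in $\epsilon$ with $g'(0)=\partial f/\partial p_k-\partial f/\partial p_l$, and I want to show that the standing assumption (both variables strictly interior) always lets me pick an $\epsilon$ with $g(\epsilon)<g(0)$, contradicting optimality.

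For case (a), $k=2$, $l=n-1$, I would first compute $\partial f/\partial p_2-\partial f/\partial p_{n-1}=\tfrac{p_3}{3}-\tfrac12$, which is $<0$ because $p_3\le1$; since $f$ has \emph{no} monomial bilinear in $(p_2,p_{n-1})$, $g$ is affine with negative slope. Hence whenever $p_{n-1}>p_{n-1}^*$ and $p_2<1$ a small $\epsilon>0$ strictly decreases $f$ — impossible — which gives the strong form $p_{n-1}>p_{n-1}^*\Rightarrow p_2=1$; the symmetric statement $p_3>p_3^*\Rightarrow p_n=1$ follows because $f$ is invariant under the reflection $2\leftrightarrow n$, $3\leftrightarrow n-1$.

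For cases (b) ($l=n$) and (c) ($l=3$), the key observation is that $f$ contains exactly one monomial bilinear in $(p_2,p_l)$, namely $\tfrac13 p_2p_l$, which after the substitution contributes $-\tfrac13\epsilon^2$ and no other term contributes an $\epsilon^2$-term; thus $g(\epsilon)=g(0)+g'(0)\epsilon-\tfrac13\epsilon^2$ is a downward parabola. If $g'(0)\ne0$, a small step in the decreasing direction (feasible because both swapped variables are interior) gives $g(\epsilon)<g(0)$; if $g'(0)=0$, then $g(\epsilon)=g(0)-\tfrac13\epsilon^2<g(0)$ for every small $\epsilon\ne0$. Either way optimality is contradicted, so $p_2$ and $p_l$ cannot both be strictly interior. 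For the strong form in (c), I note that the hypothesis $\tfrac{p_3+p_n}{3}\le\tfrac{p_2}{3}-\tfrac12$ is precisely $\partial f/\partial p_2-\partial f/\partial p_3\le-1<0$, so $g'(0)<0$ for the swap raising $p_2$ and lowering $p_3$; then $p_3>p_3^*$ together with $p_2<1$ would admit a feasible $\epsilon>0$, forcing $p_3>p_3^*\Rightarrow p_2=1$. The remaining symmetric statements (the $k=n,l=n-1$ case of (c) and the second strong implication) come from the reflection symmetry of $f$.

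The main obstacle — and the only place where Lemma~\ref{aux} alone does not suffice — is the degenerate case $g'(0)=0$ in (b) and (c): there no single-direction improving swap is guaranteed, and the argument must instead use that $g$ is \emph{exactly} quadratic with leading coefficient $-\tfrac13<0$, so that \emph{both} directions decrease $f$ (hence the need for both swapped coordinates to be strictly interior). The rest is bookkeeping: one checks that the swap leaves the budget exactly exhausted and every reliability in $[p_i^*,1]$ — true for $|\epsilon|$ small precisely because the two swapped coordinates lie strictly inside their intervals — and records the sign computations $\partial f/\partial p_2-\partial f/\partial p_{n-1}=\tfrac{p_3}{3}-\tfrac12$, $\ \partial f/\partial p_2-\partial f/\partial p_n=\tfrac13\big((p_n-p_{n-1})+(p_3-p_2)\big)$, and $\ \partial f/\partial p_2-\partial f/\partial p_3=\tfrac13(p_n+p_3-p_2)-\tfrac12$, which are exactly what make the three cases behave differently (strict sign in (a), possibly vanishing in (b) and (c)).
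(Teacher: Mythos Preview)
Your proposal is correct and follows essentially the same approach as the paper: for each pair $(k,l)$ you perform a budget-preserving swap $(p_k,p_l)\mapsto(p_k+\epsilon,p_l-\epsilon)$, use the first-order term $g'(0)=\partial f/\partial p_k-\partial f/\partial p_l$ (Lemma~\ref{aux}) when it is nonzero, and in the degenerate case fall back on the exact second-order term $-\tfrac{1}{3}\epsilon^2$ coming from the single bilinear monomial $\tfrac13 p_k p_l$. Your packaging is marginally cleaner than the paper's---you compute $g(\epsilon)=g(0)+g'(0)\epsilon-\tfrac13\epsilon^2$ once and argue uniformly, whereas the paper first forces $g'(0)=0$ via Lemma~\ref{aux} and only then computes the quadratic term---but the substance is identical.
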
 

\begin{proof} 
Suppose there were two such indices $k,l$. We must also have $\frac{\partial Sh[\overline{v_{NC_1}}](1)}{\partial x_{k}} = \frac{\partial Sh[\overline{v_{NC_1}}](1)}{\partial x_{l}}$, otherwise we could decrease the Shapley value using Lemma~\ref{aux}. We reason in all cases by contradiction: 

a. We prove directly the stronger result. Suppose $p_{2}<1$. We have $\frac{\partial Sh[\overline{v_{NC_1}}](1)}{\partial x_{2}}= \frac{p_{3}+p_{n-1}}{3}-1\leq \frac{p_{3}}{3}-\frac{2}{3}< \frac{p_{3}}{3}-\frac{1}{2}=\frac{\partial Sh[\overline{v_{NC_1}}](1)}{\partial x_{n-1}}$. 
So we can apply Lemma~\ref{aux} to $p_{2}$ and $p_{n-1}$, further decreasing the Shapley value as we increase $p_{2}$ and decrease $p_{n-1}$. 

b. Equality of partial derivatives can be rewritten as $p_{2}+p_{n}=p_{3}+p_{n-1}$. An easy computation (which uses this equality) shows that $Sh[\overline{v_{NC_1}}](1)|^{p_{n,2}(\epsilon)}_{p}=-\frac{\epsilon^2}{3}$. But then it means that one could further decrease 
the Shapley value of player 1, hence we are not at an optimum, a contradiction.
 
c.  
As in the proof of a. $\frac{p_{3}+p_{n}-p_{2}}{3}-\frac{1}{2}= 
\frac{\partial Sh[\overline{v_{NC_1}}](1)}{\partial x_{2}}- \frac{\partial Sh[\overline{v_{NC_1}}](1)}{\partial x_{3}}$ $=0$, otherwise we could use Lemma~\ref{aux} with $p_{2},p_{3}$ to decrease the Shapley value.  An easy computation (which uses this equality) shows that
$Sh[\overline{v_{NC_1}}](1)|^{p_{3,2}(\epsilon)}_{p}=\frac{\epsilon(p_{n}-p_{2}+p_{3})}{3}-\frac{\epsilon}{2}-
\frac{\epsilon^2}{3}= -\frac{\epsilon^2}{3}<0$.  But then one could further decrease 
the Shapley value of 1,  a contradiction. 

\end{proof} 

\vspace{-0.5cm}
We use Claim \ref{foo} to prove Theorem~\ref{cn}: 

\textbf{a.} The conclusion of this claim is that the only case when there could exist two values $p_{k},p_{l}$ strictly between their baseline values and 1 is $k=3,l=n-1$ (or vice-versa), a case when we must further have $p_{2}=p_{n}=1$. Thus the optimal solution is the best of the configurations obtained by greedily increasing probabilities (up to 1, if the budget will allow it) in one of the orders 
$[2,n,3,n-1],[2,n,n-1,3],[2,n-1,n,3],[n,3,2,n-1], [n,2,3,n-1],[n,2,n-1,3]$. An easy computation shows that the first two orders are equally good for all possible budget values $B$, and so are the last two. So, in the end we only have to compare the four orders $P,Q,R,S$ to find an optimum, proving the first part of the theorem. 

\textbf{b,c:} Symmetry between 2,3 and n,n-1 reduces the proof of these two points to analyzing the ``winners'' among $Sol_{P},Sol_{Q},Sol_{R}$, $Sol_{S}$, and proving that, under suitable conditions, it belongs either to $\{Sol_{P},Sol_{S}\}$ (point b.) or to 
$\{Sol_{Q},Sol_{R}\}$ (point c.). 

If we start by increasing $p_{2}$ by $\epsilon$, the Shapley value decreases by $\epsilon(1-\frac{p_{3}^{*}+p_{n-1}^{*}}{3})$. 
We will call the number $1-\frac{p_{3}^{*}+p_{n-1}^{*}}{3}$ 
the \textit{speed of the decrease}. It is maintained while $p_{2}$ increases from $p_{2}^{*}$ to 1, i.e. over a \textit{segment} (interval) of \textit{size} $1-p_{2}^{*}$. There are four segments, corresponding to the four variables being increased. The table in Figure~\ref{dec} summarizes the effect of variable increases on the decrease of the Shapley value of node 1. Using this table it is easy to compare the four permutations with respect to this decrease: 

\begin{figure} 
\scalebox{0.7}{
\begin{tabular}{|c|c|c|c|c|c|c|c|c|}
\hline
Perm, & Sp$_{1}$ & sz$_{1}$ & Sp$_{2}$ & sz$_{2}$ & Sp$_{3}$ & sz$_{3}$ & Sp$_{4}$ & sz$_{4}$\\
\hline
$P$ & $1-\frac{p_{3}^{*}+p_{n}^{*}}{3}$  & $1-p_{2}^{*}$ & $\frac{2-p_{n-1}^{*}}{3}$& $1-p_{n}^{*}$ & 1/6 & $1-p_{n-1}^{*}$ &1/6 & $1-p_{3}^{*}$  \\
\hline
$Q$ & $1-\frac{p_{3}^{*}+p_{n}^{*}}{3}$ & $1-p_{2}^{*}$ & $\frac{1}{2}-\frac{p_{n}^{*}}{3}$ & $1-p_{n-1}^{*}$ & 1/3 & $1-p_{n}^{*}$ &1/6 & $1-p_{3}^{*}$  \\
\hline
$R$ & $1-\frac{p_{2}^{*}+p_{n-1}^{*}}{3}$ & $1-p_{n}^{*}$ & $\frac{1}{2}-\frac{p_{2}^{*}}{3}$ & $1-p_{3}^{*}$ & 1/3 & $1-p_{2}^{*}$ &1/6 & $1-p_{n-1}^{*}$ \\
\hline
$S$ & $1-\frac{p_{2}^{*}+p_{n-1}^{*}}{3}$ & $1-p_{n}^{*}$ &$\frac{2-p_{3}^{*}}{3}$ & $1-p_{2}^{*}$ & 1/6 & $1-p_{3}^{*}$ &1/6 & $1-p_{n-1}^{*}$  \\
\hline
\end{tabular}}
\caption{Dynamics of the decrease of the Shapley value.}
\label{dec} 
\vspace{-5mm}
\end{figure}

\textbf{P versus Q:} Since they use the same variable, $\Delta_P= \Delta_Q$ throughout the first segment. 
At the (common) end of the third segment, a simple computation yields $
\Delta_P - \Delta_Q = 0,$ and since $P,Q$ use identical fourth segments, $\Delta_P = \Delta_Q$ throughout their  fourth segment. 

As for the second/third segments, if $p_{n}^{*}<1$ and $p_{n-1}^{*}-p_{n}^{*}>\frac{1}{2}$ then throughout the common portion of the second segment $\Delta_P<\Delta_Q$. Afterwards the difference will start shrinking, and will become positive after a certain value $\lambda_{P,Q}$ 
where $\Delta_{P}=\Delta_{Q}$. Note that at the end of the second segment of $Q$,  
$\Delta_{P}-\Delta_{Q}=\frac{1-p_{n-1}^{*}}{6}\geq 0$, so 
$\lambda_{P,Q}$ is in the second segment of $P$ and the third of $Q$. 

\noindent To determine $\lambda_{P,Q}$ write $\lambda_{P,Q}=1 -p_{2}^{*}+1-p_{n-1}^{*}+\mu_{P,Q}$. We have: 
$\frac{2-p_{n-1}^{*}}{3}(1-p_{n-1}^{*} + \mu_{P,Q}) = (\frac{1}{2}-\frac{p_{n}^{*}}{3})(1-p_{n-1}^{*})+\frac{\mu_{P,Q}}{3}$, $
\mbox{or }\mu_{P,Q}= p_{n-1}^{*} - p_{n}^{*} - \frac{1}{2} 
\mbox{ hence } \lambda_{P,Q}=\frac{3}{2} -p_{2}^{*} - p_{n}^{*}.$

The conclusion is that $\Delta_P\geq \Delta_Q$ for all budgets if $p_{n-1}^{*}-p_{n}^{*}\leq \frac{1}{2}$. 
Otherwise $\Delta_P \geq \Delta_Q$, except for $B\in I_{P,Q}:= (1-p_{2}^{*},\frac{3}{2} -p_{2}^{*} - p_{n}^{*})$. Similar conclusions hold for comparing S versus R. 

\textbf{P versus S:} At the (common) end of their second segment 
$\Delta_P - \Delta_S = (1-p_{n}^{*})(\frac{p_{2}^{*}-1}{3}) + (1-p_{2}^{*})(\frac{1-p_{n}^{*}}{3})=0$. 
So $\Delta_P =  \Delta_S$, and this prevails throughout the third and fourth segments. 

As for the first and second segment, $\Delta_P - \Delta_S \leq 0$ if $p_{3}^{*}+p_{n}^{*}\geq p_{2}^{*}+p_{n-1}^{*}$, $\Delta_P - \Delta_S \geq 0$ if $p_{3}^{*}+p_{n}^{*}\leq p_{2}^{*}+p_{n-1}^{*}$.  Hence $\Delta_P\leq \Delta_S$ for all budgets if $p_{3}^{*}+p_{n}^{*}\geq p_{2}^{*}+p_{n-1}^{*}$. Otherwise $\Delta_P \geq \Delta_S$. 

Summing up: 

-  If $p_{n-1}^{*}-p_{n}^{*}<1/2$, $p_{3}^{*}-p_{2}^{*}<1/2$, $p_{3}^{*}+p_{n}^{*}\leq p_{2}^{*}+p_{n-1}^{*}$, then $\Delta_{P}\geq \Delta_{Q}$, $\Delta_{S}\geq \Delta_{R},\Delta_{P}\geq \Delta_{S}$ for all budgets, so $P$ is optimal. If the last condition is reversed then $S$ is optimal. 

-  If $p_{n-1}^{*}-p_{n}^{*}>1/2$, $p_{3}^{*}-p_{2}^{*}>1/2$ then $\Delta_{P}\leq \Delta_{Q}$ on $I_{P,Q}$, $\Delta_{S}\leq \Delta_{R}$ on $I_{S,R}$. So the best of $Q,R$ is an optimum on $I_{P,Q}\cap I_{S,R}.$ Since $Q,R$ are piecewise linear functions, one of them is better than the other one on an open interval.

\subsection{Proof sketch of Theorems~\ref{kn2} and~\ref{cn2}}

The two proof are very similar, so we only present the one of Theorem~\ref{kn2}.  Particularizing formula~\ref{sh:rel:1} to the case of induced subgraph games, we infer that the Shapley value of player $x$ has the formula $Sh[\overline{v_{FC}}](x)=p_{x}\cdot \sum\limits_{l \in CA(x)} C(x,l) \cdot \frac{2-p_{l}}{2}\mbox{  (*)}
.$


We claim that minimizing $Sh[\overline{v_{FC}}](x)$ is equivalent to solving the following fractional knapsack problem: 
\begin{equation} 
 \left\{\begin{array}{l} 
  max[\sum\limits_{l\in CA(x)} C(x,l)(1-p^{*}_{l})\cdot y_{l}]\\
  \sum\limits_{l\in CA(x)} R_{l}(1-p^{*}_{l})\cdot y_{l} = \sum\limits_{l\in CA(x)} R_{l}\cdot (1-p^{*}_{l}) - B. \\
   0\leq y_{l}\leq 1, \forall l\in CA(x)\\
  \end{array}
\right.
\label{fr-knap}
\end{equation} 
Indeed, by formula~(*) it is only efficient to increase the reliability probabilities of $x$'s authors from $p^{*}_{l}$ to some $p_{l}\in [p^{*}_{l},1]$. If we introduce variables $y_{l}\in [0,1]$ by equation $1-y_{l}=\frac{p_{l}-p^{*}_{l}}{1-p^{*}_{l}}$, (or, equivalently, $ y_{l}=\frac{1-p_{l}}{1-p^{*}_{l}}$), the cost of such move is $R_{l}\cdot (p^{*}_{l}-p_{l})=R_{l}\cdot {(1-y_{l})(1-p^{*}_{l})}$. The total costs must add up to $B$, so $\sum\limits_{l\in CA(x)} R_{l}\cdot {(1-y_{l})(1-p^{*}_{l})} = B$, which is equivalent to system~(\ref{fr-knap}). The minimization of the Shapley value is easily seen to correspond to the maximization of the objective function of~(\ref{fr-knap}). 

Now it is well-known that the greedy algorithm that considers variables $y_{l}$ in decreasing order of their cost/benefit ratio finds an optimal solution to problem~(\ref{fr-knap}). Reinterpreting this result in our language we get the algorithm described in Theorem~\ref{kn2}. 

\section{Related work\protect\footnote{F\lowercase{or reasons of space this section is only sketched.}}}

First of all, \textit{network interdiction} (see e.g. \cite{doi:10.1002/9780470400531.eorms0089,smith2013modern}) is a well-established theme in combinatorial optimization.  Our removal model can be seen as a special case of node interdiction.

Results on the {\it reliability extension} of a cooperative game \cite{meir2012congestion, reliability-games,bachrach2012agent,bachrach2013reliability,bachrach2014cooperative} are naturally related. So is the rich literature on \textit{manipulation}, both in non-cooperative and coalitional settings  \cite{aziz2011false,faliszewski2011multimode,zuckerman2012manipulating,lev2013mergers,Waniek2018, Waniek2017,vallee2014study} and \textit{bribery} \cite{faliszewski2006complexity} in voting. Our framework covers both scenarios, that in which an external perpetrator bribes agents to change their reliabilities, and that in which this is done by a coalition of agents. 

A lot of work has been devoted recently to measuring and characterizing \textit{synergies between players} in multi-agent settings \cite{procaccia2014structure,liemhetcharat2012modeling,liemhetcharat2014weighted}. Synergies between players in cooperative games  are obviously relevant to the theme of this paper: synergic agents' participation to coalitions increases the Shapley value of the given agent. The nature of some of our results (Theorems~\ref{kn},~\ref{kn2} and~\ref{cn2}), that target nodes in a fixed order, provide a concrete way for ranking synergies between these nodes and the attacked one. 

\section{Conclusions and open issues}

Our results have uncovered a rich typology of optimal attacks on players' power indices: Sometimes no attack is beneficial.  Sometimes, the optimal attack is intractable, even when computing the power indices is feasible. For fractional attacks, in many cases (but not always) greedy-type approaches provide an optimal strategy. 

\noindent An open question raised by our work is the complexity of fractional attacks in general full-obligation credit attribution games. Motivated by Theorem~\ref{full-obligation} we believe that even this version is intractable.  On the other hand we would like to see our framework applied to more settings. They include  bicooperative games \cite{bilbao2000bicooperative}, generalized MC-nets \cite{elkind2009tractable}, etc. Of special interest are cases when computing the Shapley value is easy, e.g. voting games with super-increasing weights \cite{bachrach2016analyzing}, flow games on series-parallel networks \cite{elkind2009tractable}, or games with bounded dependency degree \cite{DBLP:conf/aaai/IgarashiIE18}. 

As for relative attacks, we propose studying a more realistic \textit{bicriteria optimization} version of the problem \cite{ravi1993many}: decrease as much as possible the Shapley value of node $x$ while not affecting the Shapley value of node $y$ by more than a certain amount $D$. 

Finally, the related problem of \textit{increasing} the power index of a given node subject to budget constraints is also worth investigating. 


\section*{Acknowledgements}

This work was supported by a grant of Ministry of Research and Innovation, CNCS - UEFISCDI, project number
PN-III-P4-ID-PCE-2016-0842, within PNCDI III.

\bibliographystyle{unsrt} 

\bibliography{Attack_camera-ready_March_3}  

\begin{thebibliography}{10}

\bibitem{reliability-games}
Yoram Bachrach, Reshef Meir, Michal Feldman, and Moshe Tennenholtz.
\newblock Solving cooperative reliability games.
\newblock In {\em Proceedings of the Twenty-Seventh Conference on Uncertainty
  in Artificial Intelligence}, UAI'11, pages 27--34. AUAI Press, 2011.

\bibitem{chalkiadakis2007coalition}
Georgios Chalkiadakis, Evangelos Markakis, and Craig Boutilier.
\newblock Coalition formation under uncertainty: Bargaining equilibria and the
  bayesian core stability concept.
\newblock In {\em Proceedings of the 2007 international joint conference on
  Autonomous agents and multiagent systems}, pages 412--419, 2007.

\bibitem{elkind2012manipulation}
Edith Elkind and G{\'a}bor Erd{\'e}lyi.
\newblock Manipulation under voting rule uncertainty.
\newblock In {\em Proceedings of the 11th International Conference on
  Autonomous Agents and Multiagent Systems-Volume 2}, pages 627--634, 2012.

\bibitem{yokoo2005coalitional}
Makoto Yokoo, Vincent Conitzer, Tuomas Sandholm, Naoki Ohta, and Atsushi
  Iwasaki.
\newblock Coalitional games in open anonymous environments.
\newblock In {\em Proceedings of AAAI}, volume~5, pages 509--514, 2005.

\bibitem{ketchpel1994forming}
Steven Ketchpel.
\newblock Forming coalitions in the face of uncertain rewards.
\newblock In {\em Proceedings of the AAAI}, volume~94, pages 414--419, 1994.

\bibitem{procaccia2014structure}
Ariel~D. Procaccia, Nisarg Shah, and Max~Lee Tucker.
\newblock On the structure of synergies in cooperative games.
\newblock In {\em Proceedings of the AAAI}, pages 763--769, 2014.

\bibitem{barriers-manipulation-voting}
Vincent Conitzer and Toby Walsh.
\newblock {\em Handbook of Computational Social Choice}, chapter Barriers to
  Manipulation in Voting.
\newblock Cambridge University Press, 2016.

\bibitem{control-bribery-voting}
Piotr Faliszewski and Joerg Rothe.
\newblock {\em Handbook of Computational Social Choice}, chapter Control and
  Bribery in Voting.
\newblock Cambridge University Press, 2016.

\bibitem{aziz2011false}
Haris Aziz, Yoram Bachrach, Edith Elkind, and Mike Paterson.
\newblock False-name manipulations in weighted voting games.
\newblock {\em Journal of Artificial Intelligence Research}, 40:57--93, 2011.

\bibitem{elkind2011cloning}
Edith Elkind, Piotr Faliszewski, and Arkadii Slinko.
\newblock Cloning in elections: Finding the possible winners.
\newblock {\em Journal of Artificial Intelligence Research}, 42:529--573, 2011.

\bibitem{zuckerman2012manipulating}
Michael Zuckerman, Piotr Faliszewski, Yoram Bachrach, and Edith Elkind.
\newblock Manipulating the quota in weighted voting games.
\newblock {\em Artificial Intelligence}, 180:1--19, 2012.

\bibitem{lev2013mergers}
Omer Lev, Maria Polukarov, Yoram Bachrach, and Jeffrey~S. Rosenschein.
\newblock Mergers and collusion in all-pay auctions and crowdsourcing contests.
\newblock In {\em Proceedings of the 2013 international conference on
  Autonomous agents and multi-agent systems}, pages 675--682, 2013.

\bibitem{vallee2014study}
Thibaut Vall{\'e}e, Gr{\'e}gory Bonnet, Bruno Zanuttini, and Fran{\c{c}}ois
  Bourdon.
\newblock A study of sybil manipulations in hedonic games.
\newblock In {\em Proceedings of the 2014 international conference on
  Autonomous agents and multi-agent systems}, pages 21--28, 2014.

\bibitem{faliszewski2011multimode}
Piotr Faliszewski, Edith Hemaspaandra, and Lane~A. Hemaspaandra.
\newblock Multimode control attacks on elections.
\newblock {\em Journal of Artificial Intelligence Research}, 40:305--351, 2011.

\bibitem{felsenthal1997ternary}
Dan~S. Felsenthal and Mosh{\'e} Machover.
\newblock Ternary voting games.
\newblock {\em International journal of game theory}, 26(3):335--351, 1997.

\bibitem{bilbao2000bicooperative}
J.M. Bilbao, J.R. Fernandez, A.~Losada~Jim{\'e}nez, and E~Lebr{\'o}n.
\newblock Bicooperative games.
\newblock {\em Cooperative games on combinatorial structures.}, pages 131--295,
  2000.

\bibitem{conitzer2007elections}
Vincent Conitzer, Tuomas Sandholm, and J{\'e}r{\^o}me Lang.
\newblock When are elections with few candidates hard to manipulate?
\newblock {\em Journal of the A.C.M.}, 54(3):14, 2007.

\bibitem{bachrach2014cooperative}
Yoram Bachrach, Rahul Savani, and Nisarg Shah.
\newblock Cooperative max games and agent failures.
\newblock In {\em Proceedings of the 2014 international conference on
  Autonomous agents and multi-agent systems}, pages 29--36, 2014.

\bibitem{deng1994complexity}
Xiaotie Deng and Christos~H. Papadimitriou.
\newblock On the complexity of cooperative solution concepts.
\newblock {\em Mathematics of Operations Research}, 19(2):257--266, 1994.

\bibitem{suri2008determining}
Rama~N. Suri and Y.~Narahari.
\newblock Determining the top-k nodes in social networks using the {S}hapley
  value.
\newblock In {\em Proceedings of the 2008 international conference on
  Autonomous agents and multi-agent systems}, pages 1509--1512, 2008.

\bibitem{aadithya2010game}
Karthik~V. Aadithya and Balaraman Ravindran.
\newblock Game theoretic network centrality: exact formulas and efficient
  algorithms.
\newblock In {\em Proceedings of the 2010 International Conference on
  Autonomous Agents and Multiagent Systems: volume 1-Volume 1}, pages
  1459--1460, 2010.

\bibitem{michalak2013efficient}
Tomasz~P. Michalak, Karthik~V. Aadithya, Piotr~L Szczepanski, Balaraman
  Ravindran, and Nicholas~R. Jennings.
\newblock Efficient computation of the shapley value for game-theoretic network
  centrality.
\newblock {\em Journal of Artificial Intelligence Research}, pages 607--650,
  2013.

\bibitem{tarkowski2017game}
Mateusz~K. Tarkowski, Tomasz~P. Michalak, Talal Rahwan, and Michael Wooldridge.
\newblock Game-theoretic network centrality: A review.
\newblock {\em arXiv preprint arXiv:1801.00218}, 2017.

\bibitem{papapetrou2011shapley}
Panagiotis Papapetrou, Aristides Gionis, and Heikki Mannila.
\newblock A shapley value approach for influence attribution.
\newblock In {\em Machine Learning and Knowledge Discovery in Databases}, pages
  549--564. Springer, 2011.

\bibitem{karpov2014equal}
Alexander Karpov.
\newblock Equal weights coauthorship sharing and the shapley value are
  equivalent.
\newblock {\em Journal of Informetrics}, 8(1):71--76, 2014.

\bibitem{elkind2009tractable}
Edith Elkind, Leslie~Ann Goldberg, Paul~W. Goldberg, and Michael Wooldridge.
\newblock A tractable and expressive class of marginal contribution nets and
  its applications.
\newblock {\em Mathematical Logic Quarterly}, 55(4):362--376, 2009.

\bibitem{ieong2005marginal}
Samuel Ieong and Yoav Shoham.
\newblock Marginal contribution nets: a compact representation scheme for
  coalitional games.
\newblock In {\em Proceedings of the 6th ACM conference on Electronic
  commerce}, pages 193--202. ACM, 2005.

\bibitem{bachrach2008coalitional}
Yoram Bachrach and Jeffrey~S Rosenschein.
\newblock Coalitional skill games.
\newblock In {\em Proceedings of the 2007 international joint conference on
  Autonomous agents and multiagent systems-Volume 2}, pages 1023--1030, 2008.

\bibitem{bachrach2013computing}
Yoram Bachrach, David Parkes, and Jeffrey~S. Rosenschein.
\newblock Computing cooperative solution concepts in coalitional skill games.
\newblock {\em Artificial Intelligence}, 204:1--21, 2013.

\bibitem{matsui2001np}
Yasuko Matsui and Tomomi Matsui.
\newblock Np-completeness for calculating power indices of weighted majority
  games.
\newblock {\em Theoretical Computer Science}, 263(1-2):305--310, 2001.

\bibitem{aziz2009power}
Haris Aziz, Oded Lachish, Mike Paterson, and Rahul Savani.
\newblock Power indices in spanning connectivity games.
\newblock In {\em International Conference on Algorithmic Applications in
  Management}, pages 55--67. Springer, 2009.

\bibitem{michalak2015defeating}
Tomasz~P. Michalak, Talal Rahwan, Oskar Skibski, and Michael Wooldridge.
\newblock Defeating terrorist networks with game theory.
\newblock {\em IEEE Intelligent Systems}, 30(1):53--61, 2015.

\bibitem{lindelauf2013cooperative}
R.~Lindelauf, H.~Hamers, and B.~Husslage.
\newblock Cooperative game theoretic centrality analysis of terrorist networks:
  The cases of jemaah islamiyah and al qaeda.
\newblock {\em European Journal of Operational Research}, 229(1):230--238,
  2013.

\bibitem{Morgan2018}
Allison~C. Morgan, Dimitrios~J. Economou, Samuel~F. Way, and Aaron Clauset.
\newblock Prestige drives epistemic inequality in the diffusion of scientific
  ideas.
\newblock {\em EPJ Data Science}, 7(1):40, Oct 2018.

\bibitem{hirsch2018mathbf}
Jorge~E. Hirsch.
\newblock $h_{\alpha}$ : An index to quantify an individual's scientific
  leadership.
\newblock {\em arXiv preprint arXiv:1810.01605}, 2018.

\bibitem{chalkiadakis2011computational}
George Chalkiadakis, Edith Elkind, and Michael Wooldridge.
\newblock Computational aspects of cooperative game theory.
\newblock {\em Synthesis Lectures on Artificial Intelligence and Machine
  Learning}, 2011.

\bibitem{doi:10.1002/9780470400531.eorms0089}
J.~Cole Smith.
\newblock Basic interdiction models.
\newblock In {\em Wiley Encyclopedia of Operations Research and Management
  Science}. American Cancer Society, 1997.

\bibitem{smith2013modern}
J.~Cole Smith, Mike Prince, and Joseph Geunes.
\newblock Modern network interdiction problems and algorithms.
\newblock In {\em Handbook of combinatorial optimization}, pages 1949--1987.
  Springer, 2013.

\bibitem{meir2012congestion}
Reshef Meir, Moshe Tennenholtz, Yoram Bachrach, and Peter Key.
\newblock Congestion games with agent failures.
\newblock In {\em Proceedings of the AAAI}, volume~12, pages 1401--1407, 2012.

\bibitem{bachrach2012agent}
Yoram Bachrach, Ian Kash, and Nisarg Shah.
\newblock Agent failures in totally balanced games and convex games.
\newblock In {\em International Workshop on Internet and Network Economics},
  pages 15--29. Springer, 2012.

\bibitem{bachrach2013reliability}
Yoram Bachrach and Nisarg Shah.
\newblock Reliability weighted voting games.
\newblock In {\em International Symposium on Algorithmic Game Theory}, pages
  38--49. Springer, 2013.

\bibitem{Waniek2018}
Marcin Waniek, Tomasz~P. Michalak, Michael~J. Wooldridge, and Talal Rahwan.
\newblock Hiding individuals and communities in a social network.
\newblock {\em Nature Human Behaviour}, 2(2):139--147, 2018.

\bibitem{Waniek2017}
Marcin Waniek, Tomasz~P. Michalak, Talal Rahwan, and Michael Wooldridge.
\newblock On the construction of covert networks.
\newblock In {\em Proceedings of the 16th Conference on Autonomous Agents and
  MultiAgent Systems}, AAMAS '17, pages 1341--1349, Richland, SC, 2017.
  International Foundation for Autonomous Agents and Multiagent Systems.

\bibitem{faliszewski2006complexity}
Piotr Faliszewski, Edith Hemaspaandra, and Lane~A. Hemaspaandra.
\newblock The complexity of bribery in elections.
\newblock In {\em Proceedins of the AAAI}, volume~6, pages 641--646, 2006.

\bibitem{liemhetcharat2012modeling}
Somchaya Liemhetcharat and Manuela Veloso.
\newblock Modeling and learning synergy for team formation with heterogeneous
  agents.
\newblock In {\em Proceedings of the 2012 International Conference on
  Autonomous Agents and Multiagent Systems-Volume 1}, pages 365--374, 2012.

\bibitem{liemhetcharat2014weighted}
Somchaya Liemhetcharat and Manuela Veloso.
\newblock Weighted synergy graphs for effective team formation with
  heterogeneous ad hoc agents.
\newblock {\em Artificial Intelligence}, 208:41--65, 2014.

\bibitem{bachrach2016analyzing}
Yoram Bachrach, Yuval Filmus, Joel Oren, and Yair Zick.
\newblock Analyzing power in weighted voting games with super-increasing
  weights.
\newblock In {\em International Symposium on Algorithmic Game Theory}, pages
  169--181. Springer, 2016.

\bibitem{DBLP:conf/aaai/IgarashiIE18}
Ayumi Igarashi, Rani Izsak, and Edith Elkind.
\newblock Cooperative games with bounded dependency degree.
\newblock In {\em Proceedings of the Thirty-Second {AAAI} Conference on
  Artificial Intelligence, (AAAI-18), the 30th innovative Applications of
  Artificial Intelligence (IAAI-18), and the 8th {AAAI} Symposium on
  Educational Advances in Artificial Intelligence (EAAI-18), New Orleans,
  Louisiana, USA, February 2-7, 2018}, pages 1063--1070, 2018.

\bibitem{ravi1993many}
R.~Ravi, Madhav~V. Marathe, S.S. Ravi, Daniel~J. Rosenkrantz, and Harry~B.
  Hunt~III.
\newblock Many birds with one stone: Multi-objective approximation algorithms.
\newblock In {\em Proceedings of the twenty-fifth annual ACM Symposium on
  Theory of Computing}, pages 438--447, 1993.

\end{thebibliography}

\newpage
\section*{Appendix}

\section{Proof of Corollary~\ref{centrality}}

We will actually give the following formulas for the partial derivatives of the Shapley value: 

\begin{align*} 
\frac{\partial Sh[\overline{v_{NC_1}}](x)}{\partial p_{j}} & = -p_{x}\mathlarger{\sum}\limits_{\stackrel{y\in \widehat{N(
x)}\cap \widehat{N(j)}}{S\subseteq \widehat{N(y)}\setminus \{x,j\}}} \frac{1}{(|S|+1)(|S| + 2)} \Pi_{S, \widehat{N(y)}\setminus \{x,j\}}< 0. 
\end{align*}
\begin{align*} 
& \frac{\partial Sh[\overline{v_{NC_2}}](x)}{\partial p_{j}}  = -p_{x}\mathlarger{\sum}\limits_{S\subseteq N(x)\setminus \{j\}} \frac{k}{|(|S|+1)(|S| + 2)} \Pi_{S, N(x)\setminus \{j\}} - p_{x}\cdot \\
& \mathlarger{\sum}\limits_{y\in N(
x)\cap \widehat{N(j)}}\mathlarger{\sum}\limits_{\stackrel{S\subseteq \widehat{N(y)}\setminus \{x,j\}}{|S|\geq k-1}} [\frac{|S|+2-k}{(|S|+1)(|S| + 2)}-\frac{|S|+1-k}{|S|(|S|+1)}]\Pi_{S, \widehat{N(y)}\setminus \{x,j\}}  \\ 
& = -p_{x}\mathlarger{\sum}\limits_{S\subseteq N(x)\setminus \{j\}} \frac{k}{|(|S|+1)(|S| + 2)} \Pi_{S, N(x)\setminus \{j\}}-p_{x}\cdot \\
& \mathlarger{\sum}\limits_{y\in N(
x)\cap \widehat{N(j)}}\mathlarger{\sum}\limits_{\stackrel{S\subseteq \widehat{N(y)}\setminus \{x,j\}}{|S|\geq k-1}}[ \frac{k}{(|S|+1)(|S|+2)}+\frac{k-1}{|S|(|S|+1)}]\Pi_{S, \widehat{N(y)}\setminus \{x,j\}}
\end{align*}
which is less than zero. 
\begin{align*} 
 \frac{\partial Sh[\overline{v_{NC_3}}](x)}{\partial p_{j}} & = -p_{x}\mathlarger{\sum}\limits_{\stackrel{y\in \widehat{N(x)}}{S\subseteq \widehat{N(y)}\setminus \{x,j\}}} \frac{1}{|(|S|+1)(|S| + 2)} \Pi_{S, \widehat{N_{cut}(y)}\setminus \{x,j\}} 
\end{align*}
is less than zero 
(if $j\not \in B(1,2)$ all derivatives are zero). 

These formulas follow easily from Theorem~\ref{sh-gamma1}, by applying the linearity of partial derivatives and Claim~\ref{aux2}. 

\section{Proof of Lemma~\ref{sign}}

First, for $G=K_n$, $j\neq l$ distinct from $x=1$ and $W=V\setminus \{1,j,l\}$, the formula from the proof of Corollary~\ref{centrality} particularizes to: 
\begin{align*} 
& \frac{\partial Sh[\overline{v_{NC_1}}](1)}{\partial p_{j}}  = -p_{1}N \mathlarger{\sum}\limits_{S\subseteq V\setminus \{1,j\}} \frac{1}{|(|S|+1)(|S| + 2)} \Pi_{S, V\setminus \{1,j\}} = - p_{1}N \cdot \\
&   [ \mathlarger{\sum}\limits_{\stackrel{S\subseteq V\setminus \{1,j\}} {l\in S}}\frac{ \Pi_{S, V\setminus \{1,j\}} }{|(|S|+1)(|S| + 2)} + \mathlarger{\sum}\limits_{\stackrel{S\subseteq V\setminus \{1,j\}} {l\not \in S}} \frac{ \Pi_{S, V\setminus \{1,j\}} }{|(|S|+1)(|S| + 2)}]=\\
& =  - p_{1}N [ \mathlarger{\sum}\limits_{S\subseteq W}\frac{p_{l}}{|(|S|+2)(|S| + 3)} \Pi_{S, W}  + \mathlarger{\sum}\limits_{S\subseteq W} \frac{1-p_{l}}{|(|S|+1)(|S| + 2)} \Pi_{S, W}]= \\
& = - p_{1}N \mathlarger{\sum}\limits_{S\subseteq V\setminus \{x,j,l\}}\frac{p_{l}(|S|+1)+(1-p_{l})(|S|+3)}{(|S|+1)(|S|+2)(|S| + 3)} \Pi_{S, W}=  \\ 
& = - p_{1}N  \mathlarger{\sum}\limits_{S\subseteq W}\frac{(|S|+3)-2p_{l}}{(|S|+1)(|S|+2)(|S| + 3)} \Pi_{S, W}. 
\end{align*}

For $j\neq l$

\begin{align*} 
& \frac{\partial Sh[\overline{v_{NC_1}}](1)}{\partial p_{j}} - \frac{\partial Sh[\overline{v_{NC_1}}](1)}{\partial p_{l}} =  -p_{1}N  \mathlarger{\sum}\limits_{S\subseteq W}\frac{2(p_{l}-p_{j})\Pi_{S, W}}{(|S|+1)(|S|+2)(|S| + 3)}  \\
& = 2 p_{1}N (p_{j}-p_{l}) \mathlarger{\sum}\limits_{S\subseteq W} \frac{1}{(|S|+1)(|S|+2)(|S| + 3)} \Pi_{S, W}. \\
\end{align*}

The sum is composed of nonnegative terms, so the difference of partial derivatives has the same sign as the difference $p_{j}-p_{l}$. 

For $G=S_{n}$, with $1$ the center of the star, $\widehat{N(1)}\cap \widehat{N(j)} = \{1,j\}$, so the particularization of the formula reads
\begin{align*} 
& \frac{\partial Sh[\overline{v_{NC_1}}](1)}{\partial p_{j}} = -p_{1}\mathlarger{\sum}\limits_{\stackrel{y\in \{1,j\}}{S\subseteq \widehat{N(y)}\setminus \{1,j\}}} \frac{1}{|(|S|+1)(|S| + 2)} \Pi_{S, \widehat{N(y)}\setminus \{1,j\}} = \\
&  -p_{1}[\frac{1}{2}\Pi_{\emptyset,\emptyset}+ \mathlarger{\sum}\limits_{S\subseteq V\setminus \{1,j\}} \frac{1}{|(|S|+1)(|S| + 2)} \Pi_{S, V\setminus \{1,j\}}] = -p_{1}[\frac{1}{2}+ \\
& \mathlarger{\sum}\limits_{\stackrel{S\subseteq V\setminus \{1,j\}}{l\in S}} \frac{1}{|(|S|+1)(|S| + 2)} \Pi_{S, V\setminus \{1,j\}}+ \mathlarger{\sum}\limits_{\stackrel{S\subseteq V\setminus \{1,j\}}{l\not \in S}} \frac{1}{|(|S|+1)(|S| + 2)} \cdot \\
& \cdot \Pi_{S, V\setminus \{1,j\}} ] = -p_{1}[\frac{1}{2}+  \mathlarger{\sum}\limits_{S\subseteq W} \frac{p_{l}}{|(|S|+2)(|S| + 3)} \Pi_{S, W}+ \\ 
& + \mathlarger{\sum}\limits_{S\subseteq W} \frac{1-p_{l}}{|(|S|+1)(|S| + 2)} \Pi_{S, W} = -p_{1}[\frac{1}{2}+ \\ 
&   + \mathlarger{\sum}\limits_{S\subseteq W} \frac{p_{l}(|S|+1)+(1-p_{l})(|S|+3)}{(|S|+1)(|S|+2)(|S| + 3)} \Pi_{S, W}= 
-p_{1}[\frac{1}{2}+\\ 
& + \mathlarger{\sum}\limits_{S\subseteq W} \frac{(|S|+3)-2p_{l}}{(|S|+1)(|S|+2)(|S| + 3)} \Pi_{S, W}
\end{align*} 
So again, for $j\neq l$, the sign of the difference of partial derivatives is the same as the sign of the term $p_{j}-p_{l}$, as 

\begin{align*} 
& \frac{\partial Sh[\overline{v_{NC_1}}](1)}{\partial p_{j}} - \frac{\partial Sh[\overline{v_{NC_1}}](1)}{\partial p_{l}} =  -p_{1} \mathlarger{\sum}\limits_{S\subseteq W}\frac{2(p_{l}-p_{j})}{(|S|+1)(|S|+2)(|S| + 3)} \Pi_{S, W}. \\
& = 2 p_{1} (p_{j}-p_{l}) \mathlarger{\sum}\limits_{S\subseteq W} \frac{1}{(|S|+1)(|S|+2)(|S| + 3)} \Pi_{S, W}. \\
\end{align*}

\section{Proof of Lemma~\ref{equal}}

If $G=K_n$ then 
\[
Sh[\overline{v_{NC_{1}}}](1)\rvert^{p(\epsilon)}_{p} = 
p_{1}N \sum_{S\subseteq V\setminus 1} \frac{\Pi_{S,V\setminus 1}\rvert^{p(\epsilon)}_{p}}{|S|+1}
\]
We denote $W=V\setminus \{1,i,j\}$. Expanding terms corresponding to agents $i,j$ (who may or may not be live) and grouping we infer that 
\begin{align*}
& Sh[\overline{v}](1)|^{p(\epsilon)}_{p} = 
p_{1}N \sum_{T\subseteq W} \Pi_{T,W}\cdot [\frac{(p_{i}+\epsilon)(p_{j}-\epsilon)-p_{i}p_{j}}{|S|+3} +\\ 
& \frac{(p_{i}+\epsilon)(1-p_{j}+\epsilon)-p_{i}(1-p_{j})}{|S|+2}+\frac{(1-p_{i}-\epsilon)(p_{j}-\epsilon)-(1-p_{i})p_{j}}{|S|+2}\\ & +\frac{(1-p_{i}-\epsilon)(1-p_{j}+ \epsilon)-(1-p_{i})(1-p_{j})}{|S|+1}= p_{x}N \sum_{T\subseteq W} \Pi_{T,W}\cdot \\ 
& [\frac{\epsilon(p_{j}-p_{i})-\epsilon^{2}}{|S|+3}+\frac{\epsilon(1-p_{j}+p_{i}))+\epsilon^{2}}{|S|+2}+\frac{-\epsilon(1+p_{j}-p_{i})-\epsilon^{2}}{|S|+2}+\\
& + \frac{\epsilon(p_{j}-p_{i})-\epsilon^{2}}{|S|+1}]= 
 p_{1}N \sum_{T\subseteq W} \Pi_{T,W}\cdot [\frac{-\epsilon^{2}}{|S|+3}+\frac{-\epsilon^2}{|S|+1}]=  \\
& = - p_{1}N\epsilon^2 \sum_{T\subseteq W} \Pi_{T,W}\cdot [\frac{1}{|S|+1}-\frac{1}{|S|+3}]<0 \mbox{ if }\epsilon \neq 0. 
\end{align*}
Similarly, if $G=S_n$ then 
\begin{align*}
& Sh[\overline{v}](1)|^{p(\epsilon)}_{p}= 
p_{1}[\sum_{k=2}^{n} \sum_{S\subseteq \{k\}} \frac{ \Pi_{S,\{k\}}|^{p(\epsilon)}_{p}}{|S|+1}+ \sum_{S\subseteq V\setminus \{1\}} \frac{\Pi_{S,V\setminus 1}|^{p(\epsilon)}_{p}}{|S|+1}]\\ 
& = p_{1}[\sum_{k=2}^{n} (1-\frac{p_{k}}{2})|^{p(\epsilon)}_{p} + \sum_{S\subseteq V\setminus \{1\}} \frac{\Pi_{S,V\setminus 1}|^{p(\epsilon)}_{p}}{|S|+1}] \\ & = p_{1}\sum_{S\subseteq V\setminus \{1\}} \frac{\Pi_{S,V\setminus 1}|^{p(\epsilon)}_{p}}{|S|+1}
\end{align*}
In the last calculation we took into account the fact that the first sum has two nonzero difference terms, one with value $\epsilon/2$ and one with value $-\epsilon/2$ (corresponding to $k=i,j$, respectively) 
which cancel each other. What remains is (up to a multiplicative factor of N) identical to the difference in the case $G=K_{n}$, and the rest of the proof is identical. 

\section{Proof of Theorem~\ref{kn}, the case $G=S_{n}$, with vertex 1 not being central}

For $G=S_{n}$, with $2$ the center of the star, $\widehat{N(1)}\cap \widehat{N(j)} = \{2\}$, if $j\neq 2$, $\widehat{N(1)}\cap \widehat{N(2)} = \{1,2\}$.  Therefore, for $j\neq 1,2$: 
\begin{align*} 
& \frac{\partial Sh[\overline{v_{NC_1}}](1)}{\partial p_{j}} = -p_{1}\mathlarger{\sum}\limits_{S\subseteq \widehat{N(2)}\setminus \{1,j\}} \frac{1}{|(|S|+1)(|S| + 2)} \Pi_{S, \widehat{N(y)}\setminus \{1,j\}} = \\
&  -p_{1} \mathlarger{\sum}\limits_{S\subseteq V\setminus \{1,j\}} \frac{1}{|(|S|+1)(|S| + 2)} \Pi_{S, V\setminus \{1,j\}} 
\end{align*}
the same, within a factor of $N$, as the partial derivative for the case $G=K_{n}$, while 
\begin{align*} 
& \frac{\partial Sh[\overline{v_{NC_1}}](1)}{\partial p_{2}} = -p_{1}\mathlarger{\sum}\limits_{\stackrel{y\in \{1,2\}}{S\subseteq \widehat{N(y)}\setminus \{1,2\}}} \frac{1}{|(|S|+1)(|S| + 2)} \Pi_{S, \widehat{N(y)}\setminus \{1,2\}} = \\
&  -p_{1}[\frac{1}{2}+ \mathlarger{\sum}\limits_{S\subseteq V\setminus \{1,2\}} \frac{1}{|(|S|+1)(|S| + 2)} \Pi_{S, V\setminus \{1,2\}}] 
\end{align*}
SInce $
\mathlarger{\sum}\limits_{S\subseteq V\setminus \{1,j\}} \Pi_{S, V\setminus \{1,j\}} = 1$, 
$\frac{1}{|(|S|+1)(|S| + 2)}\leq \frac{1}{2}$, strictly less if $S\neq \emptyset$, and $\Pi_{V\setminus \{1,2\},V\setminus \{1,2\}}>0$, we infer that 
\[
 \frac{\partial Sh[\overline{v_{NC_1}}](1)}{\partial p_{j}} > -\frac{1}{2}, \forall j\neq 1,2
\]
while 
\[
 \frac{\partial Sh[\overline{v_{NC_1}}](1)}{\partial p_{2}} \leq -\frac{1}{2}
\]
Hence, by an application of Lemma~\ref{aux}, in the optimal solution $p_{j}>p_{j}^{*}\Rightarrow p_{2}=1$, in other words 
reliability of node 2 must be increased (to one, if the budget allows it), before any other reliability is increased. 

Invoking the result for $G=K_{n}$ (Lemma~\ref{sign}), we infer that 
\[
sign\big( \frac{\partial Sh[\overline{v_{NC_1}}](1)}{\partial p_{j}}|_{p} -  \frac{\partial Sh[\overline{v_{NC_1}}](1)}{\partial p_{j}}|_{p} \big)=sign(p_{j}-p_{l})
\]

Given this result, the rest of the proof is the same as in the cases $G=K_n$, $G=S_{n}$ with 1 in the center. 

\section{Proof of Theorem~\ref{full-obligation}}

\begin{definition}
The Budgeted Max-Coverage problem is specified as follows: 
\begin{itemize} 
\item[-] GIVEN: Universe $U=\{u_{1},u_{2},\ldots, u_{n}\}$, each $u_{i}$ coming with a positive integer \textit{weight} $w_{i}$, subsets $P_{1},P_{2},\ldots, P_{m}$ of $U$, each set $P_{i}$ coming with a positive integer \textit{cost} $c_{i}$, and integers $k,L\geq 1$. 
\item[-] TO DECIDE: Can we choose some sets such that 
\begin{itemize} 
\item[-] the total cost of the chosen sets is at most $k$. 
\item[-] the sum of weights of elements covered by the chosen sets is at least $L$ ? 
\end{itemize} 
\end{itemize} 
\end{definition} 

Budgeted Max-Coverage is not only NP-complete, but even hard to approximate \cite{khuller1999budgeted}. 

We reduce an instance of Budgeted Max Coverage to the problem of minimizing the Shapley value of player 1 under the full obligation model as follows: 
\begin{itemize} 
\item[-] all players will have baseline reliabilities equal to one. 
\item[-] elements will correspond to "papers" coauthored by 1 and some other players.  
\item[-] sets will correspond to coauthors of $1$, representing, for each coauthor, the paper he coauthored together with 1. We assume that all papers of 1 are written in collaboration. 
\item[-] The cost of bringing (in a removal attack) the reliability probability of a given player to zero is the cost of the associated set in the instance of Weighted Max Coverage. 
\item[-] the score of a paper is equal to the number of authors times the weight of the associated element. This way, author 1 gets a contribution from a paper in its Shapley value equal to the weight of the associated elements. 
\end{itemize}

Targeting a set of players of total cost at most $k$ will reduce the Shapley value of player 1 by precisely the total weight of elements covered by these players. Thus one can reduce the Shapley value of player 1 by at least $L$ iff the answer to the corresponding Budgeted Max Coverage problem is positive.

\end{document}